\SetMathAlphabet{\mathsf}{bold}{\encodingdefault}{\sfdefault}{b}{\updefault}
\SetMathAlphabet{\mathtt}{bold}{\encodingdefault}{\ttdefault}{b}{\updefault}
\SetMathAlphabet{\mathsf}{normal}{\encodingdefault}{\sfdefault}{\mddefault}{\updefault}
\SetMathAlphabet{\mathtt}{normal}{\encodingdefault}{\ttdefault}{\mddefault}{\updefault}
\newtheorem{theorem}{Theorem}[section]
\newtheorem{lemma}[theorem]{Lemma}
\newtheorem{corollary}[theorem]{Corollary}
\numberwithin{figure}{section}
\def\Wind{\operatorname{\mathit{wind}}}
\def\Diam{\operatorname{\mathit{diam}}}
\begin{document}

\begin{titlepage}

\urldef\paperURL\url{http://jeffe.cs.illinois.edu/pubs/defect.html}

\title{Electrical Reduction, Homotopy Moves, and Defect%
\thanks{Work on this paper was partially supported by NSF grant CCF-1408763. See \paperURL\ for the most recent version of this paper.}
}
\author{
	Hsien-Chih Chang
	\qquad\quad
	Jeff Erickson 
\\[1ex]
Department of Computer Science\\
University of Illinois, Urbana-Champaign\\
\href{mailto:hchang17@illinois.edu,jeffe@illinois.edu}{\{hchang17, jeffe\}@illinois.edu}
}

\date{\today}

\maketitle

\begin{bigabstract}
We prove the first nontrivial worst-case lower bounds for two closely related problems. First, $\Omega(n^{3/2})$ degree-1 reductions, series-parallel reductions, and ΔY transformations are required in the worst case to reduce an $n$-vertex plane graph to a single vertex or edge. The lower bound is achieved by any planar graph with treewidth $\Theta(\sqrt{n})$. Second, $\Omega(n^{3/2})$ homotopy moves are required in the worst case to reduce a closed curve in the plane with $n$ self-intersection points to a simple closed curve.  For both problems, the best upper bound known is $O(n^2)$, and the only lower bound previously known was the trivial $Ω(n)$.

The first lower bound follows from the second using medial graph techniques ultimately due to Steinitz, together with more recent arguments of Noble and Welsh [\emph{J.~Graph Theory} 2000]. The lower bound on homotopy moves follows from an observation by Haiyashi \etal\ [\emph{J.~Knot Theory Ramif.} 2012] that the standard projections of certain torus knots have large \emph{defect}, a~topological invariant of generic closed curves introduced by Aicardi and Arnold.  Finally, we prove that every closed curve in the plane with $n$ crossings has defect $O(n^{3/2})$, which implies that better lower bounds for our algorithmic problems will require different techniques.
\end{bigabstract}


\setcounter{page}{0}
\thispagestyle{empty}
\end{titlepage}

\pagestyle{myheadings}
\markboth{Hsien-Chih Chang and Jeff Erickson}{Electrical Reduction, Homotopy Moves, and Defect}

\newpage

\section{Introduction}

\subsection{Electrical Transformations}

Consider the following set of local operations on plane graphs:
\begin{itemize}\itemsep0pt
\item \emph{leaf reduction}: contract the edge incident to a vertex of degree $1$
\item \emph{loop reduction}: delete the edge incident to a face of degree $1$
\item \emph{series reduction}: contract either edge incident to a vertex of degree $2$
\item \emph{parallel reduction}: delete either edge incident to a face of degree $2$
\item \emph{$\arc{Y}{Δ}$ transformation}: delete a vertex of degree $3$ and connect its neighbors with three new edges
\item \emph{$\arc{Δ}{Y}$ transformation}: delete the edges bounding a face of degree $3$ and join the vertices of that face to a new vertex
\end{itemize}
These six operations consist of three dual pairs, as shown in Figure \ref{F:elec-dual}; for example, any series reduction in a plane graph $G$ is equivalent to a parallel reduction in the dual graph $G^*$.  We refer to leaf reductions and loop reductions as \emph{degree-1} reductions, series reductions and parallel reductions as \emph{series-parallel} reductions, and $\arc{Y}{Δ}$ and $\arc{Δ}{Y}$ transformations as \emph{ΔY transformations}. Following Colin de Verdière \etal~\cite{cgv-rep-96}, we collectively refer to these operations and their inverses as \EMPH{electrical transformations}.

\begin{figure}[htb]
\centering
\includegraphics[scale=0.275]{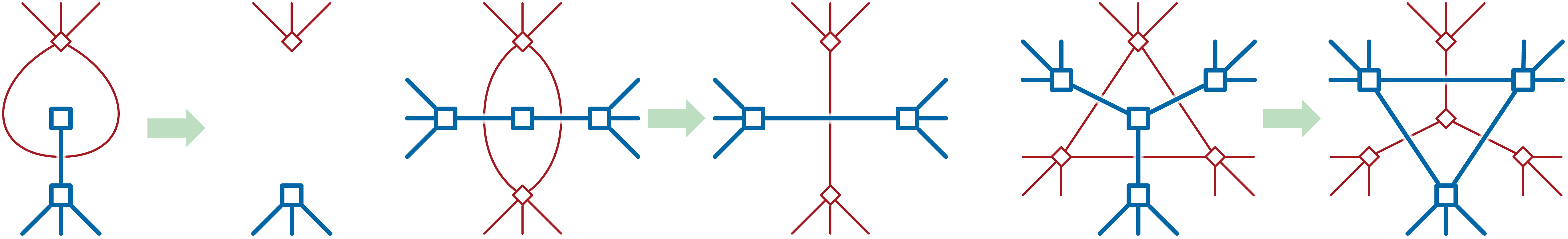}
\caption{Electrical transformations in a plane graph $G$ and its dual graph $G^*$.}
\label{F:elec-dual}
\end{figure}

Electrical transformations have been used since the end of the 19th century~\cite{k-etscn-1899,r-md-1904} to analyze resistor networks and other electrical circuits. Akers~\cite{a-wtns-60} used the same transformations to compute shortest paths and maximum flows (but see also Hobbs~\cite{h-rns-67}); Lehman~\cite{l-wtpn-63} used them to estimate network reliability.  Akers and Lehman both conjectured that any planar graph, two of whose vertices are marked as \EMPH{terminals}, can be reduced to a single edge between the terminals using a finite number of electrical transformations.  This conjecture was first proved by Epifanov~\cite{e-rpges-66} using a nonconstructive argument; simpler constructive proofs were later given by Feo~\cite{f-erpns-85}, Truemper~\cite{t-drpg-89,t-md-92}, Feo and Provan~\cite{fp-dtert-93}, and Nakahara and Takahashi~\cite{nt-aafts-96}.

For the simpler problem of reducing a planar graph without terminals to a single vertex, a constructive proof is already implicit in Steinitz's 1916 proof that every 3-connected planar graph is the 1-skeleton of a 3-dimensional convex polytope~\cite{s-pr-1916,sr-vtp-34}. Grünbaum~\cite{g-cp-67} describes Steinitz's proof in more detail; indeed, Steinitz's proof is often incorrectly attributed to Grünbaum.

These results were later extended to planar graphs with more than two terminals. Gitler~\cite{g-dtaa-91} and Gitler and Sagols~\cite{gs-tdrpg-11} proved that any three-terminal planar graph can be reduced to a graph with three vertices. Archdeacon \etal~\cite{acgp-frpwg-00} and Demasi and Mohar~\cite{dm-ftpdw-15} characterized the four-terminal planar graphs that can be reduced to just four vertices.  Gitler~\cite{g-dtaa-91,cgv-rep-96} proved that for any integer $k$, any planar graph with $k$ terminals on a common face can be reduced to a planar graph with $O(k^2)$ vertices. Gitler's results were significantly extended by Colin de Verdière \etal~\cite{c-rep-92,c-rep1-94,cgv-rep-96} and Curtis \etal~\cite{cmm-fccnb-94,cim-cpgrn-98,cm-ipen-00} to the theory of circular planar networks; see also Kenyon~\cite{k-lpggs-11}.  Similar results have also been proved for several classes of non-planar graphs~\cite{g-dtaa-91,w-drag-15,y-mfmw-06,y-fmwr-04} and matroids~\cite{t-dtm6a-92,w-drag-15}.
%
%


Algorithms for reducing planar graphs using electrical transformations have been applied to several combinatorial problems, including estimating network reliability~\cite{cfp-daptr-96,t-sdtnr-93,sa-refnu-86,gs-dtare-78,t-ndrpg-02};
multicommodity flows~\cite{f-erpns-85}; 
flow estimation from noisy measurements~\cite{zg-effn-07};
counting spanning trees, perfect matchings, and cuts~\cite{cpv-nastc-95,cs-cr-04};
evaluation of spin models in statistical mechanics~\cite{cpv-nastc-95,j-smtra-95};
kinematic analysis of robot manipulators~\cite{st-afkrm-02};
and
solving generalized Laplacian linear systems~\cite{g-cpssd-96,nt-aafts-96}.

%

In light of these numerous applications, it is natural to ask \emph{how many} electrical transformations are required in the worst case to reduce an arbitrary planar graph to a single vertex or edge.  Steinitz's proof~\cite{s-pr-1916,sr-vtp-34,g-cp-67} implies an upper bound of $O(n^2)$, which is the best bound known solely in terms of~$n$.  Feo~\cite{f-erpns-85} and Feo and Provan~\cite{fp-dtert-93} describe reduction algorithms for two-terminal planar graphs that use $O(n^2)$ moves.  In fact, Feo and Provan's algorithm requires at most $O(nD)$ moves, where~$D$ is the diameter of the vertex-face incidence graph (otherwise known as the \emph{radial graph}) of the input graph; $D = Ω(n)$ in the worst case. Feo and Provan~\cite{fp-dtert-93} suggested that “there are compelling reasons to think that $O(\abs{V}^{3/2})$ is the smallest possible order”, possibly referring to earlier empirical results of Feo~\cite[Chapter 6]{f-erpns-85}.  Gitler~\cite{g-dtaa-91} conjectured that a simple modification of Feo and Provan's algorithm requires only $O(n^{3/2})$ time.  Finally, Song~\cite{s-iifpd-01} observed that a naïve implementation of Feo and Provan's algorithm can actually require $\Omega(n^2)$ time, even for graphs that can be reduced using only $O(n)$ steps.

Even the special case of regular grids is open and interesting. Truemper~\cite{t-drpg-89,t-md-92} describes a method to reduce the $p\times p$ grid, or any minor thereof, in $O(p^3)$ steps. Nakahara and Takahashi~\cite{nt-aafts-96} prove an upper bound of $O(\min\set{pq^2, p^2q})$ for any minor of the $p\times q$ cylindrical grid. Since every $n$-vertex planar graph is a minor of an $O(n)\times O(n)$ grid~\cite{v-ucvc-81,s-mncpe-84}, both of these results imply an $O(n^3)$-time algorithm for arbitrary planar graphs; Feo and Provan~\cite{fp-dtert-93} claim without proof that Truemper's algorithm actually performs only $O(n^2)$ electrical transformations.  On the other hand, the smallest (cylindrical) grid containing every $n$-vertex planar graph as a minor has size $Ω(n) \times Ω(n)$~\cite{v-ucvc-81}.  Archdeacon \etal~\cite{acgp-frpwg-00} asked whether the upper bound for square grids can be improved:
\begin{quote}\small
It is possible that a careful implementation and analysis of the grid-embedding schemes can lead to an $O(n\sqrt{n})$-time algorithm for the general planar case. It would be interesting to obtain a near-linear algorithm for the grid\dots. However, it may well be that reducing planar grids is $Ω(n\sqrt{n})$.
\end{quote}

\subsection{Homotopy Moves}

Now consider instead the following set of local operations on closed curves in the plane:
\begin{itemize}\itemsep0pt
\item \EMPH{$\arc{1}{0}$:} Remove an empty loop
\item \EMPH{$\arc{2}{0}$:} Separate two subpaths that bound an empty bigon 
\item \EMPH{$\arc{3}{3}$:} Flip an empty triangle by moving one subpath over the opposite intersection point
\end{itemize}
Our notation is nonstandard but mnemonic; the numbers before and after each arrow indicate the number of local vertices before and after the move.  See Figure \ref{F:homotopy}.  Each of these operations can be performed by continuously deforming the curve within a small neighborhood of one face; consequently, we call these transformations and their inverses \EMPH{homotopy moves}.  Homotopy moves are “shadows” of the classical Reidemeister moves used to manipulate knot and link diagrams~\cite{ab-tkc-26,r-ebk-27}.  A compactness argument, first explicitly given by Titus~\cite{t-ctafb-61} and Francis~\cite{f-thnc-71,f-ghi-71} but implicit in earlier work of Alexander and Briggs~\cite{ab-tkc-26} and Reidemeister~\cite{r-ebk-27}, implies that any continuous deformation between two generic closed curves in the plane—or on any other surface—is equivalent to a finite sequence of homotopy moves.   

\begin{figure}[htb]
\centering
\includegraphics[scale=0.275]{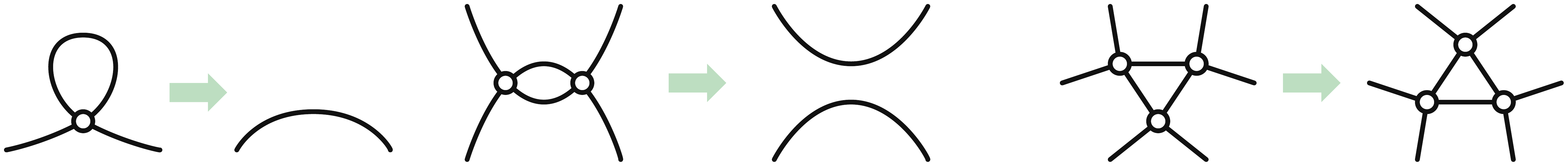}\\
\caption{Homotopy moves $\arc10$, $\arc20$, and $\arc33$.}
\label{F:homotopy}
\end{figure}

It is natural to ask \emph{how many} homotopy moves are required to transform a given closed curve in the plane into a \emph{simple} closed curve.  An $O(n^2)$ upper bound follows by suitable modifications of the Steinitz-Grünbaum and Feo-Provan algorithms for electrical reduction, where $n$ is the number of self-intersection points of the given curve.  (See Lemma~\ref{L:homotopy}.)  The same $O(n^2)$ bound also follows from algorithms for \emph{regular} homotopy, which allows only $\biarc20$ and $\arc33$ homotopy moves, by Francis~\cite{f-frtcs-69}, Vegter~\cite{v-kfdp-89} (for polygonal curves), and Nowik~\cite{n-cpsc-09}.

Tight bounds are known for two restrictions of this question, where some types of homotopy moves are forbidden.  First, Nowik~\cite{n-cpsc-09} proved that $Ω(n^2)$ moves are necessary in the worst case to connect two regularly homotopic curves with $n$ vertices.  Second, Khovanov~\cite{k-dg-97} defines two curves to be \emph{doodle equivalent} if one can be transformed into the other using $\biarc10$ and $\biarc20$ homotopy moves.  Khovanov~\cite{k-dg-97} and Ito and Takimura~\cite{it-whkp-13} independently proved that each doodle-equivalence class contains a unique representative with the smallest number of vertices, and that any curve can be transformed into the simplest doodle equivalent curve using only $\arc10$ and $\arc20$ moves.  It follows that two doodle equivalent curves are connected by a sequence of only $O(n)$ homotopy moves.\footnote{It is not known which sets of curves are equivalent under $\biarc10$ and $\arc33$ moves; indeed, Hagge and Yazinski only recently proved that this equivalence is nontrivial~\cite{hy-nrmsi-14}; see also related results of Ito \etal~\cite{it-whkp-13,itt-swhkp-15}.}

Looser bounds are also known for the minimum number of Reidemeister moves needed to reduce a diagram of the unknot~\cite{hn-udrqn-10,l-pubrm-15}, to separate the components of a split link~\cite{hhn-unnrm-12}, or to move between two equivalent knot diagrams~\cite{hh-msrmd-11,cl-ubrm-14}.

\subsection{Our Results}

In this paper, we prove the first non-trivial lower bounds for both of these problems.  Specifically:
\begin{itemize}
\item
$\Omega(n^{3/2})$ electrical transformations are required in the worst case to reduce a plane graph with $n$ vertices, with or without terminals, to a single vertex (or any constant number of vertices).
\item
$\Omega(n^{3/2})$ homotopy moves are required in the worst case to reduce a generic closed curve in the plane with $n$ self-intersection points to a simple closed curve.
\end{itemize}

Like many other authors, starting with Steinitz~\cite{s-pr-1916,sr-vtp-34} and Grünbaum\cite{g-cp-67}, we study electrical transformations indirectly, through the lens of \emph{medial graphs}.  By refining arguments of Noble and Welsh~\cite{nw-kg-00} and others, we prove in Section \ref{S:reduction} that the minimum number of electrical transformations needed to completely reduce a plane graph $G$ to a single vertex is no smaller than the minimum number of homotopy moves required to transform its medial graph into a simple closed curve.  Thus, our first lower bound follows immediately from the second.

Our lower bound for homotopy moves relies on a topological invariant called \EMPH{defect}, which was introduced by Arnold~\cite{a-tipcc-94, a-pctip-94} and~Aicardi~\cite{a-tc-94}.  Every simple closed curve has defect zero, and any homotopy move changes the defect of a curve by $-2$, $0$, or $2$; the various cases are illustrated in Figure~\ref{F:defect}.  In Section \ref{S:lower-bound}, we compute the defect of the standard planar projection of any $p\times q$ torus knot where either $p\bmod q = 1$ or $q\bmod p = 1$, generalizing earlier results of Hayashi \etal~\cite{hh-msrmd-11,hhsy-musrm-12} and Even-Zohar \etal~\cite{ehln-irkl-14}.  In particular, we show that the standard projection of the $p\times (p+1)$ torus knot, which has $p^2-1$ vertices, has defect $2\binom{p+1}{3}$.

Putting all the pieces together, we conclude that for any integer $k$, reducing the $k \times (2k+1)$ cylindrical grid requires at least $\binom{2k+1}{3} \ge (\sqrt{2}/{3})n^{3/2} - O(n)$ electrical transformations.  An argument of Truemper~\cite[Lemma~4]{t-drpg-89} implies that if $H$ is any minor of a planar graph $G$, then $H$ requires no more electrical transformations to reduce than $G$; see Lemma \ref{L:smoothing}.  It follows that our lower bound applies to any planar graph with treewidth $Ω(\sqrt{n})$~\cite{rst-qepg-94}; in particular, Truemper's $O(p^3)$ bound for the $p\times p$ grid~\cite{t-drpg-89,t-md-92} is tight.  Our analysis also implies that for any integers $p$ and $q$, electrically reducing the $p\times q$ cylindrical grid requires $\Omega(\min\set{p^2q,pq^2})$ moves, matching Nakahara and Takahashi’s upper bound~\cite{nt-aafts-96}.

Finally, in Section \ref{S:upper-bound}, we prove that the defect of any generic closed curve $γ$ with $n$ vertices has absolute value at most $O(n^{3/2})$. Unlike most $O(n^{3/2})$ upper bounds involving planar graphs, our proof does \emph{not} use the planar separator theorem~\cite{lt-stpg-79}.  Feo and Provan's electrical reduction algorithm~\cite{fp-dtert-93} and our electrical-to-homotopy reduction (Section \ref{S:reduction}) imply an upper bound of $O(nD)$, where $D$ is the diameter of the dual graph of $γ$; we give a simpler self-contained proof of this upper bound in Section~\ref{SS:diameter}.  Thus, if $D=O(\sqrt{n})$, we are done.  Otherwise, we prove that there is a simple closed curve $σ$ with at least~$s^2$ vertices of $γ$ on either side, where $s$ is the number of times $σ$ crosses $γ$. In Section \ref{SS:inex} we establish an inclusion-exclusion relationship between the defects of the given curve $γ$, the curves obtained by simplifying $γ$ either inside or outside $σ$, and the curve obtained by simplifying $γ$ on both sides of~$σ$.  This relationship implies an unbalanced “divide-and-conquer” recurrence whose solution is $O(n^{3/2})$.

Our upper bound on defect implies that better worst-case lower bounds on the number of electrical transformations or homotopy moves would require new techniques.  Like Gitler~\cite{g-dtaa-91}, Feo and Provan~\cite{fp-dtert-93}, and Archdeacon \etal~\cite{acgp-frpwg-00}, we conjecture that the correct worst-case bound for both problems is $Θ(n^{3/2})$. 

%

\subsection{Saving Face}

Electrical transformations are usually defined more generally, without references to a planar embedding of the underlying graph. In this more general setting, a loop reduction deletes any loop (even if it does not bound a face), a parallel reduction deletes any edge parallel to another edge (even if those two edges do not bound a face), and a $\arc{Δ}{Y}$ transformation removes the edges of any 3-cycle (even if it does not bound a face) and connects its vertices to a new vertex.  Our lower bound technique requires our more restrictive definitions; on the other hand, all published algorithms for reducing planar graphs also require only electrical transformations meeting our definition. 

Our argument can be extended to allow non-facial loop reductions (or, by duality, contraction of any bridge) and non-facial parallel reductions (or, by duality, contraction of either edge in an edge cut of size~$2$). However, non-facial $\arc{Δ}{Y}$ transformations are more problematic, because they can destroy the planarity of the graph.  For example, a single $\arc{Δ}{Y}$ transformation transforms the planar graph obtained from $K_5$ by deleting one edge into the non-planar graph $K_{3,3}$.  It is an interesting open problem whether our $Ω(n^{3/2})$ lower bound holds even when non-planar $\arc{Δ}{Y}$ transformations are permitted.

\section{Definitions}

\subsection{Closed Curves}

A \EMPH{(generic) closed curve} is a continuous map $γ\colon S^1 \to \Real^2$ that is injective except at a finite number of self-intersections, each of which is a transverse double point. More concisely, we consider only \emph{generic immersions} of the circle into the plane. A closed curve is \EMPH{simple} if it is injective.

The image of any non-simple closed curve has a natural structure as a 4-regular plane graph.   Thus, we refer to the self-intersection points of a curve as its \EMPH{vertices}, the maximal subpaths between vertices as \EMPH{edges}, and the components of the complement of the curve as its \EMPH{faces}.  In particular, to avoid trivial boundary cases, we consider a \emph{simple} closed curve to be a single edge with no vertices.  Conversely, every 4-regular planar graph is the image of a generic immersion of one or more disjoint circles.  We call a 4-regular plane graph \EMPH{unicursal} if it is the image of a generic closed curve.

Intuitively, the \EMPH{winding number} of a closed curve $γ$ around a point $x$, which we denote \EMPH{$\Wind(γ,x)$}, is the number of times $γ$ travels counterclockwise around $x$.  Winding numbers are characterized by the following combinatorial rule, first proposed by Möbius~\cite{m-ubiep-1865} but widely known as \emph{Alexander numbering}~\cite{a-tikl-28}: The winding numbers of $γ$ around any two points in the same face of $γ$ are equal; the winding number around any point in the outer face is zero; and winding numbers around points in adjacent faces differ by~$1$, with the larger winding number appearing on the left side of the curve.  See Figure~\ref{F:Alexander} for an example.  By convention, the winding number around any point $x$ in the image of $γ$ is the average of the winding numbers around all faces incident to $x$.  The winding number around any vertex is still an integer; the winding number around any regular point of $γ$ is a half-integer.

\begin{figure}[htb]
\centering
\includegraphics[scale=0.4]{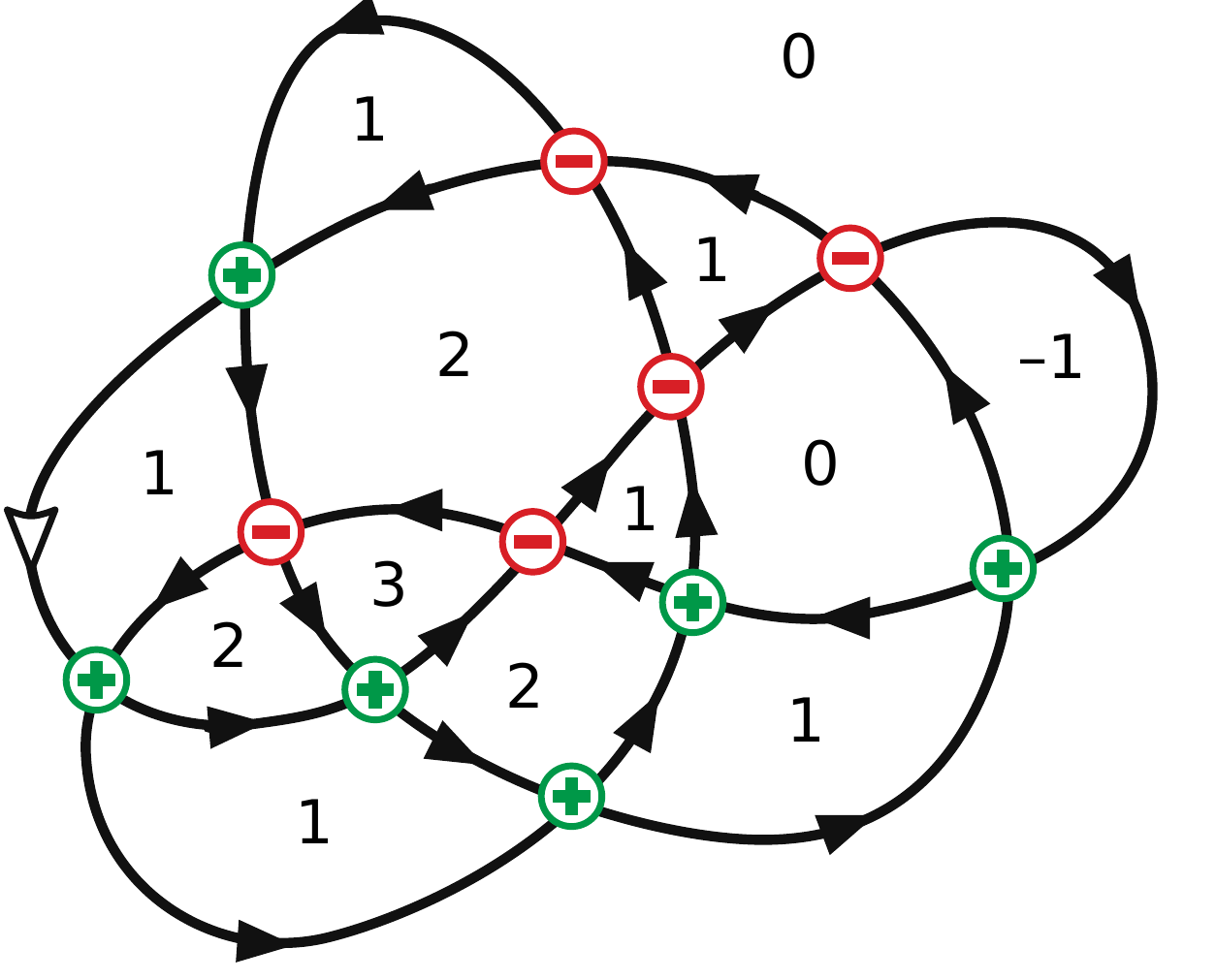}
\caption{%
Alexander numbering and vertex signing for a curve.  The white triangle on the left is the basepoint.}
\label{F:Alexander}
\end{figure}

We adopt a standard sign convention for vertices originally proposed by Gauss~\cite{g-n1gs-00}.\footnote{Some authors use the opposite sign convention, both for vertices and for winding number.}  A vertex is \EMPH{positive} if the first traversal through the vertex crosses the second traversal from right to left, and \EMPH{negative} otherwise.  Equivalently, a vertex $x$ is positive if the winding number of a point moving along the curve increases the first time it reaches $x$ after leaving the basepoint. We define $\sgn(x) = +1$ for every positive vertex $x$ and $\sgn(x) = -1$ for every negative vertex~$x$.  Again, see Figure~\ref{F:Alexander}.

\subsection{Medial Graphs}

The \EMPH{medial graph} of a plane graph $G$, which we denote \EMPH{$G^\times$}, is another plane graph whose vertices correspond to the edges of $G$ and whose edges correspond to incidences between vertices of $G$ and faces of $G$. Two vertices of $G^\times$ are connected by an edge if and only if the corresponding edges in $G$ are consecutive in cyclic order around some vertex, or equivalently, around some face in $G$. 
The medial graph~$G^\times$ may contain loops and parallel edges even if the original graph $G$ is simple.  The medial graphs of any plane graph~$G$ and its dual $G^*$ are identical.  Every vertex in every medial graph has degree $4$, and every 4-regular plane graph is a medial graph.  To avoid trivial boundary cases, we define the medial graph of an isolated vertex to be a circle, which we regard as an edge with no vertices.

\begin{figure}[htb]
\centering
\includegraphics[scale=0.275]{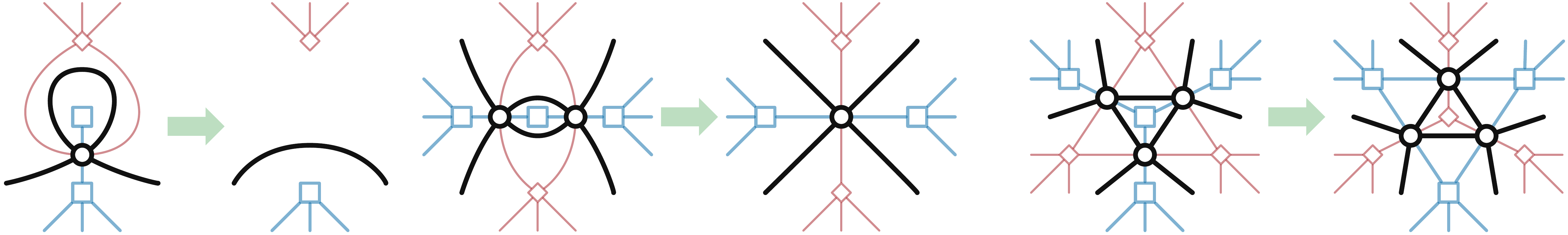}\\
\caption{Medial electrical moves $\arc10$, $\arc21$, and $\arc33$.}
\end{figure}

Electrical transformations in any planar graph $G$ correspond to local transformations in the medial graph $G^\times$, which are almost identical to homotopy moves.  Each leaf or loop reduction in $G$ corresponds to to a $\arc10$ homotopy move in $G^\times$, and each $\arc{Δ}{Y}$ or $\arc{Y}{Δ}$ transformation in $G$ corresponds to a $\arc33$ homotopy move in~$G^\times$.  A series-parallel reduction in $G$ contracts an empty bigon in $G^\times$ to a single vertex.  Extending our earlier notation, we call this transformation a \EMPH{$\arc21$} move.  We collectively refer to these transformations and their inverses as \EMPH{medial electrical moves}.


%

\section{Electrical Reduction is No Shorter than Homotopy Reduction}
\label{S:reduction}

We say that a sequence of elementary moves (of either type) \EMPH{reduces} a 4-regular plane graph~$γ$ if it transforms~$γ$ into a collection of disjoint simple closed curves.  We define two functions describing the minimum number of moves required to reduce~$γ$:
\begin{itemize}\itemsep0pt
\item 
\EMPH{$X(γ)$} is the minimum number of medial electrical moves required to reduce $γ$.
\item
\EMPH{$H(γ)$} is the minimum number of homotopy moves required to reduce $γ$.
\end{itemize}
The main result of this section is that the first function is always an upper bound on the second when $γ$ is a generic closed curve. This result is already implicit in the work of Noble and Welsh~\cite{nw-kg-00}, and most of our proofs closely follow theirs. We include the proofs here to make the inequalities explicit and to keep the paper self-contained.

\EMPH{Smoothing} a 4-regular plane graph $γ$ at a vertex $x$ means replacing the intersection of $γ$ with a small neighborhood of $x$ with two disjoint simple paths, so that the result is another 4-regular plane graph. (There are two possible smoothings at each vertex; see Figure \ref{F:smoothing}. A \EMPH{smoothing} of $γ$ is any graph obtained by smoothing zero or more vertices of $γ$, and a \EMPH{proper smoothing} of $γ$ is any smoothing other than $\gamma$ itself. For any plane graph $G$, the (proper) smoothings of the medial graph $G^\times$ are precisely the medial graphs of (proper) minors of $G$.

\begin{figure}[htb]
\centering
\includegraphics[scale=0.25]{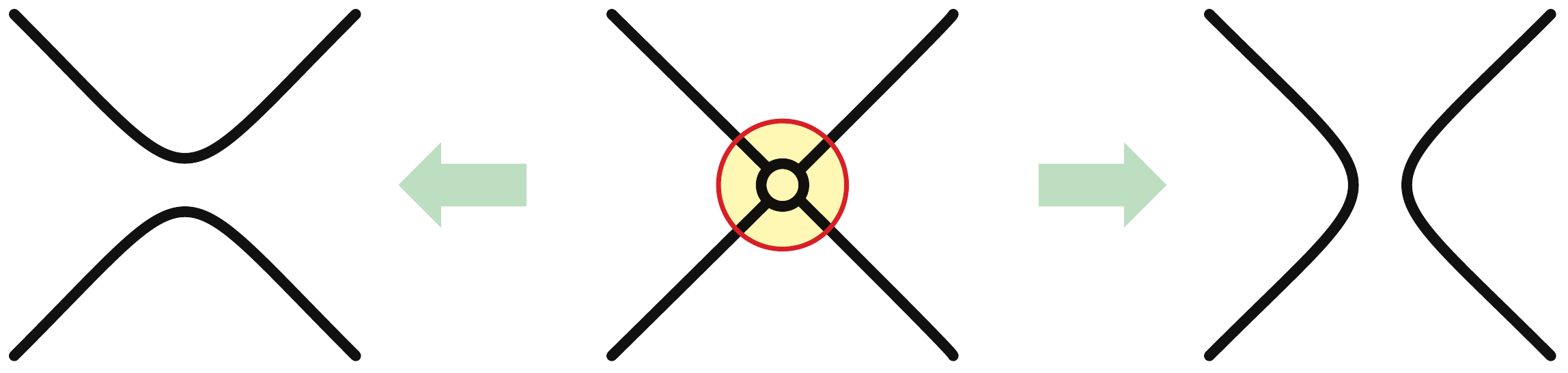}
\caption{Smoothing a vertex.}
\label{F:smoothing}
\end{figure}

The next lemma follows from close reading of proofs by Truemper~\cite[Lemma~4]{t-drpg-89} and several others~\cite{g-dtaa-91,nt-aafts-96,acgp-frpwg-00,nw-kg-00} that every minor of a ΔY-reducible graph is also ΔY-reducible.  Our proof most closely resembles an argument of Gitler~\cite[Lemma~2.3.3]{g-dtaa-91}, but restated in terms of medial electrical moves.


\begin{lemma}
\label{L:smoothing}
$X(\overline{γ}) < X(γ)$ for every connected proper smoothing $\overline{γ}$ of every connected 4-regular plane graph $γ$.
\end{lemma}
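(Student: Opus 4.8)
The plan is to take an optimal reducing sequence for $\gamma$ — a sequence of $X(\gamma)$ medial electrical moves transforming $\gamma$ into a collection of disjoint simple closed curves — and ``project'' it through the smoothing operation to obtain a reducing sequence for $\overline{\gamma}$. Concretely, fix the vertex $x$ (or the set of vertices) at which $\gamma$ is smoothed to produce $\overline{\gamma}$. As the optimal sequence for $\gamma$ proceeds, maintain at each step a corresponding smoothed graph, obtained by smoothing the ``descendants'' of $x$. The key claim is that each medial electrical move on $\gamma$ induces on the smoothed graph either the identical move, a strictly simpler move ($\arc33$ becoming $\arc21$ or $\arc10$, $\arc21$ becoming $\arc10$), or no move at all — and crucially, at least one step of the original sequence induces \emph{no} move on the smoothed side. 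Since the original sequence ends with $\gamma$ reduced, the projected graph also ends reduced; counting the projected moves then gives $X(\overline{\gamma}) \le X(\gamma) - 1$, which is the desired strict inequality.

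First I would set up the bookkeeping carefully: after smoothing $x$, the four edge-ends at $x$ are rerouted into two arcs, so the ``local picture'' near $x$ loses a vertex. When the first medial electrical move of the optimal sequence touches (a descendant of) $x$, I would case on the move type. A $\arc10$ move that removes an empty loop through the smoothed location either still removes a loop (identical move) or the loop has already been opened by smoothing and there is nothing to remove (no move). A $\arc21$ move contracting an empty bigon: if the smoothed vertex is one of the two bigon vertices, the bigon becomes a single vertex already, so the move becomes a $\arc10$ (removing the remaining loop) or vanishes; if neither bigon vertex is smoothed, the move is copied verbatim. The $\arc33$ case is the most delicate: an empty triangle with one of its three vertices smoothed becomes a configuration with an empty bigon or a loop, so the $\arc33$ becomes a $\arc21$ or $\arc10$ or disappears. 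In every case the projected move count goes up by at most one, and whenever smoothing has already ``done the work'' of a move, the count goes up by zero. The main obstacle is verifying that the projected graph stays 4-regular and planar throughout and that ``empty'' faces on the $\gamma$ side project to ``empty'' faces (or degenerate faces) on the smoothed side — i.e., that smoothing cannot introduce new curve material into a face that was empty upstairs. This follows because smoothing only removes a crossing and reroutes arcs within a neighborhood of $x$; it never pushes an arc into a previously empty region.

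To extract the strict inequality I would argue that at least one move in the optimal sequence for $\gamma$ must project trivially. Since $\overline{\gamma}$ is a \emph{proper} smoothing, $\gamma$ has at least one more vertex than $\overline{\gamma}$; but every reducing sequence for $\gamma$ must eliminate \emph{every} vertex of $\gamma$, in particular the vertex $x$ that gets smoothed. The move that finally eliminates $x$ (or a descendant thereof) — a $\arc10$, $\arc21$, or $\arc33$ whose action at $x$ has already been performed by the smoothing — must be one of the degenerate cases above, hence projects to at most one move but in fact to \emph{fewer} moves than it costs upstairs. More carefully: track the ``excess vertex count,'' the number of vertices of the current $\gamma$-graph lying above the smoothed location that do not yet appear downstairs; this starts at $\ge 1$ and must reach $0$ (since the reduced $\gamma$ is a disjoint union of simple curves with no vertices at all), and it can only decrease at a step where the projected move is strictly cheaper than the original. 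Therefore over the whole sequence the projected sequence is strictly shorter: $X(\overline{\gamma}) \le X(\gamma) - 1 < X(\gamma)$. I would close by noting that the projected sequence indeed reduces $\overline{\gamma}$ — the terminal graph is a smoothing of a disjoint union of simple closed curves, hence itself a disjoint union of simple closed curves — and that since $\overline{\gamma}$ is assumed connected, the intermediate graphs can be taken connected as well (or one simply ignores any spurious isolated simple-curve components, which require no moves), so $X(\overline{\gamma})$ as defined is bounded as claimed.
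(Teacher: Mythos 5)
Your plan is in essence the paper's own argument: the paper runs the same simulation, organized as an induction on $X(\gamma)$ with a case analysis of how the first move of an optimal reducing sequence interacts with the smoothed vertex. But as written your proposal has three concrete soft spots. First, you only case on $\arc10$, $\arc21$, and $\arc33$ moves touching the smoothed vertex, yet at this point one may \emph{not} assume the optimal sequence for $\gamma$ avoids the vertex-increasing moves $\arc01$ and $\arc12$; that monotonicity statement is proved \emph{from} this lemma (Lemma~\ref{L:monotonicity}), so assuming it here is circular. In particular, a $\arc12$ move that expands the smoothed vertex into a bigon must be handled; in the paper's eight-subcase analysis the projection there is an $\arc01$ move downstairs (or no move, at the cost of the relation becoming a two-vertex smoothing), which is not one of your three projection types. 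Second, your strictness claim that ``the move that finally eliminates $x$ projects to fewer moves'' is false in general: when a $\arc21$ contracts a bigon with the smoothed vertex at one corner, the projected move can be a genuine $\arc10$ costing one full move, with the smoothing relation simply migrating to the merged vertex; and when a $\arc33$ involves the smoothed vertex, the projection can be a genuine $\arc21$, after which the downstairs curve is only a smoothing of the upstairs curve at \emph{two} vertices. Your ``excess vertex count'' potential does rescue the strict inequality, but only once the full case analysis certifies that the excess decreases only at steps where the downstairs sequence performs no move; that certification is exactly the content of the paper's figure of subcases, which you assert rather than verify. (The paper avoids the migrating bookkeeping altogether by stating the lemma for arbitrary proper smoothings and inducting on $X(\gamma)$.)

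Third, the fallback of ``ignoring spurious isolated simple-curve components'' is unsound in the plane: if a loop at the smoothed vertex is removed upstairs while downstairs the smoothing has detached that loop into a free circle, the circle sits inside some face and can block later copied moves, since $\arc10$, $\arc21$, and $\arc33$ all require an \emph{empty} face; indeed the paper points out (Figure~\ref{F:no-me-reduction}) that disconnected 4-regular plane graphs need not be reducible at all. The correct resolution---and the place where the connectivity hypothesis actually does its work---is that this subcase never occurs: the maintained smoothed curve starts connected and every projected move preserves connectivity, whereas a detached loop would force it to be disconnected. This is precisely how the paper rules out its one impossible subcase, and your first instinct (``the intermediate graphs can be taken connected'') is the argument you should make explicit rather than the parenthetical one.
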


\begin{proof}
Let $γ$ be a connected 4-regular plane graph, and let $\overline{γ}$ be a connected proper smoothing of $γ$.  If $γ$ is already simple, the lemma is vacuously true.  Otherwise, the proof proceeds by induction on $X(γ)$.

We first consider the special case where $\overline{γ}$ is obtained from $γ$ by smoothing a single vertex~$x$.  Let~$γ'$ be the result of the first medial electrical move in the minimum-length sequence that reduces $γ$.  We immediately have $X(γ) = Χ(γ')+1$.  There are two nontrivial cases to consider.

First, suppose the move from $γ$ to $γ'$ does not involve the smoothed vertex $x$.  Then we can apply the same move to $\overline{γ}$ to obtain a new graph $\overline{γ}'$; the same graph  can also be obtained from $γ'$ by smoothing~$x$.  We immediately have $X(\overline{γ}) \le X(\overline{γ}') + 1$, and the inductive hypothesis implies $X(\overline{γ}') < X(γ')$.

Now suppose the first move in $Σ$ does involve $x$. In this case, we can apply at most one medial electrical move to $\overline{γ}$ to obtain a (possibly trivial) smoothing $\overline{γ}'$ of $γ'$.  There are eight subcases to consider, shown in Figure \ref{F:smooth-moves}.  One subcase for the $\arc01$ move is impossible, because~$\overline{γ}$ is connected.  In the remaining $\arc01$ subcase and one $\arc21$ subcase, the curves $\overline{γ}$, $\overline{γ}'$ and $γ'$ are all isomorphic, which implies $X(\overline{γ}) = X(\overline{γ}') = X(γ') = X(γ)-1$.  In all remaining subcases, $\overline{γ}'$ is a connected proper smoothing of $γ'$, so the inductive hypothesis implies $X(\overline{γ}) ≤ X(\overline{γ}')+1 < X(γ')+1 = X(γ)$.

\begin{figure}[htb]
\centering
\includegraphics[scale=0.25]{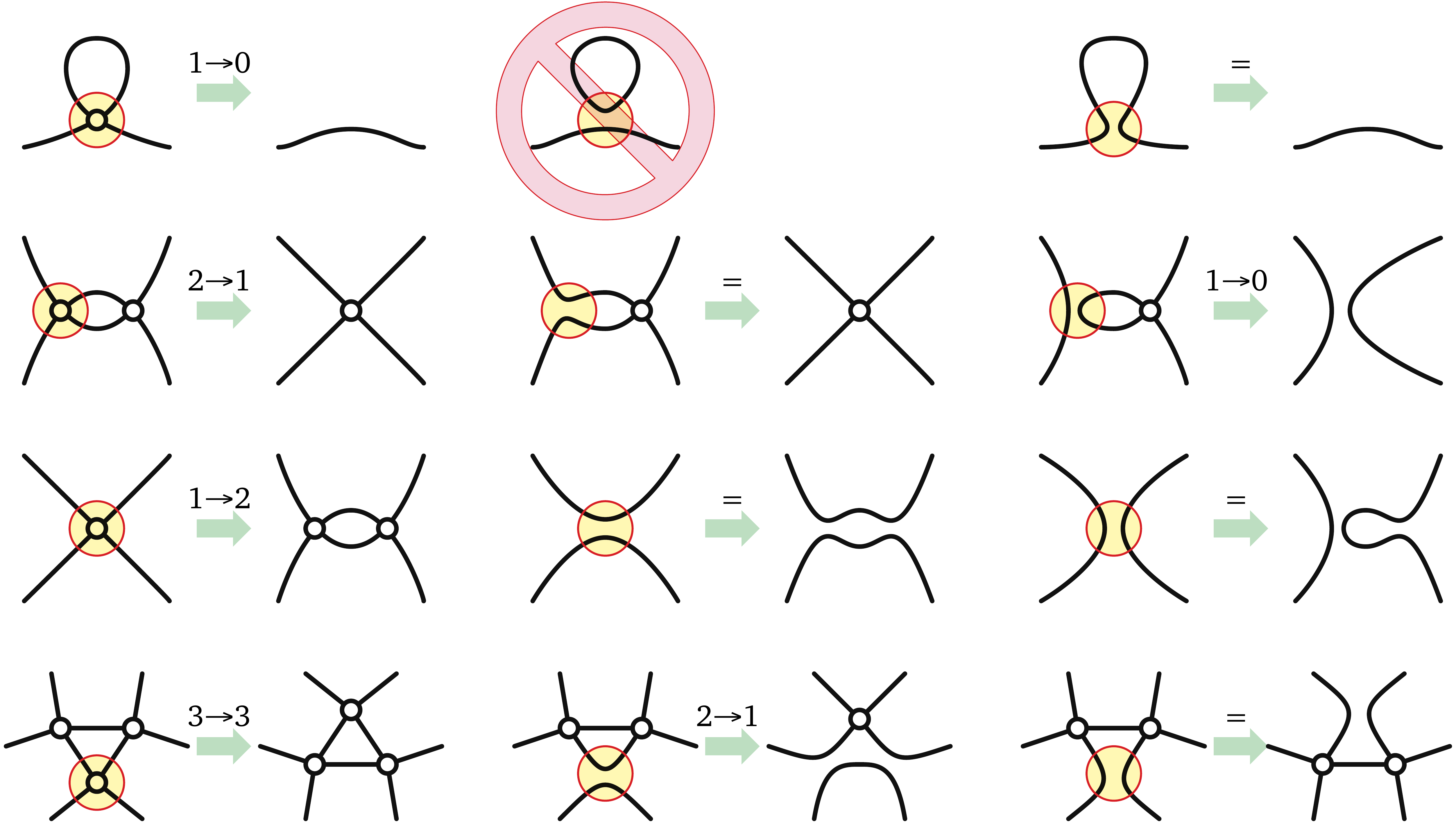}
\caption{Cases for the proof of the Lemma~\ref{L:smoothing}; the circled vertex is $x$.}
\label{F:smooth-moves}
\end{figure}

Finally, we consider the more general case where $\overline{γ}$ is obtained from $γ$ by smoothing more than one vertex.  Let $\widetilde{γ}$ be any intermediate curve, obtained from $γ$ by smoothing just one of the vertices that were smoothed to obtain $\overline{γ}$.  Our earlier argument implies that $X(\widetilde{γ}) < X(γ)$.  Thus, the inductive hypothesis implies $X(\overline{γ}) < X(\widetilde{γ})$, which completes the proof.
\end{proof}

\begin{lemma}
\label{L:monotonicity}
For every connected 4-regular plane graph $γ$, there is a minimum-length sequence of medial electrical moves that reduces $γ$ and that does not contain $\arc01$ or $\arc12$ moves. 
\end{lemma}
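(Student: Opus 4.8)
The plan is to show that any minimum-length reducing sequence can be modified, without increasing its length, to eliminate all $\arc01$ and $\arc12$ moves. The key observation is that these are the two "expanding" moves: a $\arc01$ move creates an empty loop (increasing vertex count by one), and a $\arc12$ move splits a vertex into an empty bigon (also increasing vertex count by one). Intuitively, in a \emph{shortest} sequence there is never any reason to expand, because anything an expanding move accomplishes can be accomplished without it. I would make this precise by an exchange argument. Suppose $\Sigma$ is a minimum-length sequence that reduces $\gamma$ and contains at least one expanding move; take the \emph{last} such move, say the $i$-th move, which transforms $\gamma_{i-1}$ into $\gamma_i$ by creating an empty loop or empty bigon. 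Every move after step $i$ is a $\arc10$, $\arc21$, or $\arc33$ move. The new empty loop (resp.\ bigon) is a tiny feature sitting inside a single face; I would track this feature forward through the remaining moves and argue that it must eventually be destroyed — the only way to destroy an empty loop is a $\arc10$ move applied to it, and the only way to destroy an empty bigon is a $\arc21$ move applied to it.

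The heart of the argument is a local case analysis showing that the expanding move at step $i$ and the matching contracting move that later destroys the feature can be "cancelled." Concretely, I would argue that the feature created at step $i$ survives unchanged until some later step $j$, where it is removed by the inverse move; the moves strictly between $i$ and $j$ either do not touch the feature at all (so they commute past step $i$), or they interact with it only in controlled ways (e.g.\ a $\arc33$ move sliding a strand across the loop, or a $\arc21$ move creating a bigon one of whose vertices coincides with the loop). In every interaction case one checks, picture by picture — mirroring the eight-subcase analysis already used in Lemma~\ref{L:smoothing} and Figure~\ref{F:smooth-moves} — that the interacting move can be rerouted so as to avoid the feature, after which step $i$ and step $j$ become a create/destroy pair acting on an otherwise untouched region and can simply be deleted. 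This yields a reducing sequence that is two moves \emph{shorter}, contradicting minimality of $\Sigma$. Hence no minimum-length reducing sequence contains an expanding move. (Alternatively, one may argue that the feature persists, is removed at the very end, and deleting both moves shortens the sequence; either framing gives the contradiction.)

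I expect the main obstacle to be the bookkeeping in the case where the created empty bigon (from a $\arc12$ move) shares a vertex with a later move, since then "the feature" is not cleanly disjoint from the rest of the curve and one must be careful that rerouting the later move does not create new intersections or disconnect the curve. The connectedness hypothesis on $\gamma$ is what rules out the degenerate subcases, exactly as in Lemma~\ref{L:smoothing}. Once the local picture is verified in each subcase, the global argument is a clean induction on the number of expanding moves in the sequence: each application of the exchange removes one expanding move without increasing length, so after finitely many steps we obtain a minimum-length reducing sequence with no $\arc01$ or $\arc12$ moves.
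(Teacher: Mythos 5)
You are attempting a genuinely different argument from the paper's (an exchange argument that cancels a create/destroy pair), and as written it has a real gap: the entire difficulty of the lemma is concentrated in the step you defer to checking ``picture by picture.'' You need either that the empty loop or bigon created at step $i$ survives unchanged until it is removed by the matching inverse move, or that every later move interacting with it ``can be rerouted so as to avoid the feature'' without lengthening the sequence; neither is established, and both fail as stated. The feature need not be destroyed by its inverse move: the loop's vertex can instead disappear inside a later $\arc21$ contraction of a bigon formed when another strand crosses next to it, and the two vertices of a created bigon can be consumed by different later moves; once a $\arc33$ move slides a strand across the feature, the ``create/destroy pair acting on an otherwise untouched region'' picture is gone. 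Rerouting an interacting move around the feature is not a purely local cosmetic change --- it changes which faces are empty and hence which later moves in the fixed sequence remain applicable --- so the claim that deleting the pair leaves a valid, shorter reducing sequence is exactly what needs proof. Nor is this ``mirroring'' the eight-subcase analysis of Lemma~\ref{L:smoothing}: that analysis compares $X$-values of a curve and its smoothing by induction on $X(\gamma)$; it never commutes, reroutes, or cancels moves inside a given sequence.

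The missing idea is that Lemma~\ref{L:smoothing}, which you already invoke, makes all of this surgery unnecessary. In a minimum-length reducing sequence, $X(\gamma_i) = X(\gamma) - i$ for every prefix $\gamma_i$ (the suffix reduces $\gamma_i$, and prepending the prefix gives the reverse inequality). If the $i$-th move were a $\arc01$ or $\arc12$ move, then $\gamma_{i-1}$ is a connected proper smoothing of $\gamma_i$ (smooth the newly created loop vertex, respectively one vertex of the new bigon), so Lemma~\ref{L:smoothing} yields $X(\gamma_{i-1}) < X(\gamma_i)$, contradicting $X(\gamma_{i-1}) = X(\gamma_i) + 1$. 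This is the paper's proof; it gives the stronger conclusion that \emph{no} minimum-length sequence contains an expanding move, with no case analysis or modification of the sequence at all.
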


\begin{proof}
Our proof follows an argument of Noble and Welsh~\cite[Lemma~3.2]{nw-kg-00}.

Consider a minimum-length sequence of medial electrical moves that reduces an arbitrary connected $4$-regular planar graph~$γ$. For any integer $i ≥ 0$, let $γ_i$ denote the result of the first $i$ moves in this sequence; in particular, $γ_0 = γ$ and $γ_{X(γ)}$ is a set of disjoint circles. Minimality of the reduction sequence implies that $X(γ_i) = X(γ)-i$ for all $i$. Now let $i$ be an arbitrary index such that $γ_i$ has one more vertex than $γ_{i-1}$. Then $γ_{i-1}$ is a connected proper smoothing of $γ_i$, so Lemma~\ref{L:smoothing} implies that $X(γ_{i-1}) < X(γ_i)$, giving us a contradiction. 
\end{proof}

\begin{theorem}
\label{Th:homotopy}
$X(γ) ≥ H(γ)$ for every closed curve $γ$.
\end{theorem}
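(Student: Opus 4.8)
The plan is to induct on $X(\gamma)$. The base case $X(\gamma)=0$ is immediate: then $\gamma$ is already a simple closed curve, so $H(\gamma)=0$ as well. For the inductive step I would first invoke Lemma~\ref{L:monotonicity} to fix a minimum-length sequence of medial electrical moves reducing $\gamma$ that contains neither $\arc01$ nor $\arc12$ moves; every move in it is therefore of type $\arc10$, $\arc21$, or $\arc33$. Let $\gamma'$ denote the result of the first move in this sequence; by minimality $X(\gamma')=X(\gamma)-1$. I would then split into cases according to the type of that first move.

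If the first move is a $\arc10$ or a $\arc33$ move, it is verbatim a homotopy move: performing it on $\gamma$ produces exactly the connected generic closed curve $\gamma'$, so by the inductive hypothesis $H(\gamma')\le X(\gamma')$ and hence $H(\gamma)\le 1+H(\gamma')\le X(\gamma)$. The only substantive case is when the first move is a $\arc21$ move, which contracts an empty bigon of $\gamma$ to a single vertex. Here I would instead apply a $\arc20$ homotopy move to that same empty bigon, obtaining a generic closed curve $\gamma''$, which is connected because a homotopy move never changes the domain circle. The key geometric observation is that $\gamma'$ and $\gamma''$ coincide outside a small disk around the bigon, and inside that disk $\gamma''$ is obtained from $\gamma'$ by smoothing the single vertex that the $\arc21$ move created --- namely the smoothing that pulls the two strands apart (cf.\ Figure~\ref{F:smoothing}). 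Thus $\gamma''$ is a connected proper smoothing of $\gamma'$, so Lemma~\ref{L:smoothing} gives $X(\gamma'')<X(\gamma')=X(\gamma)-1$. The inductive hypothesis then yields $H(\gamma'')\le X(\gamma'')\le X(\gamma)-2$, whence $H(\gamma)\le 1+H(\gamma'')\le X(\gamma)-1<X(\gamma)$. This completes the induction.

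The delicate point --- and the only one that requires care --- is the $\arc21$ case. Two facts must be verified: first, that a $\arc20$ homotopy move is available exactly where a $\arc21$ medial move is, which is automatic since both are defined on an empty bigon; and second, that the curve $\gamma''$ obtained from the $\arc20$ move really is a smoothing of the $4$-regular plane graph $\gamma'$ obtained from the $\arc21$ move, so that Lemma~\ref{L:smoothing} can absorb the ``extra'' vertex. (It does not matter whether $\gamma'$ is itself a generic closed curve: the inductive hypothesis is applied only to $\gamma''$, which certainly is one, while Lemma~\ref{L:smoothing} only needs $\gamma'$ to be a connected $4$-regular plane graph, which it is, being a connected graph with a connected subgraph contracted.) Everything else is purely local, plus the two preceding lemmas; no planar separator or global structural argument enters.
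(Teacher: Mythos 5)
Your proof is correct and follows essentially the same route as the paper's: induct on $X(\gamma)$, use Lemma~\ref{L:monotonicity} to rule out $\arc01$ and $\arc12$ moves, handle $\arc10$ and $\arc33$ directly as homotopy moves, and in the $\arc21$ case observe that the curve produced by the corresponding $\arc20$ homotopy move is a connected proper smoothing of the $\arc21$ result, so Lemma~\ref{L:smoothing} absorbs the extra vertex. Your explicit checks of connectivity and of the smoothing relationship are exactly the points the paper relies on.
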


\begin{proof}
The proof proceeds by induction on $X(γ)$, following an argument of Noble and Welsh~\cite[Proposition 3.3]{nw-kg-00}

Let $γ$ be a closed curve.  If $X(γ) = 0$, then $γ$ is already simple, so $H(γ) = 0$.  Otherwise, let $Σ$ be a minimum-length sequence of medial electrical moves that reduces $γ$ to a circle.  Lemma~\ref{L:monotonicity} implies that we can assume that the first move in $Σ$ is neither $\arc01$ nor $\arc12$. If the first move is $1\arcto 0$ or $3\arcto 3$, the theorem immediately follows by induction. 

The only interesting first move is $2\arcto 1$. Let $γ'$ be the result of this $\arc21$ move, and let $\overline{γ}$ be the result of the corresponding $\arc20$ homotopy move.  The minimality of $Σ$ implies that $X(γ) = X(γ')+1$, and we trivially have $H(γ) \le H(\overline{γ}) + 1$. The curve $\overline{γ}$ is a connected proper smoothing of $γ'$, so the Lemma~\ref{L:smoothing} implies $X(\overline{γ}) < X(γ') < X(γ)$.  Finally, the inductive hypothesis implies that $X(\overline{γ}) \ge H(\overline{γ})$, which completes the proof.
\end{proof}


Finally, Theorem \ref{Th:homotopy} and Lemma \ref{L:smoothing} immediately imply the following useful corollary.

\begin{corollary}
\label{C:smoothing}
$X(γ) ≥ H(\overline{γ})$ for every connected 4-regular plane graph $γ$ and every unicursal smoothing~$\overline{γ}$  of $γ$.
\end{corollary}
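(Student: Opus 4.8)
The plan is to combine Theorem~\ref{Th:homotopy} and Lemma~\ref{L:smoothing} essentially verbatim. Fix a connected $4$-regular plane graph~$\gamma$ and a unicursal smoothing~$\overline{\gamma}$ of~$\gamma$. The first thing I would observe is that~$\overline{\gamma}$, being unicursal, \emph{is} a generic closed curve (the image of a generic immersion of a single circle), and in particular its underlying graph is connected. Consequently Theorem~\ref{Th:homotopy} applies directly to~$\overline{\gamma}$ and yields $X(\overline{\gamma}) \ge H(\overline{\gamma})$.

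Next I would split on whether the smoothing is proper. If $\overline{\gamma} = \gamma$, then the inequality above already reads $X(\gamma) \ge H(\overline{\gamma})$ and we are done. Otherwise~$\overline{\gamma}$ is a \emph{connected proper} smoothing of the connected $4$-regular plane graph~$\gamma$, so Lemma~\ref{L:smoothing} gives $X(\overline{\gamma}) < X(\gamma)$. Chaining the two displayed inequalities, $X(\gamma) > X(\overline{\gamma}) \ge H(\overline{\gamma})$, which in particular implies $X(\gamma) \ge H(\overline{\gamma})$.

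There is essentially no obstacle here: the corollary is a one-step consequence of the two preceding results, and the only point worth being careful about is that the word ``unicursal'' is exactly what supplies the connectivity hypothesis needed to invoke Lemma~\ref{L:smoothing}, as well as the fact that a unicursal $4$-regular plane graph literally is a closed curve in the sense required by Theorem~\ref{Th:homotopy}. (One could also note that the argument gives the slightly stronger strict inequality $X(\gamma) > H(\overline{\gamma})$ whenever $\overline{\gamma} \ne \gamma$, but the non-strict form stated in the corollary is all that is needed downstream.)
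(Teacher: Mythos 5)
Your proposal is correct and matches the paper's intended argument exactly: the corollary is stated as an immediate consequence of Theorem~\ref{Th:homotopy} and Lemma~\ref{L:smoothing}, which is precisely the chain $X(\gamma) \ge X(\overline{\gamma}) \ge H(\overline{\gamma})$ you give, with the trivial-smoothing case handled separately. Your observation that unicursality supplies both the connectivity needed for Lemma~\ref{L:smoothing} and the closed-curve hypothesis of Theorem~\ref{Th:homotopy} is the right point of care.
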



\section{Lower Bounds}
\label{S:lower-bound}

\subsection{Defect}
\label{SS:defect}

To complete our lower bound proof, we consider an invariant of closed curves in the plane introduced by Arnold~\cite{a-tipcc-94, a-pctip-94} and~Aicardi~\cite{a-tc-94} called \EMPH{defect}. (Some readers may be more comfortable thinking of defect as a \emph{potential function} for 4-regular plane graphs.) There are several equivalent definitions and closed-form formulas for defect and other closely related curve invariants \cite{a-tipcc-94, a-pctip-94, v-gicsc-95, s-efspc-95, lw-igopc-96, cd-efagci-97, l-nicsp-97, ao-gasi-99, p-icfgd-98, p-nwfpc-99}; the most useful formula for our purposes is due to Polyak~\cite{p-icfgd-98}. 



Two vertices $x\ne y$ of a closed curve $γ$ are \EMPH{interleaved} if they alternate in cyclic order along $γ$, either as $x,y,x,y$ or as $y,x,y,x$; we write \EMPH{$x\between y$} to denote that vertices $x$ and $y$ are interleaved.  Polyak's formula for defect is
\[
	δ(γ) \coloneqq -2 \sum_{x\between y} \sgn(x)\cdot\sgn(y),
\]
where the sum is taken over all interleaved pairs of vertices.  The factor of $-2$ is a historical artifact, which we retain only to be consistent with Arnold's original definitions~\cite{a-tipcc-94, a-pctip-94}.  See Figure \ref{F:defect} for an example.  Even though the signs of individual vertices depend on the basepoint and orientation of the curve, the defect of a curve is independent of those choices. Moreover, the defect of any curve is preserved by any homeomorphism from the plane to itself, or even from the sphere to itself, including reflection. 

\def\REdgeG#1{\multicolumn{1}{|c}{\color{Green}#1}}
\def\REdgeR#1{\multicolumn{1}{|c}{\color{Red}#1}}
\def\LEdge#1{\multicolumn{1}{c}{#1}}

\begin{figure}[htb]
\centering\scriptsize
	\raisebox{-0.5\height}{\includegraphics[scale=0.4]{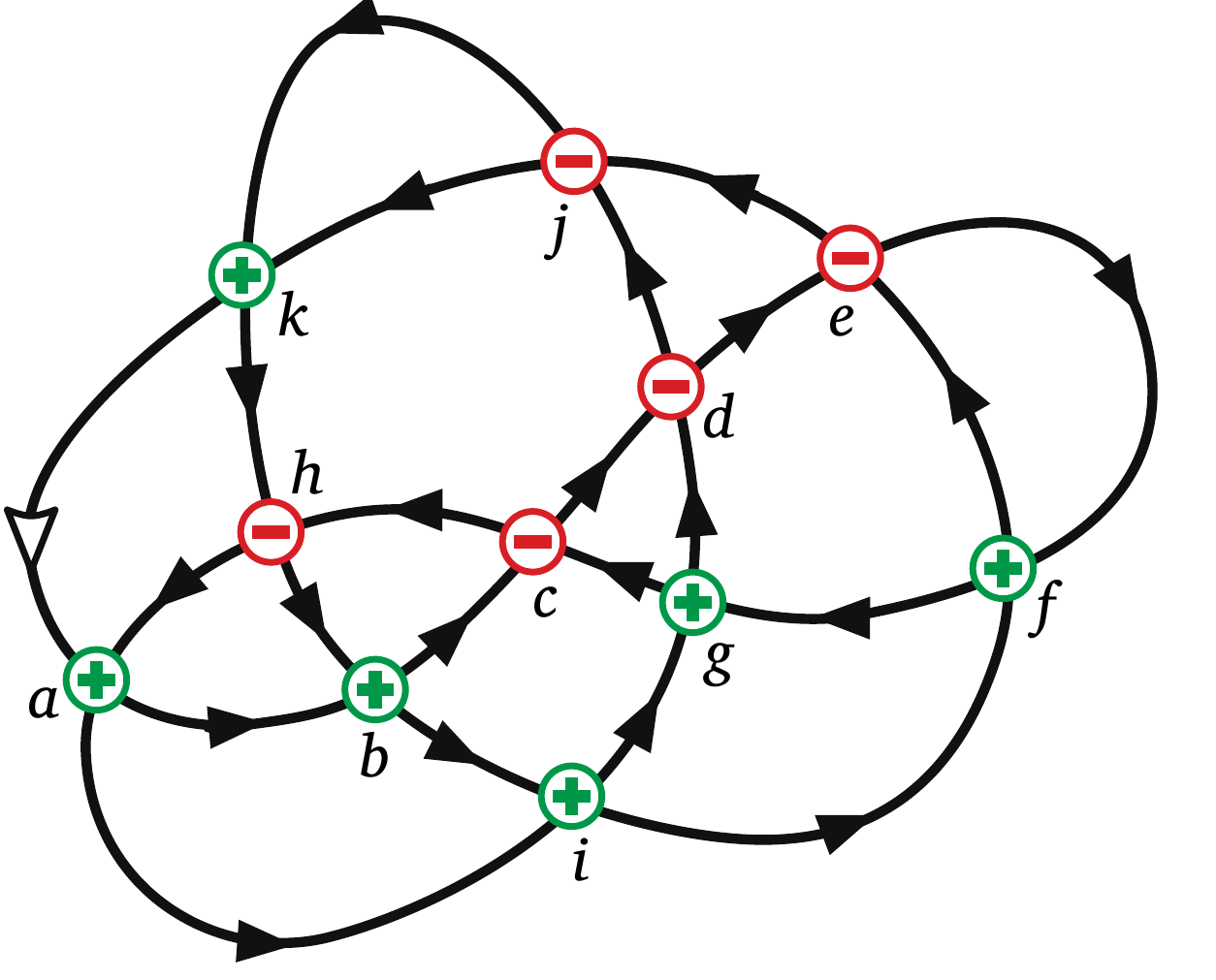}}
\qquad
\arraycolsep=3.5pt\def\arraystretch{1.2}
$\begin{array}{|cccccccccccc}
\\[-6ex]
\LEdge{\color{Green}a} \\ \cline{1-1}
+	& \REdgeG{b} & \\\cline{2-2}
	& 	& \REdgeR{c} \\\cline{3-3}
-	& 	& +	& \REdgeR{d} \\\cline{4-4}
-	& -	& +	& +	& \REdgeR{e} \\\cline{5-5}
+	& +	& -	& -	& 	& \REdgeG{f} \\\cline{6-6}
+	& 	& -	& 	& 	& 	& \REdgeG{g} \\\cline{7-7}
- 	& 	& 	& +	& 	& 	& -	& \REdgeR{h} \\\cline{8-8}
	& +	& 	& -	& 	& 	& +	& -	& \REdgeG{i} \\\cline{9-9}
	& -	& 	& 	& +	& -	& 	& +	& -	& \REdgeR{j} \\\cline{10-10}
	& +	& 	& 	& -	& +	& 	& -	& +	& -	& \REdgeG{k} \\\cline{1-10}
\end{array}$
\caption{The curve in Figure \ref{F:Alexander} has defect $-2(15-16) = 2$.  In the table on the right, each $+$ indicates an interleaved pair with the same sign, each $-$ indicates an interleaved pair with opposite signs, and blanks indicate non-interleaved pairs.}
\label{F:defect}
\end{figure}

Trivially, every simple closed curve has defect zero. Straightforward case analysis~\cite{p-icfgd-98} implies that any single homotopy move changes the defect of a curve by at most $2$:
\begin{itemize}\itemsep0pt
\item A $\arc10$ move leaves the defect unchanged.
\item A $\arc20$ move decreases the defect by $2$ if the two disappearing vertices are interleaved, and leaves the defect unchanged otherwise.
\item A $\arc33$ move increases the defect by $2$ if the three vertices before the move contain an even number of interleaved pairs, and decreases the defect by $2$ otherwise.
\end{itemize}
The various cases are illustrated in Figure \ref{F:defect-moves}. 
Theorem \ref{Th:homotopy} now has the following immediate corollary:

\begin{corollary}
\label{C:lowerbound}
$X(γ) ≥ H(γ) ≥ \Abs{δ(γ)}/2$ for every closed curve $γ$.
\end{corollary}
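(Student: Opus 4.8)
The plan is to read off the statement directly from Theorem~\ref{Th:homotopy} together with the per-move effect on defect recorded in the three bullet points of Section~\ref{SS:defect}. The first inequality $X(γ)\ge H(γ)$ is literally Theorem~\ref{Th:homotopy}, so nothing new is needed there. For the second inequality $H(γ)\ge\Abs{δ(γ)}/2$, the idea is a telescoping / triangle-inequality argument: a shortest reducing sequence drives the defect from $δ(γ)$ down to $0$, and each step moves the defect by at most $2$, so the number of steps is at least $\Abs{δ(γ)}/2$.

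Concretely, I would fix a minimum-length sequence of $H(γ)$ homotopy moves that reduces $γ$, and write $γ=γ_0,γ_1,\dots,γ_{H(γ)}$ for the successive curves. Since homotopy moves are homotopies of a single immersed circle, each $γ_i$ is again a (connected) generic closed curve, and the reduced curve $γ_{H(γ)}$ is therefore a single simple closed curve; by the trivial observation that simple closed curves have defect zero, $δ(γ_{H(γ)})=0$. Next I would invoke Polyak's case analysis, already stated just before the corollary: a $\arc10$ move leaves the defect unchanged, a $\arc20$ move changes it by $0$ or $-2$, and a $\arc33$ move changes it by $+2$ or $-2$; in all cases $\Abs{δ(γ_i)-δ(γ_{i-1})}\le 2$. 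Summing over $i$ and applying the triangle inequality gives
\[
	\Abs{δ(γ)} = \Abs{δ(γ_0)-δ(γ_{H(γ)})} \le \sum_{i=1}^{H(γ)}\Abs{δ(γ_i)-δ(γ_{i-1})} \le 2\,H(γ),
\]
which rearranges to $H(γ)\ge\Abs{δ(γ)}/2$. Chaining with Theorem~\ref{Th:homotopy} yields $X(γ)\ge H(γ)\ge\Abs{δ(γ)}/2$.

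Honestly, I do not expect a genuine obstacle here: the corollary is pure bookkeeping on top of Theorem~\ref{Th:homotopy} and the (already granted) computation of how each homotopy move perturbs the defect. The only point that deserves a moment's care is confirming that the reduced curve really is simple—so that its defect vanishes and the endpoint of the telescoping sum is $0$—which follows because a homotopy move cannot disconnect the image of a single circle.
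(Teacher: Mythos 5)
Your proposal is correct and matches the paper's intent exactly: the paper presents this as an immediate consequence of Theorem~\ref{Th:homotopy} together with the listed per-move changes to defect, and your telescoping argument (defect starts at $δ(γ)$, ends at $0$ for the simple reduced curve, and changes by at most $2$ per move) is precisely the bookkeeping the paper leaves implicit.
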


\begin{figure}
\centering\footnotesize
\def\arraystretch{1.25}
\begin{tabular}{c|c:cc:cc}
	Move
	& $1\arcto 0$ & \multicolumn{2}{c:}{$2\arcto 0$} & \multicolumn{2}{c}{$3\arcto 3$}
	\\ \hline
	$\begin{matrix} γ \\ \raisebox{-.5\height}{\rotatebox{90}{$\Longleftarrow$}} \\ γ' \end{matrix}$
	&
	\raisebox{-.5\height}{\includegraphics[scale=0.25]{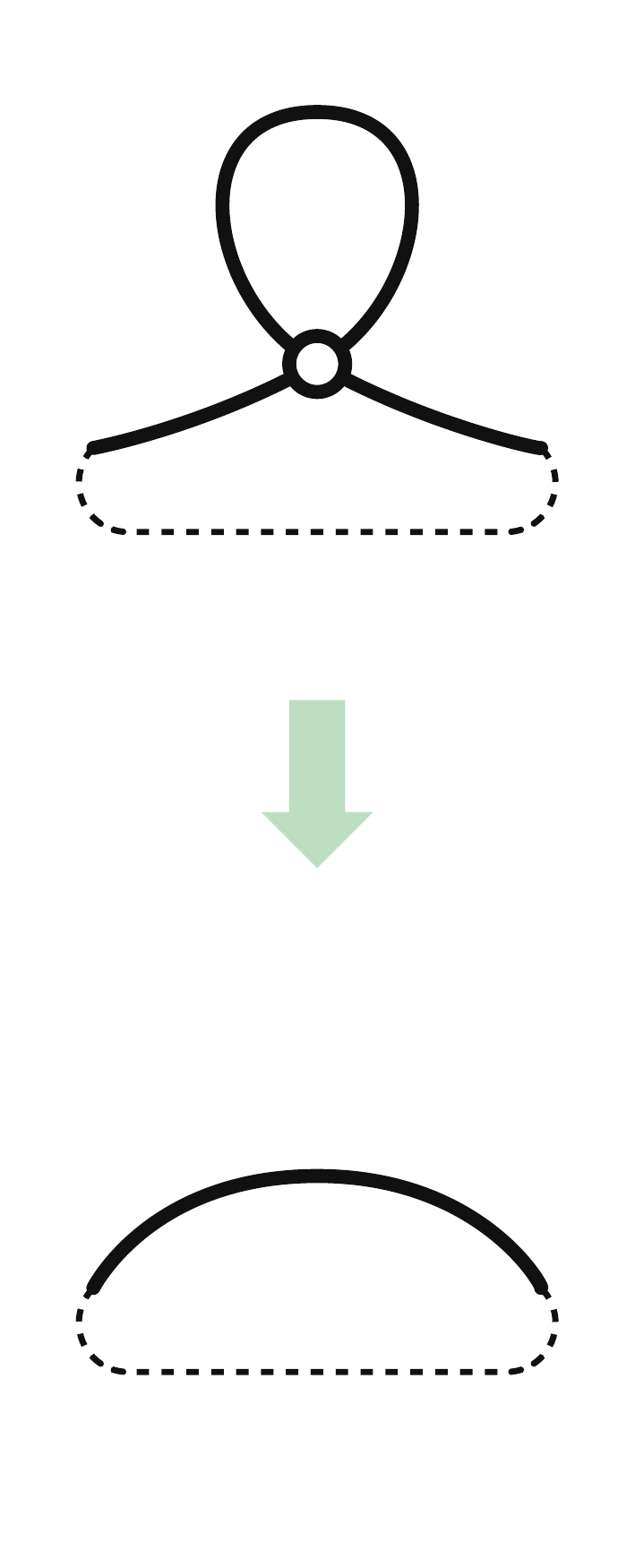}}
	&
	\raisebox{-.5\height}{\includegraphics[scale=0.25]{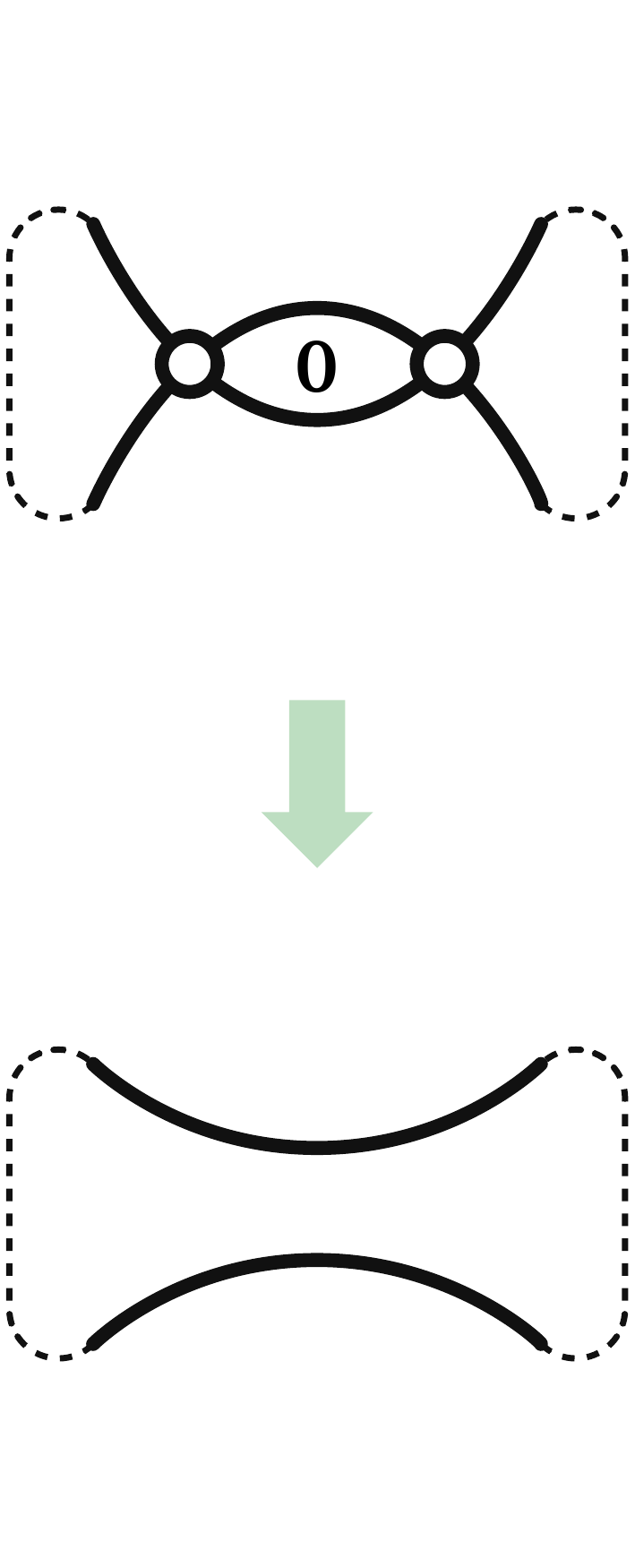}}
	&
	\raisebox{-.5\height}{\includegraphics[scale=0.25]{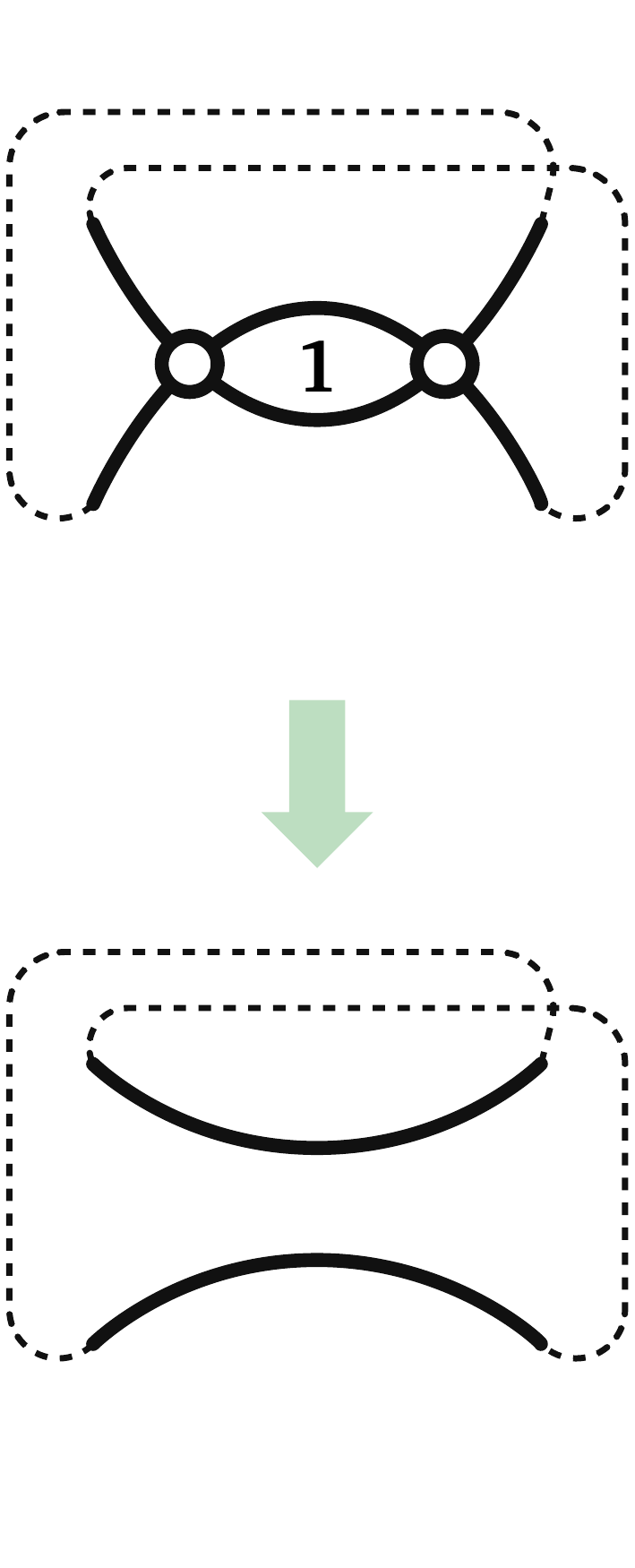}}
	&
	\raisebox{-.5\height}{\includegraphics[scale=0.25]{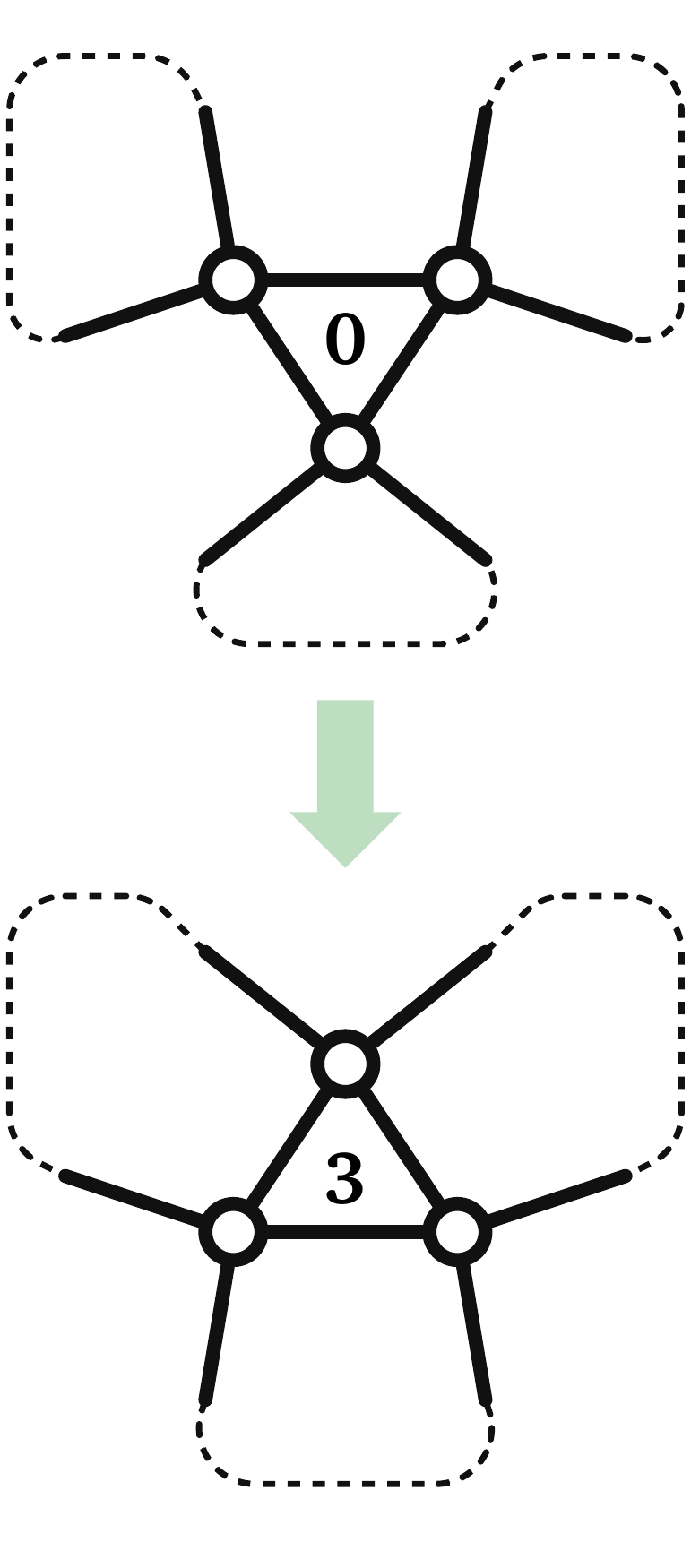}}
	&
	\raisebox{-.5\height}{\includegraphics[scale=0.25]{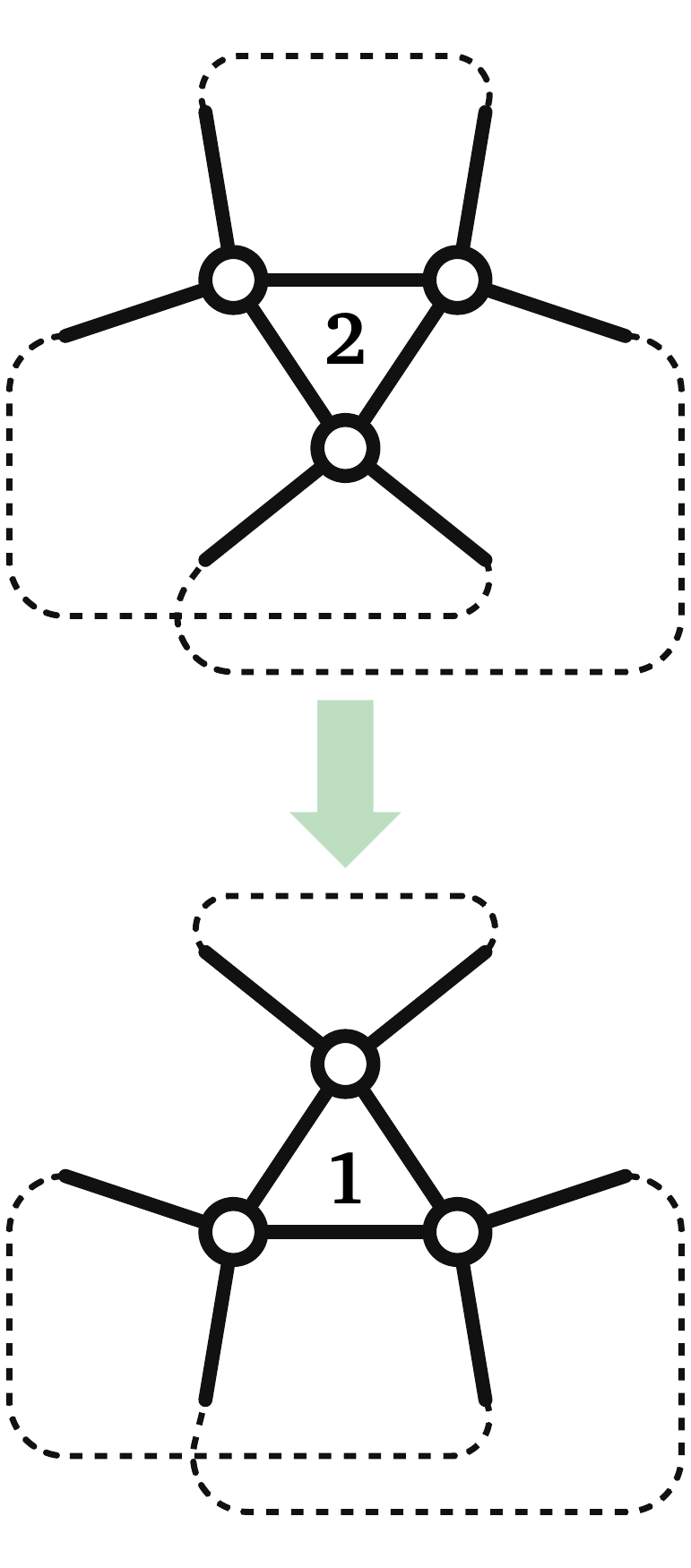}}
\\ \hline
	$δ(γ')-δ(γ)$
	& $0$ & $0$ & $-2$ & $+2$ & $+2$
\end{tabular}
\caption{Changes to the defect incurred by homotopy moves. Numbers indicate how many pairs of vertices in the figure are interleaved; dashed lines indicate the order in which the arcs are traversed.}
\label{F:defect-moves}
\end{figure}

Arnold~\cite{a-tipcc-94, a-pctip-94} originally defined two related curve invariants $\emph{St}$ (“strangeness”) and $J^+$ by their changes under $\arc20$ and $\arc33$ homotopy moves, without giving explicit formulas.  Specifically, $\arc33$ moves change strangeness by $\pm 1$ as shown in Figure \ref{F:defect-moves} but do not affect $J^+$; $\arc20$ moves change $J^+$ by either~$0$ or $2$ as shown in Figure \ref{F:defect} but do not affect strangeness.  Aicardi~\cite{a-tc-94} later proved that the linear combination $2\emph{St} + J^+$ is unchanged under $\arc10$ moves; Arnold dubbed this linear combination “defect”.  Shumakovich~\cite{s-efspc-95,s-efspc-96} proved that the strangeness of an $n$-vertex planar curve lies between $-\floor{n(n-1)/6}$ and $n(n+1)/2$, and that these bounds are exact.  (Nowik's $Ω(n^2)$ lower bound for regular homotopy moves~\cite{n-cpsc-09} follows immediately from Shumakovich's construction.)  However, the curves with extremal strangeness actually have defect zero.

%

\subsection{Flat Torus Knots}
\label{SS:torus-knots}

Following Hayashi \etal~\cite{hh-msrmd-11,hhsy-musrm-12} and Even-Zohar \etal~\cite{ehln-irkl-14}, we now describe an infinite family of curves with absolute defect $Ω(n^{3/2})$. For any relatively prime positive integers $p$ and $q$, let \EMPH{$T(p,q)$} denote the curve with the following parametrization, where $θ$ runs from $0$ to $2π$:
\[
	T(p,q)(θ) = \big((\cos (qθ)+2) \cos(pθ),~ (\cos (qθ)+2) \sin(pθ)\big).
\]
The curve $T(p,q)$ winds around the origin $p$ times, oscillating $q$ times between two concentric circles and crossing itself exactly $(p-1)q$ times.  We call these curves \EMPH{flat torus knots}.  The flat torus knot $T(p,q)$ is also the medial graph of the $\floor{p/2}\times q$ cylindrical grid, with an additional central vertex if $p$ is odd.  For any $p\le \floor{q/2}$, the flat torus knot $T(p,q)$ is isomorphic to the regular star polygon with Schläfli symbol $\set{q/p}$.

\begin{figure}[htb]
\centering
\includegraphics[width=2in]{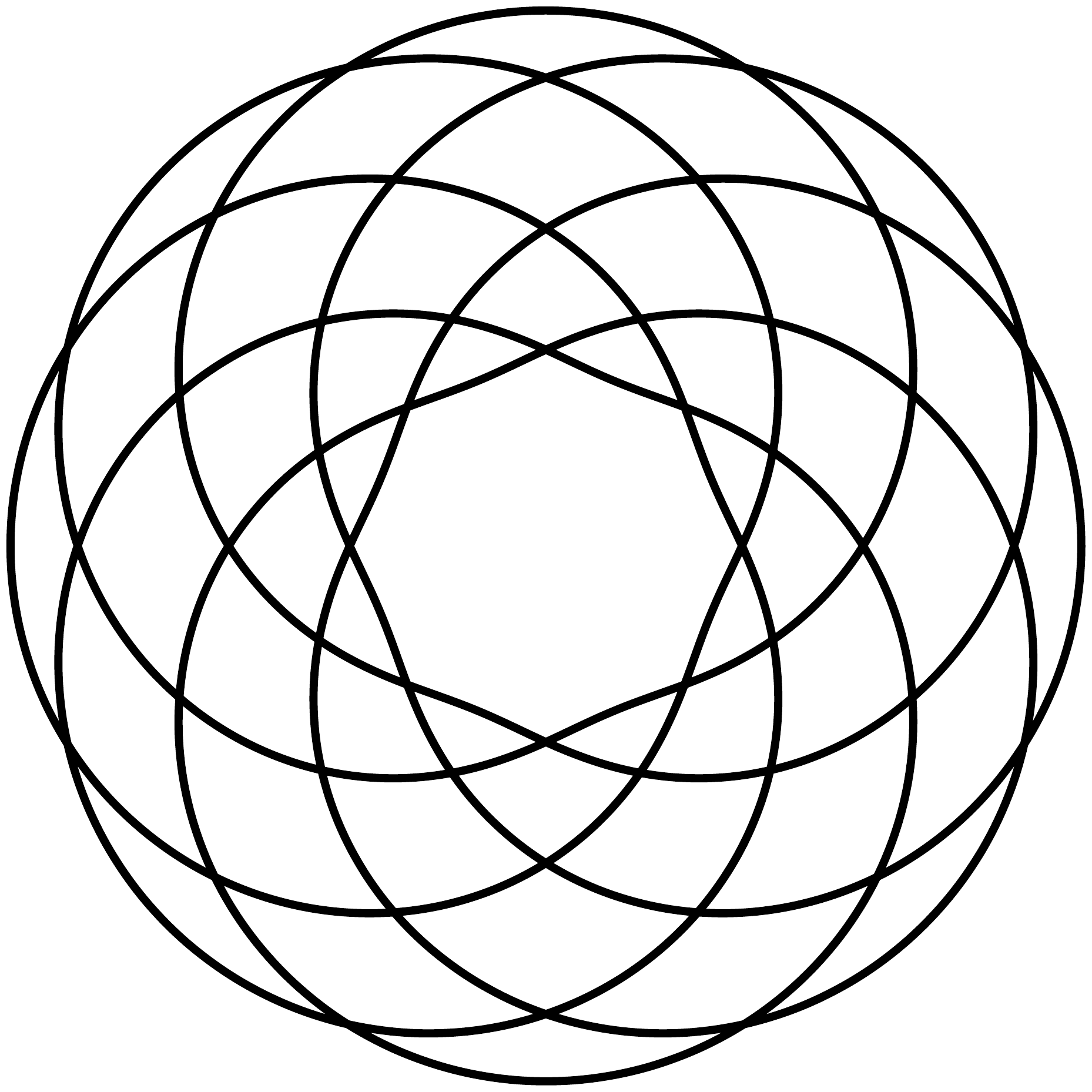}
\qquad
\includegraphics[width=2in]{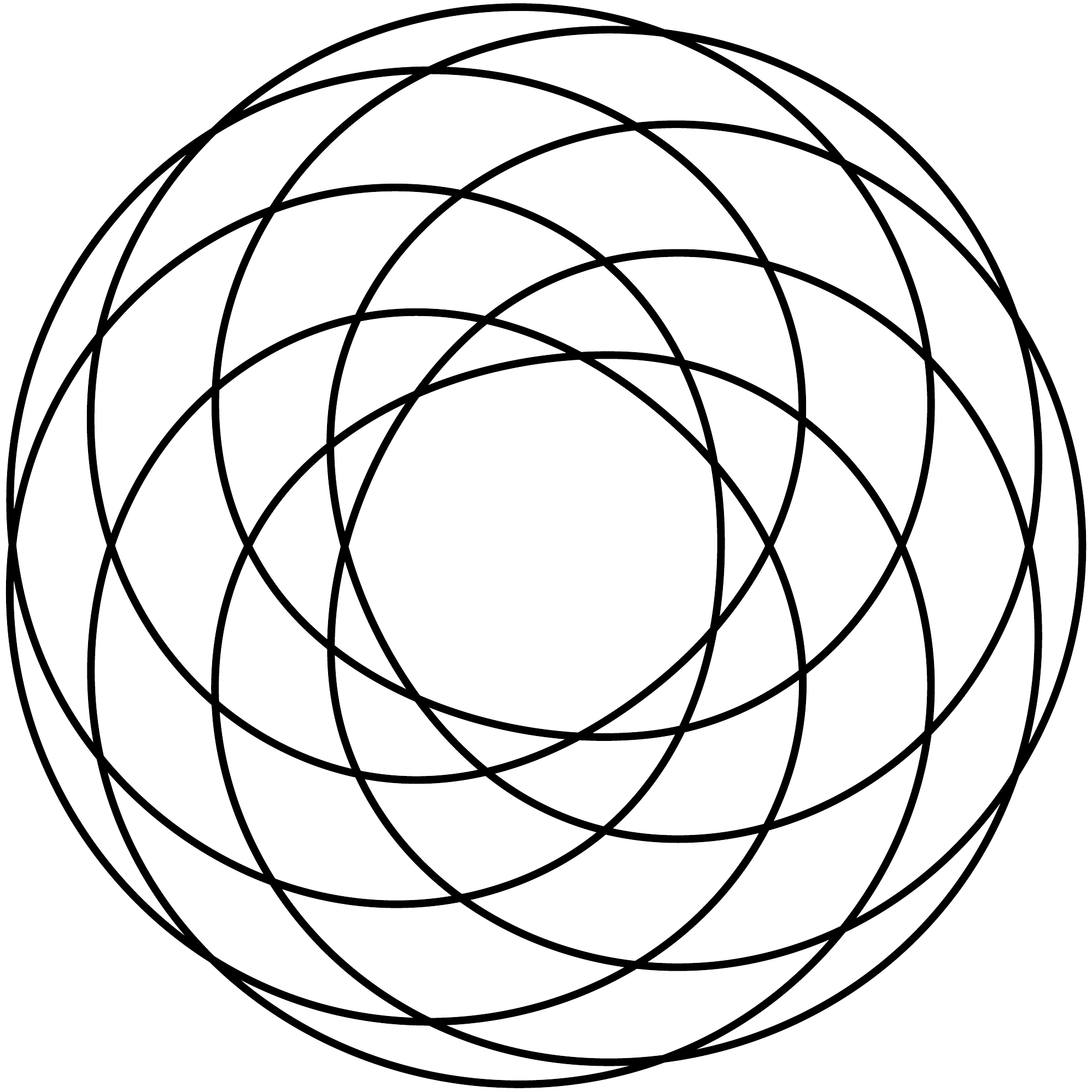}
\caption{The flat torus knots $T(7,8)$ and $T(8,7)$.}
\end{figure}

Hayashi \etal~\cite[Proposition~3.1]{hhsy-musrm-12} proved that for any integer $q$, the flat torus knot $T(q+1,q)$ has defect $-2\binom{q}{3}$.  Even-Zohar \etal~\cite{ehln-irkl-14} used a star-polygon representation of the curve $T(p, 2p+1)$ as the basis for a universal model of random knots; using our terminology and notation, they proved that $δ(T(p, 2p+1)) = 4\binom{p+1}{3}$ for any integer $p$.  Our results in this section simplify and generalize both of these results.

For purposes of illustration, we cut each torus knot $T(p,q)$ open into a “flat braid” consisting of $p$ $x$-monotone paths, which we call \emph{strands}, between two fixed diagonal lines.  All strands are directed from left to right.

\begin{lemma}
\label{L:braid-wide}
$δ(T(p, ap+1)) = 2a \binom{p+1}{3}$ for all positive integers $a$ and $p$.
\end{lemma}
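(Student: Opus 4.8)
The plan is to compute the defect of $T(p, ap+1)$ directly from Polyak's formula $δ(γ) = -2\sum_{x\between y}\sgn(x)\sgn(y)$ by understanding the combinatorial structure of the crossings in the ``flat braid'' picture. The curve $T(p,ap+1)$ has $p$ strands, and the braid closes up by connecting strand $i$ to strand $i+1 \pmod p$ — the $+1$ in the second parameter is exactly what makes the closure a single cyclic shift, so the image is a single closed curve. Because $q = ap+1$, as $θ$ runs once around, the radial coordinate $\cos(qθ)+2$ oscillates $ap+1$ times while the angular coordinate sweeps $p$ full turns; in each of the $p$ angular turns there are roughly $a$ complete radial oscillations plus a shared fractional one, so each ordered pair of strands crosses a controlled number of times. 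I would first set up coordinates carefully: label the strands $1,\dots,p$ by their vertical order at the left boundary, track how the $q$ oscillations distribute the crossings among strand pairs, and record for each crossing (i) which pair of strands it lies on and (ii) its sign. The key structural fact to extract is that all crossings have the \emph{same} sign — this is typical for these ``alternating star-polygon'' curves — so that $\sgn(x)\sgn(y) = +1$ for every interleaved pair, and $δ(γ) = -2 \cdot(\text{number of interleaved pairs of vertices})$, up to an overall sign I'd fix at the end. Actually, since the target $2a\binom{p+1}{3}$ is positive, either the signs conspire to give a net negative contribution, or I've mislabeled the orientation; I'd sort this out by checking the base case $T(2, a\cdot2+1)$ explicitly.

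**The core computation** is then a counting problem: among the $(p-1)q = (p-1)(ap+1)$ crossings, how many interleaved pairs are there, weighted by sign? Rather than count pairs of crossings directly, I would exploit the layered structure. Two crossings are interleaved iff their four endpoints alternate around $S^1$; since the parametrization is explicit, ``position along the curve'' is just the parameter $θ$, so interleaving is a concrete condition on the four $θ$-values. I expect the crossings to organize into groups indexed by which pair $\{i,j\}$ of strands they involve, with crossings on strand-pair $\{i,j\}$ occurring in a predictable interval pattern. The number of interleaved pairs should then decompose as a sum over triples or over nested intervals. Given the answer $2a\binom{p+1}{3} = 2a \cdot \frac{(p+1)p(p-1)}{6}$, I anticipate the count is $a\binom{p+1}{3}$ interleaved pairs (before the factor $-2$), which strongly suggests a bijection with something like ``choose $3$ strands out of $p+1$, then choose one of $a$ oscillation-levels.'' So I would look for a direct combinatorial identification: each interleaved pair of crossings corresponds to a choice of three ``tracks'' (thinking of the $p$ strands plus a virtual boundary track, or the $p+1$ gaps) together with which of the $a$ repeated oscillations it sits in. Proving such a bijection — or equivalently, setting up and evaluating the nested sum — is the heart of the argument.

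**The main obstacle** I expect is bookkeeping the signs and the interleaving condition simultaneously without drowning in index arithmetic. The interleaving relation $x \between y$ is a statement about cyclic order of \emph{four} parameter values, and when $q = ap+1$ the crossing parameters are not evenly spaced in an obvious way — the ``$+1$'' creates a slow phase drift across the $p$ angular turns, which is precisely why this case (as opposed to general $q$) has a clean closed form, but it also makes the explicit $θ$-coordinates of the crossings a little awkward. I would manage this by passing to a purely combinatorial model: replace the curve by its Gauss code / chord diagram, show the chord diagram of $T(p,ap+1)$ has a self-similar or ``$a$-fold thickened'' structure built from that of $T(p,2)$ (or $T(p,p+1)$), and then either (a) do induction on $a$, peeling off one oscillation layer and tracking how the defect changes, or (b) directly invoke the case $a=1$ proved via Hayashi et al.\ and Even-Zohar et al.\ as cited, and show each extra layer contributes exactly $2\binom{p+1}{3}$ to the defect. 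Approach (a)/(b) reduces the problem to a single ``layer increment'' computation — showing that going from $T(p, ap+1)$ to $T(p, (a+1)p+1)$ adds $\binom{p+1}{3}$ new same-sign interleaved pairs on net — which is far more tractable than the full double sum and is what I'd ultimately write up.
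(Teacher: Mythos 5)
Your fallback strategy---induction on $a$, peeling off one ``layer'' of $p(p-1)$ crossings at a time and showing each layer contributes exactly $2\binom{p+1}{3}$ to the defect---is precisely the skeleton of the intended proof (whose base case is $a=0$: the curve $T(p,1)$ reduces by loop-removal moves alone, so its defect is zero). But you never supply the increment computation, which you yourself identify as the heart of the argument, and the one mechanism you propose for it rests on a false premise. The crossings of $T(p,ap+1)$ do \emph{not} all have the same sign under the convention relevant to Polyak's formula. That convention is the basepoint-dependent Gauss sign (does the \emph{first} traversal through the vertex cross the \emph{second} from right to left?), not the rotation-invariant crossing sign of an oriented star-polygon diagram, and the two disagree here. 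The trefoil projection $T(2,3)$ already refutes the claim: its three vertices are pairwise interleaved, so if all signs agreed the defect would be $-2\cdot 3=-6$, whereas the lemma (with $a=1$, $p=2$) gives $+2$; no choice of basepoint or orientation repairs this, since defect is independent of both. Consequently ``count interleaved pairs and multiply by $-2$'' is not a valid simplification, and the sign bookkeeping you hoped to avoid is unavoidable.

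The paper sidesteps any global sign analysis. Rather than comparing the chord diagrams of $T(p,ap+1)$ and $T(p,(a-1)p+1)$, it \emph{physically straightens} the leftmost block of the flat braid one strand at a time, using $\binom{p-1}{2}$ triangle moves and $p-1$ bigon moves per strand, and applies the local rules for how each homotopy move changes the defect; each of these particular moves decreases it by exactly $2$. The sign/interleaving analysis is then confined to the two or three vertices involved in each move, and the total change telescopes to $2\sum_{k=2}^{p}\binom{k}{2}=2\binom{p+1}{3}$ per layer. To complete your write-up you should either adopt this move-by-move accounting or genuinely carry out the mixed-sign interleaving count in the chord diagram; as it stands, the central step is missing and the structural claim guiding it is wrong. (Also, the cited results do not quite hand you a base case: Hayashi \etal\ compute $\delta(T(q+1,q))$, which is the other family, and Even-Zohar \etal\ compute $\delta(T(p,2p+1))$, i.e., the case $a=2$.)
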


\begin{proof}
The curve $T(p, 1)$ can be reduced using only $\arc10$ moves, so its defect is zero.  For any integer $a\ge 0$, we can reduce $T(p, ap-1)$ to $T(p, ({a-1})p-1)$ by straightening the leftmost block of $p(p-1)$ crossings in the flat braid representation, one strand at a time.  Within this block, each pair of strands in the flat braid intersect twice.  Straightening the bottom strand of this block requires the following $\binom{p}{2}$ moves, as shown in Figure \ref{F:braid-wide}.

\begin{itemize}\itemsep0pt
\item
$\binom{p-1}{2}$ $\arc33$ moves pull the bottom strand downward over one intersection point of every other pair of strands.  Just before each $\arc33$ move, every pair of the three relevant vertices is interleaved, so each move decreases the defect by $2$.

\item
$(p-1)$ $\arc20$ moves eliminate a pair of intersection points between the bottom strand and every other strand.  Each of these moves also decreases the defect by $2$.
\end{itemize}

Altogether, straightening one strand decreases the defect by $\smash{2\binom{p}{2}}$.  Proceeding similarly with the other strands, 
we conclude that $δ(T(p, ap+1)) = δ(T(p, {(a-1)p+1})) + 2\binom{p+1}{3}$.  The lemma follows immediately by induction.
\end{proof}

\begin{figure}[htb]
\centering
\includegraphics[scale=0.33]{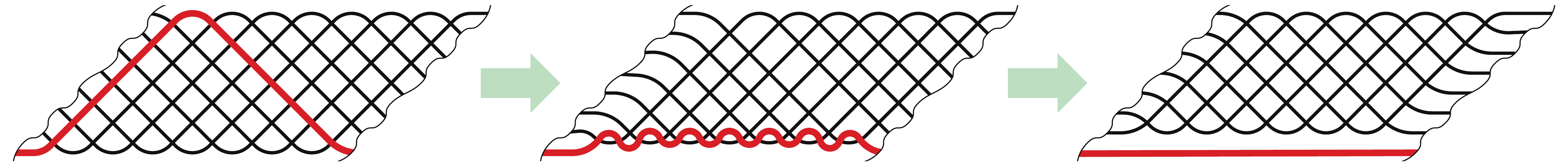}
\caption{Straightening one strand in a block of a wide flat torus knot.}
\label{F:braid-wide}
\end{figure}

\begin{lemma}
\label{L:braid-deep}
$δ(T(aq+1, q)) = -2a \binom{q}{3}$ for all positive integers $a$ and $q$.
\end{lemma}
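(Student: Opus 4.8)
The plan is to mirror the proof of Lemma~\ref{L:braid-wide}, but with a ``deep'' rather than ``wide'' flat braid, inducting on $a$. The base case is $a=0$: the curve $T(1,q)$ winds once around the origin and has no self-intersection, so it is a simple closed curve with $\delta(T(1,q)) = 0 = -2\cdot 0\cdot\binom q3$. (As a sanity check, the case $a=1$ should reproduce Hayashi \etal's identity $\delta(T(q+1,q)) = -2\binom q3$~\cite{hhsy-musrm-12}.) For the inductive step I would exhibit an explicit sequence of homotopy moves that reduces $T(aq+1,q)$ to $T((a-1)q+1,q)$ and track its effect on the defect; together with the inductive hypothesis $\delta(T((a-1)q+1,q)) = -2(a-1)\binom q3$ this gives the claim, provided the reduction changes the defect by exactly $+2\binom q3$ --- note that the defect must \emph{rise} here, since $-2a\binom q3 \le -2(a-1)\binom q3 \le 0$.

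Concretely, I would cut $T(aq+1,q)$ open not into $p$ monotone strands as in Lemma~\ref{L:braid-wide} but into the $q$ arcs between consecutive maxima of the oscillation; each such strand spirals around the origin roughly $a$ times, running inward and then back outward. In this flat braid the $aq^2$ crossings fall into $a$ identical blocks together with a trivial residue $T(1,q)$. Each block consists of one removable loop on each of the $q$ strands, together with a full $q$-strand sub-braid in which every pair of strands crosses exactly twice; since $q + 2\binom q2 = q^2$, this accounts for exactly the $q^2$ crossings removed in passing from $T(aq+1,q)$ to $T((a-1)q+1,q)$. One block is straightened by $q$ $\arc{1}{0}$ moves deleting the loops, followed by the same strand-by-strand untangling as in Lemma~\ref{L:braid-wide}: $\binom q2$ $\arc{2}{0}$ moves in all, and --- using the hockey-stick identity $\sum_k\binom k2 = \binom q3$ --- a total of $\binom q3$ $\arc{3}{3}$ moves, leaving $T((a-1)q+1,q)$.

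The heart of the argument, and the step I expect to be the main obstacle, is the defect bookkeeping, because the signs must come out opposite to those in Lemma~\ref{L:braid-wide}, where the corresponding moves all \emph{decreased} the defect. I would verify, from local pictures analogous to Figure~\ref{F:braid-wide}, that each $\arc{1}{0}$ loop deletion is defect-neutral; that each $\arc{2}{0}$ move deletes a \emph{non}-interleaved pair of vertices (so it too is defect-neutral); and that just before each $\arc{3}{3}$ move the three participating vertices contain an \emph{even} number of interleaved pairs, so that move raises the defect by $2$. The underlying reason I would aim to pin down is that in this deep braid the interacting strands carry mixed orientations --- each arc is traversed first inward and then outward --- which flips the interleaving parities relative to the uniformly directed strands of Lemma~\ref{L:braid-wide}. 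Granting these three local facts, one block changes the defect by exactly $+2\binom q3$, which is what is needed. A subsidiary point requiring the same kind of direct inspection of the parametrization $T(aq+1,q)$ is the claimed block structure itself --- that $T(aq+1,q)$ really admits this flat-braid form with $q$ spiraling strands and $a$ blocks, and that deleting one block yields $T((a-1)q+1,q)$ rather than some other curve --- exactly paralleling the corresponding (also not fully detailed) claim in the proof of Lemma~\ref{L:braid-wide}.
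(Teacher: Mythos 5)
Your overall strategy is the same as the paper's: induct on $a$, exhibit an explicit sequence of homotopy moves taking $T(aq+1,q)$ to $T((a-1)q+1,q)$, and show that this sequence raises the defect by exactly $2\binom{q}{3}$, with the per-block accounting ($q$ defect-neutral $\arc{1}{0}$ moves, $\binom{q}{2}$ defect-neutral $\arc{2}{0}$ moves, and $\binom{q}{3}$ $\arc{3}{3}$ moves each contributing $+2$) matching the paper's exactly. However, the concrete move sequence you propose is not executable in the order you give it. The $q$ innermost loops of $T(aq+1,q)$ are \emph{not empty}: each one is the turnaround of a strand that has spiraled past the other $q-1$ strands, so it encloses $\binom{q-1}{2}$ crossings of those strands and cannot be deleted by a $\arc{1}{0}$ move at the outset. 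The paper therefore processes one loop at a time, first \emph{emptying} it with $\binom{q-1}{2}$ $\arc{3}{3}$ moves (sliding its boundary over the enclosed crossings) and $q-1$ $\arc{2}{0}$ moves (pulling it across the other strands), and only then deleting the now-empty loop with a single $\arc{1}{0}$ move; your ``delete all $q$ loops first, then untangle the residual braid'' ordering has the two phases backwards. This is repairable---the total move counts and defect increments come out the same---but as written the first $q$ moves of your sequence are illegal.

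The second issue is that the heart of the lemma---checking that each $\arc{2}{0}$ move here eliminates a \emph{non-interleaved} pair and that each $\arc{3}{3}$ triple contains an \emph{even} number of interleaved pairs, so that the signs come out opposite to those in Lemma~\ref{L:braid-wide}---is exactly the part you defer (``I would verify, from local pictures\dots''). You correctly identify both the required outcome and the mechanism (the two branches meeting at each eliminated crossing run in opposite directions, unlike the uniformly directed strands of the wide case), and the paper's own justification is little more than a pointer to Figure~\ref{F:braid-deep}; but without those local checks the sign of the answer, which is the entire content of the lemma relative to Lemma~\ref{L:braid-wide}, remains unproven.
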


\begin{proof}
The curve $T(1,q)$ is simple, so its defect is trivially zero.  For any positive integer $a$, we can transform $T(aq+1, q)$ into $T((a-1)q+1, q)$ by incrementally removing the innermost $q$ \emph{loops}.  We can remove the first loop using $\binom{q}{2}$ homotopy moves, as shown in Figure \ref{F:braid-deep}.  (The first transition in Figure~\ref{F:braid-deep} just reconnects the top left and top right endpoints of the flat braid.)
\begin{itemize}
\item
$\binom{q-1}{2}$ $\arc33$ moves pull the left side of the loop to the right, over the crossings inside the loop.  Just before each $\arc33$ move, the three relevant vertices contain two interleaved pairs, so each move \emph{increases} the defect by $2$.
\item
$q-1$ $\arc20$ moves pull the loop over $q-1$ strands.  The strands involved in each move are oriented in opposite directions, so these moves leave the defect unchanged.
\item
Finally, we can remove the loop with a single $\arc10$ move, which does not change the defect.
\end{itemize}

Altogether, removing one loop increases the defect by $\smash{2\binom{q-1}{2}}$.  Proceeding similarly with the other loops, 
we conclude that $δ(T(aq+1, q)) = δ(T((a-1)q+1, q)) - 2 \binom{q}{3}$.  The lemma follows immediately by induction.
\end{proof}

\begin{figure}[htb]
\centering
\includegraphics[scale=0.33]{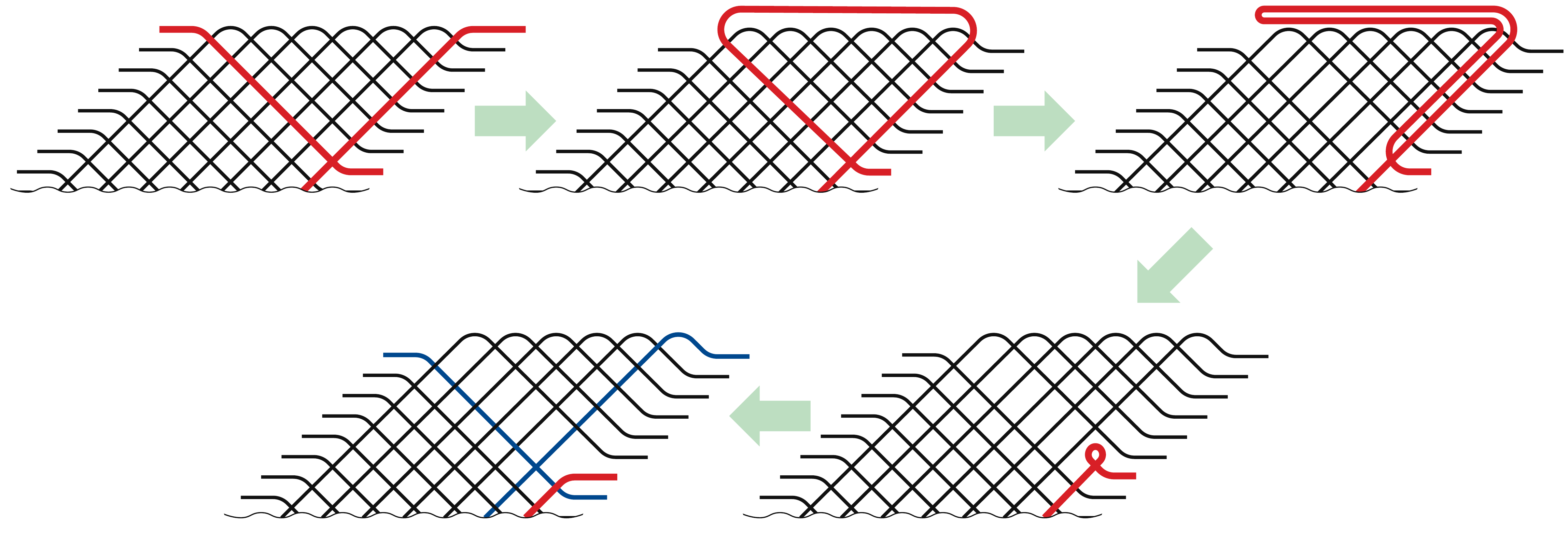}\\[-2ex]
\caption{Removing one loop from the innermost block of a deep flat torus knot.}
\label{F:braid-deep}
\end{figure}

\begin{corollary}
For every positive integer $n$, there are closed curves with $n$ vertices whose defects are  $n^{3/2}/3 - O(n)$ and $-n^{3/2}/3 + O(n)$.
\end{corollary}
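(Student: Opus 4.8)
The plan is to instantiate the two flat-torus-knot families from Lemmas~\ref{L:braid-wide} and~\ref{L:braid-deep} at $a=1$ and then correct the number of vertices by splicing in empty loops. For the curves of large positive defect, given $n$ I would let $p$ be the largest integer with $p^2-1\le n$ and begin with $T(p,p+1)$; by Lemma~\ref{L:braid-wide} this curve has defect $2\binom{p+1}{3}=(p^3-p)/3$, and by the crossing count $(p-1)q$ recorded at the start of Section~\ref{SS:torus-knots} it has exactly $(p-1)(p+1)=p^2-1$ vertices. Since $p^2-1\le n<(p+1)^2-1=p^2+2p$, the shortfall $n-(p^2-1)$ is at most $2p=O(\sqrt n)$, so I would make up the difference with that many $\arc01$ moves, each of which inserts one empty loop: this adds a single vertex, and since a $\arc10$ move leaves the defect unchanged it does not alter the defect (equivalently, a monogon vertex is interleaved with nothing, so it contributes nothing to Polyak's formula). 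The resulting curve has exactly $n$ vertices and defect $(p^3-p)/3$.

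The remaining estimate is routine. From $p^2-1\le n<p^2+2p$ one gets $p\ge\sqrt n-1$, hence $p^3\ge(\sqrt n-1)^3=n^{3/2}-O(n)$, while $p\le\sqrt{n+1}=O(\sqrt n)$; together these give $\delta=(p^3-p)/3\ge n^{3/2}/3-O(n)$, as required. The construction of curves with large negative defect is entirely symmetric: let $q$ be the largest integer with $q^2\le n$, begin with $T(q+1,q)$, which by Lemma~\ref{L:braid-deep} has defect $-2\binom q3=-(q^3-3q^2+2q)/3$ and $\bigl((q+1)-1\bigr)q=q^2$ vertices, and pad with $n-q^2\le 2q+1=O(\sqrt n)$ empty loops; the same estimate gives $\delta\le-n^{3/2}/3+O(n)$. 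The finitely many values of $n$ too small for $p$ or $q$ to be positive contribute nothing, as the constants hidden in the $O(n)$ terms can be chosen large enough to absorb them.

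I expect no real obstacle here: the substantive work lives in Lemmas~\ref{L:braid-wide} and~\ref{L:braid-deep}, which are already established, and the corollary is a matter of choosing parameters and bookkeeping the two facts that $T(p,q)$ has $(p-1)q$ vertices and that empty-loop insertions are defect-preserving. One could avoid the padding by letting $a$ grow with $p$ (respectively $q$) instead, but inserting empty loops yields the cleanest statement valid for every~$n$.
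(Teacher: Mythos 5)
Your proposal is correct and follows essentially the same route as the paper: set $a=1$ in Lemmas~\ref{L:braid-wide} and~\ref{L:braid-deep} to get $T(\lfloor\sqrt{n+1}\rfloor,\lfloor\sqrt{n+1}\rfloor+1)$ and $T(\lfloor\sqrt{n}\rfloor+1,\lfloor\sqrt{n}\rfloor)$, then pad with defect-preserving $\arc01$ moves to reach exactly $n$ vertices. Your write-up just makes the vertex counts and the final asymptotic estimate more explicit than the paper does.
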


\begin{proof}
The lower bound follows from the previous lemmas by setting $a=1$.  If $n$ is a prefect square, then the flat torus knot $T(\sqrt{n}+1, \sqrt{n})$ has $n$ vertices and defect $\smash{-2\binom{\sqrt{n}}{3}}$.  If $n$ is not a perfect square, we can achieve defect ${-2\binom{\floor{\sqrt{n}}}{3}}$ by applying $\arc01$ moves to the curve $T(\floor{\sqrt{n}}+1, \floor{\sqrt{n}})$.  Similarly, we obtain an $n$-vertex curve with defect $\smash{2\binom{\floor{\sqrt{n+1}}+1}{3}}$ by adding loops to the curve $T(\floor{\sqrt{n+1}}, \floor{\sqrt{n+1}}+1)$.
\end{proof}

Corollary \ref{C:lowerbound} now immediately implies the following lower bounds:

\begin{corollary}
For every positive integer $n$, there is a closed curve with $n$ vertices that requires at least $n^{3/2}/6 - O(n)$ homotopy moves to reduce to a simple closed curve.
\end{corollary}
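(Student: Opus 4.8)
The plan is to read the bound straight off the preceding corollary together with Corollary~\ref{C:lowerbound}; beyond a one-line calculation there is nothing to do. Fix a positive integer $n$. By the preceding corollary there is a closed curve $\gamma$ with exactly $n$ vertices and $\delta(\gamma) = n^{3/2}/3 - O(n)$ --- for instance $\gamma = T(\sqrt n + 1, \sqrt n)$ when $n$ is a perfect square, and an appropriately loop-padded flat torus knot obtained via $\arc01$ moves otherwise, exactly as constructed there. I would simply invoke that existence statement rather than reprove it.

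Now apply Corollary~\ref{C:lowerbound}, which gives $H(\gamma) \ge \Abs{\delta(\gamma)}/2$. Since $\Abs{\delta(\gamma)} = n^{3/2}/3 - O(n)$, this yields $H(\gamma) \ge n^{3/2}/6 - O(n)$, which is the claimed bound. The one point worth making explicit is that, because $\gamma$ is the image of a single generic immersed circle and every homotopy move preserves that property, reducing $\gamma$ in the sense of Section~\ref{S:reduction} (to a collection of disjoint simple closed curves) is the same as transforming $\gamma$ into a single simple closed curve; hence the quantity $H(\gamma)$ is exactly the number named in the statement.

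I do not expect any real obstacle: the statement is an immediate consequence of two results already established. The only mild bookkeeping --- checking that the additive error incurred when $n$ is not a perfect square remains absorbed in the $O(n)$ term --- is already carried out inside the proof of the preceding corollary, so I would not repeat it here.
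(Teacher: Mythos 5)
Your argument is exactly the paper's: the corollary is stated there as an immediate consequence of the preceding corollary (which supplies an $n$-vertex curve of defect $n^{3/2}/3 - O(n)$) together with Corollary~\ref{C:lowerbound}'s bound $H(\gamma) \ge \Abs{\delta(\gamma)}/2$. The proposal is correct, and the extra remark about reduction of a unicursal curve meaning reduction to a single simple closed curve is a harmless clarification.
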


\begin{corollary}
For every positive integer $n$, there is a plane graph with $n$ edges that requires at least $n^{3/2}/6 - O(n)$ electrical transformations to reduce to a single vertex.
\end{corollary}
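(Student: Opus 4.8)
The plan is to transport the curve lower bound to plane graphs through the medial-graph correspondence described in Section~2. Recall that each electrical transformation in a plane graph $G$ induces exactly one medial electrical move in the medial graph $G^\times$, and that the medial graph of an isolated vertex is a single circle. Consequently, any sequence of $k$ electrical transformations that reduces $G$ to a single vertex yields, step for step, a sequence of $k$ medial electrical moves that reduces $G^\times$ to a single circle. Since electrical transformations preserve connectivity, and a connected $4$-regular plane graph that is a disjoint union of simple closed curves is necessarily a single circle, that sequence reduces $G^\times$ in the sense of Section~3; hence $k \ge X(G^\times)$.

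To choose $G$, I would invoke the preceding corollary: for every positive integer $n$ there is a closed curve $\gamma$ with exactly $n$ vertices and $\Abs{\delta(\gamma)} \ge n^{3/2}/3 - O(n)$ --- concretely, a flat torus knot such as $T(\floor{\sqrt n}+1,\floor{\sqrt n})$, adjusted to have exactly $n$ vertices by inserting a few empty loops with $\arc01$ moves, which do not change the defect. The image of $\gamma$ is a connected, unicursal, $4$-regular plane graph, and every $4$-regular plane graph is the medial graph $G^\times$ of some plane graph $G$. Because the vertices of $G^\times$ correspond bijectively to the edges of $G$, the graph $G$ has exactly $n$ edges; and connectivity of $G^\times$ forces $G$ to be connected, apart from isolated vertices that carry no edges and may be deleted.

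It remains to chain the inequalities. For this graph $G$, the number of electrical transformations needed to reduce it to a single vertex is at least $X(G^\times) = X(\gamma)$, which by Corollary~\ref{C:lowerbound} is at least $H(\gamma) \ge \Abs{\delta(\gamma)}/2 \ge n^{3/2}/6 - O(n)$. This establishes the corollary. The argument carries essentially no new technical content; the only step that warrants care is the first one, namely checking that ``reduce $G$ to a single vertex'' translates faithfully into the medial world --- both that each electrical transformation is a single medial electrical move and that the endpoints match up (an isolated vertex corresponds to a circle, and the connected curve $G^\times$ cannot reduce to more than one circle). Everything else is bookkeeping about vertex counts together with the already-proved inequality $X \ge H \ge \Abs{\delta}/2$.
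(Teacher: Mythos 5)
Your proposal is correct and follows exactly the paper's intended argument: the paper derives this corollary immediately from the preceding one via the medial-graph correspondence of Section~2 (vertices of $G^\times$ correspond to edges of $G$, and every 4-regular plane graph is a medial graph) together with the chain $X(G^\times) \ge H(G^\times) \ge \Abs{\delta(G^\times)}/2$. Your extra care about the endpoints of the reduction (isolated vertex corresponding to a circle, connectivity being preserved) is a reasonable filling-in of details the paper leaves implicit.
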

%

\subsection{Tight Bounds for Grids}

Finally, we derive tight lower bounds on the number of electrical transformations required to reduce arbitrary rectangular or cylindrical grids.  In particular, we show that Truemper's $O(p^3)$ upper bound for the $p\times p$ square grid~\cite{t-drpg-89,t-md-92} and Nakahara and Takahashi's $O(\min\set{pq^2, p^2q})$ upper bound for the $p\times q$ cylindrical grid ~\cite{nt-aafts-96} are both tight up to constant factors. 

\begin{corollary}
\label{C:cylindrical-grid}
For all positive integers $p$ and $q$, the $p\times q$ cylindrical grid requires $Ω(\min\set{p^2 q, p q^2})$ electrical transformations to reduce to a single vertex.
\end{corollary}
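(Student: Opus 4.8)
The plan is to derive the corollary from Lemma~\ref{L:braid-wide} and Lemma~\ref{L:braid-deep} by minor-monotonicity of $X$. First I would pass to medial graphs: every electrical transformation of a connected plane graph $G$ is a medial electrical move on the medial graph $G^\times$, and reducing $G$ to a single vertex is the same as reducing $G^\times$ to a circle, so the number of electrical transformations needed to reduce the $p\times q$ cylindrical grid to a single vertex is at least $X(T(2p,q))$, where $T(2p,q)$ is the medial graph of the $p\times q$ cylindrical grid. Thus it suffices to prove $X(T(2p,q)) = \Omega(\min\set{p^2q,\,pq^2})$. By Corollary~\ref{C:smoothing} together with Corollary~\ref{C:lowerbound}, every unicursal smoothing $\overline{\gamma}$ of $T(2p,q)$ satisfies $X(T(2p,q)) \ge H(\overline{\gamma}) \ge \abs{\delta(\overline{\gamma})}/2$, so it is enough to exhibit such a smoothing with $\abs{\delta(\overline{\gamma})} = \Omega(\min\set{p^2q,\,pq^2})$.

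The smoothings I would use are themselves flat torus knots. Since the smoothings of $G^\times$ are exactly the medial graphs of minors of $G$, and the $p'\times q'$ cylindrical grid is a minor of the $p\times q$ cylindrical grid for all $p'\le p$ and $q'\le q$ (delete the cycle edges of the innermost ring and then its now degree-one vertices to shrink $p$; contract one edge in each ring, deleting any parallel edge thereby created, to shrink $q$), the curve $T(2p',q')$ is a smoothing of $T(2p,q)$ for all such $p',q'$, and it is a single closed curve exactly when $\gcd(2p',q')=1$. I also need the ``deep'' variant: contracting every edge of the innermost ring of the $(m{+}1)\times q'$ cylindrical grid to a single point yields the $m\times q'$ cylindrical grid together with one extra central vertex, whose medial graph is $T(2m{+}1,q')$; hence $T(2m{+}1,q')$ is a smoothing of $T(2p,q)$ whenever $m{+}1\le p$ and $q'\le q$.

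Now I split into two cases. If $p\le q$, I use the wide family: take $q'=2ap+1$, which is automatically coprime to $2p$, and set $a=\floor{(q-1)/(2p)}$ when $q\ge 2p+1$ (so that $q'\le q$ and $a\ge 1$). Lemma~\ref{L:braid-wide} gives $\delta(T(2p,q'))=2a\binom{2p+1}{3}=\Theta(a\,p^3)$; when $q/p$ is unbounded this is $a=\Theta(q/p)$, so $\Theta(p^2q)$, and otherwise $q=\Theta(p)$ and $a\,p^3=\Theta(p^3)=\Theta(p^2q)$. When $q\le 2p$ instead take $a=1$ and $p'=\floor{(q-1)/2}\le p-1$, so $q'=2p'+1\le q$ and $\delta(T(2p',q'))=2\binom{2p'+1}{3}=\Theta(p'^3)=\Theta(q^3)=\Theta(p^2q)$, using $p=\Theta(q)$ here. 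If $q<p$, I use the deep family symmetrically: take $q'=q$ and $a$ the largest positive integer with $aq$ even and $aq/2+1\le p$, so $a=\Theta(p/q)$; then $T(aq{+}1,q)$ is the medial graph of a $\big(\tfrac{aq}{2}\times q\text{ grid}+\text{central vertex}\big)$, hence a smoothing of $T(2p,q)$, and Lemma~\ref{L:braid-deep} gives $\abs{\delta(T(aq{+}1,q))}=2a\binom{q}{3}=\Theta(a\,q^3)=\Theta(pq^2)$. In every regime the exhibited smoothing has absolute defect $\Omega(\min\set{p^2q,\,pq^2})$, which proves the corollary; the remaining degenerate cases, where $p$ or $q$ is a small constant, are trivial, since any connected plane graph requires $\Omega(1)$ moves per edge while there $\min\set{p^2q,\,pq^2}=O(pq)$.

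I expect the main difficulty to be bookkeeping rather than anything conceptual: choosing the integer parameters $a,p',q'$ so that they simultaneously satisfy the divisibility conditions forced by Lemmas~\ref{L:braid-wide} and~\ref{L:braid-deep} and the size bounds imposed by being a minor of the $p\times q$ grid, uniformly over all ratios $p/q$ and including the parity subtleties in the deep case; and verifying the central-vertex minor claim, the one place where the minor we construct is not literally a cylindrical grid. Neither is hard, but both require attention to floors, parities, and constant factors.
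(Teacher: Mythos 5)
Your proposal is correct and follows essentially the same route as the paper: identify the medial graph of the cylindrical grid with a flat torus knot, exhibit a flat torus knot of the form $T(2p,2ap+1)$ or $T(aq+1,q)$ as a connected smoothing via grid minors, and combine Lemma~\ref{L:smoothing} (or Corollary~\ref{C:smoothing}), Corollary~\ref{C:lowerbound}, and Lemmas~\ref{L:braid-wide} and~\ref{L:braid-deep}. Your parameter bookkeeping differs only in where the cases split and is, if anything, slightly more careful than the paper's in the regime $p<q\le 2p$.
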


\begin{proof}
Let $G$ denote the $p\times q$ cylindrical grid; we need to prove that $X(G^\times) = Ω(\min\set{p^2 q, p q^2})$.

First suppose $2p < q$.  Let $γ$ denote the flat torus knot $T(2p, 2ap+1)$, where $a = \floor{(q-1)/2p}$.  This curve is a connected smoothing of $G^\times$, so Lemma~\ref{L:smoothing} implies $X(G^\times) \ge X(H^\times)$.  Lemma \ref{L:braid-wide} gives us
\[
	δ(γ) = 2a\binom{2p+1}{3} = Ω(ap^3) = Ω(p^2q).
\]
The lower bound $X(H^\times) = Ω(p^2q)$ now follows by Corollary~\ref{C:lowerbound}.

The case $q \le 2p$ is similar.  Let $γ$ denote the flat torus knot $T(2aq+1, q)$, where $a = \floor{p/q}$.   This curve is a connected smoothing of $G^\times$, so Lemma~\ref{L:smoothing} implies $X(G^\times) \ge X(H^\times)$.  Lemma \ref{L:braid-deep} gives us
\[
	\Abs{\strut δ(γ)} = 2a\binom{q}{3} = Ω(aq^3) = Ω(pq^2).
\]
The lower bound $X(H^\times) = Ω(pq^2)$ now follows by Corollary~\ref{C:lowerbound}.
\end{proof}

\begin{corollary}
For all positive integers $p$ and $q$, the $p\times q$ rectangular grid requires $Ω(\min\set{p^2 q, p q^2})$ electrical transformations to reduce to a single vertex.
\end{corollary}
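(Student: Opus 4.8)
The plan is to derive the rectangular case from the cylindrical case (Corollary~\ref{C:cylindrical-grid}) by exhibiting a large cylindrical grid as a \emph{minor} of the rectangular grid and then applying monotonicity of $X$ under minors. Recall from Section~\ref{S:reduction} that a minor of a plane graph $G$ corresponds to a smoothing of the medial graph $G^\times$, so Lemma~\ref{L:smoothing} yields $X(R^\times) \ge X(C^\times)$ whenever $C$ is a connected proper minor of $R$ with $C^\times$ connected. (Quoting the treewidth bound would not suffice: for $p \le q$ the target is $\Theta(p^2 q)$, which is strictly smaller than $(pq)^{3/2}$ unless $p = \Theta(q)$, matching the fact that the treewidth $\Theta(p)$ is $o(\sqrt{n})$ for long thin grids.)

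First I would normalize: since $\min\{p^2 q, pq^2\}$ is symmetric under $p \leftrightarrow q$ and the grid is symmetric under transposition, assume $p \le q$. If $p \le 3$, the bound is $\Theta(pq)$, which is immediate --- every electrical transformation changes the number of edges by at most one, so reducing a plane graph with $m$ edges to a single vertex takes at least $m-1$ transformations, and the $p\times q$ grid has $2pq-p-q = \Theta(pq)$ edges. So assume $p \ge 4$ and set $k := \lfloor p/4 \rfloor \ge 1$.

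The heart of the argument, and the only step requiring real care, is the claim that the $p\times q$ rectangular grid $R$ contains, as a connected minor, the $k\times\ell$ cylindrical grid $C := C_\ell \times P_k$ --- that is, $k$ nested $\ell$-cycles with radial edges between consecutive cycles --- for some $\ell = 2q + O(1)$. To build the minor I would keep the $k$ outermost nested rectangular rings of the grid, where ring $\mathcal{R}_j$ bounds the concentric $(p-2j+2)\times(q-2j+2)$ subgrid, together with every grid edge joining $\mathcal{R}_j$ to $\mathcal{R}_{j+1}$. Because $p \le q$ and $j \le k \le p/4$, each $\mathcal{R}_j$ is a genuine cycle, consecutive rings have lengths differing by $8$, the innermost ring has length $\ell := 2(p+q) - 8k + 4 \in [2q+4,\, 2q+12)$, and every vertex of $\mathcal{R}_{j+1}$ has a neighbor on $\mathcal{R}_j$. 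Then I would contract each outer ring $\mathcal{R}_j$ down to $\ell$ vertices by partitioning its vertex set into $\ell$ consecutive arcs, aligned so that arc $c$ of $\mathcal{R}_j$ is radially adjacent to arc $c$ of $\mathcal{R}_{j+1}$; this alignment is possible because the radial edges between two consecutive rings meet both rings in the same cyclic order. The resulting minor is exactly $C_\ell \times P_k$. This is the classical ``nested annuli'' view of a grid used by Steinitz, Truemper, and others, and is cleanest to present with a figure.

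With the claim in hand the corollary follows: $C = C_\ell \times P_k$ is a connected proper minor of $R$ with $C^\times$ connected, so Lemma~\ref{L:smoothing} gives $X(R^\times) > X(C^\times)$, while Corollary~\ref{C:cylindrical-grid} gives $X(C^\times) = \Omega(\min\{k^2\ell,\, k\ell^2\}) = \Omega(\min\{p^2 q,\, pq^2\})$, using $k = \Theta(p)$, $\ell = \Theta(q)$, and $p \le q$. Since $X(R^\times)$ equals the minimum number of electrical transformations needed to reduce $R$ to a single vertex, this is the desired bound. I expect the minor construction of the third paragraph to be the only delicate point; the rest is bookkeeping.
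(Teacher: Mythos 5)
Your proposal is correct and follows essentially the same route as the paper, which simply observes that the $\lfloor p/3\rfloor\times\lfloor q/3\rfloor$ cylindrical grid is a minor of the $p\times q$ rectangular grid and then invokes Corollary~\ref{C:cylindrical-grid} and Lemma~\ref{L:smoothing}. Your nested-rings construction just makes explicit (with slightly different constants) the minor that the paper asserts in one line.
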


\begin{proof}
The $\floor{p/3}\times\floor{q/3}$ cylindrical grid is a minor of the $p\times q$ rectangular grid, so this lower bound follows from Corollary \ref{C:cylindrical-grid} and 
Lemma \ref{L:smoothing}.
\end{proof}

\begin{corollary}
For every positive integer $t$, every planar graph with treewidth $t$ requires $Ω(t^3)$ electrical transformations to reduce to a single vertex.
\end{corollary}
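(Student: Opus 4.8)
The plan is to chain the corollary above on rectangular grids with the planar grid-minor theorem. The key external input is the Robertson--Seymour--Thomas result~\cite{rst-qepg-94}: every planar graph with no $r\times r$ grid minor has treewidth $O(r)$, and hence every planar graph of treewidth $t$ contains the $r\times r$ rectangular grid as a minor for some $r = \Omega(t)$.

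It suffices to treat connected graphs. Electrical transformations preserve the number of connected components, so a disconnected graph cannot be reduced to a single vertex at all, and in any case a reduction sequence for $G$ restricts to a reduction sequence for each component; thus the number of transformations $G$ requires is at least the number required by its component of largest treewidth.

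So let $G$ be connected with treewidth $t$, and fix an $r\times r$ grid minor $H$ of $G$ with $r = \Omega(t)$. Since the proper smoothings of $G^\times$ are exactly the medial graphs of the proper minors of $G$, the grid $H$ can be reached from $G$ by a sequence of smoothings; carrying this out through connected intermediate graphs and applying Lemma~\ref{L:smoothing} at each step---exactly as in the proof of the preceding corollary, where the cylindrical grid is realized as a minor of the rectangular grid---gives $X(G^\times) \ge X(H^\times)$. The rectangular-grid corollary with $p = q = r$ gives $X(H^\times) = \Omega(\min\{r^2\cdot r,\, r\cdot r^2\}) = \Omega(r^3)$, so $X(G^\times) = \Omega(r^3) = \Omega(t^3)$, as claimed. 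The finitely many small values of $t$ not covered by this asymptotic argument are trivial, since a graph of positive treewidth has an edge and so requires at least one transformation.

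The one delicate point is the passage from ``$H$ is a minor of $G$'' to ``$H$ is obtained from $G$ by connected proper smoothings,'' because Lemma~\ref{L:smoothing} is stated only for a single connected smoothing. This is routine bookkeeping---fix a connected subgraph of $G$ witnessing the grid model and delete the remaining edges in an order that never disconnects the current graph---and is precisely the step already invoked without comment in the proof of the preceding corollary, so I would simply refer back to that argument. Everything else is a direct composition of results already established in the paper together with the cited grid-minor theorem.
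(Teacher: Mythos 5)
Your proposal is correct and follows essentially the same route as the paper: invoke the planar grid-minor theorem of Robertson, Seymour, and Thomas to extract an $\Omega(t)\times\Omega(t)$ grid minor, then combine the grid lower bound (Corollary~\ref{C:cylindrical-grid}) with Lemma~\ref{L:smoothing} via the correspondence between minors of $G$ and smoothings of $G^\times$. The extra care you take with connectivity and with iterating Lemma~\ref{L:smoothing} through connected intermediate smoothings is sound bookkeeping that the paper leaves implicit.
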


\begin{proof}
Every planar graph with treewidth $t$ contains an $Ω(t)\times Ω(t)$ grid minor~\cite{rst-qepg-94}, so this lower bound also follows from Corollary \ref{C:cylindrical-grid} and Lemma \ref{L:smoothing}.
\end{proof}

\section{Defect Upper Bound}
\label{S:upper-bound}

\def\Simplify{\mathbin{\circledcirc}}
\def\SimpAbove{\mathbin\Cap}
\def\SimpBelow{\mathbin\Cup}

In this section, we prove a matching $O(n^{3/2})$ upper bound on the absolute value of the defect, using a recursive inclusion-exclusion argument. Throughout this section, let $γ$ be an arbitrary non-simple closed curve, and let $n$ be the number of vertices of $γ$.

\subsection{Winding Numbers and Diameter}
\label{SS:diameter}

\def\Loopx{γ\!_x}
\def\Beforex{α_x}
\def\Afterx{ω_x}

First we derive an upper bound in terms of the diameter of the dual graph \EMPH{$γ^*$}.  (If $γ$ is the medial graph of a plane graph $G$, then $γ^*$ is the vertex-face incidence graph of $G$, otherwise known as the \emph{radial graph} of $G$.)  The upper bound $\Abs{δ(γ)} = O(n\cdot\Diam(γ^*))$ follows from Corollary \ref{C:lowerbound} and the electrical reduction algorithm of Feo and Provan~\cite{fp-dtert-93}; here we give a simpler direct proof.

We parametrize $γ$ as a function $γ\colon [0,1]\to\Real^2$, where $γ(0) = γ(1)$ is an arbitrarily chosen basepoint.  For each vertex $x$ of $γ$, let \EMPH{$\Loopx$} denote the closed subpath of $γ$ from the first occurrence of $x$ to the second. More formally, if $x = γ(u) = γ(v)$ where $0 < u < v < 1$, then $\Loopx$ is the closed curve defined by setting $\Loopx(t) \coloneqq γ((1-t)u + tv)$ for all $0 ≤ t ≤ 1$.

\begin{lemma}
For every vertex $x$, we have $\sum_{y\between x} \sgn(y) = 2\Wind(\Loopx, x) - 2\Wind(\Loopx, γ(0)) - \sgn(x)$.
\label{L:subwind}
\end{lemma}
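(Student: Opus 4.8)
The plan is to track the winding number of a point as it travels once around the loop $\Loopx$, starting and ending at the basepoint $γ(0)$, and to account for the total change by summing the local changes at each vertex crossed. Recall from the sign convention that when the moving point passes through a vertex $z$ of $γ$, its winding number (around a suitably chosen reference point) jumps by $+1$ or $-1$ according to whether the crossing is from right to left or left to right. The subtlety is that the winding number in Lemma~\ref{L:subwind} is computed about the moving basepoint $x$ itself, and $x$ lies \emph{on} the curve $\Loopx$ — indeed $\Loopx$ begins and ends at $x$ — so we should work with $\Wind(\Loopx,\cdot)$ evaluated at $x$ versus at $γ(0)$, and handle the two passages through $x$ with care.

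First I would set up the telescoping sum. Traverse $\Loopx$ from its start (the first occurrence of $x$) to its end (the second occurrence of $x$). Each vertex $z \ne x$ of $γ$ that is visited by $\Loopx$ is visited either once or twice along this subpath, depending on whether $z$ is interleaved with $x$ or not: if $z \between x$ then exactly one of the two traversals through $z$ lies inside the arc of $\gamma$ from the first $x$ to the second $x$, so $\Loopx$ passes through $z$ once; if $z$ is not interleaved with $x$, then either both traversals or neither lie in that arc, so $\Loopx$ passes through $z$ zero or two times. The point is that, for the purpose of computing the \emph{net} winding change, a vertex visited twice contributes a net change of $0$ (the two passages are of opposite type with respect to winding around a generic point, by the geometry of a transverse double point), while a vertex $z \between x$ visited once contributes exactly $\sgn(z)$ — this last identity is precisely the content of the Gauss/Alexander sign convention, relating the jump in winding number at $z$ to the combinatorial sign $\sgn(z)$. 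This is the step I expect to be the main obstacle: one must be careful that $\sgn(z)$ is defined using the orientation and basepoint of the \emph{whole} curve $γ$, whereas $\Loopx$ inherits an orientation and a (different) basepoint $x$; I would argue that the direction in which $\Loopx$ crosses the other strand through $z$ is determined by which passage of $\gamma$ through $z$ falls inside the $x$-to-$x$ arc, and that when exactly one does — i.e. when $z\between x$ — the resulting sign is exactly $\sgn(z)$ by the definition "$z$ is positive iff the first traversal crosses the second from right to left."

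Next I would account for the endpoints. The total net change of the winding number of the moving point as it goes around $\Loopx$ once and returns is $\Wind(\Loopx, x) - \Wind(\Loopx, γ(0))$ on one normalization, but because $x$ lies on $\Loopx$ and we use the averaging convention for winding numbers at points of the curve, the correct bookkeeping picks up a correction term involving $\sgn(x)$: the single passage through $x$ itself (at the start/end of the loop) is what contributes the $-\sgn(x)$, and the overall factor of $2$ on the left and on the two winding-number terms comes from the fact that we are measuring half-integer winding numbers at on-curve points and doubling to clear denominators — equivalently, from averaging over the two faces incident to $x$. So after multiplying through by $2$, the telescoping identity reads
\[
    2\sum_{z \between x} \sgn(z) \;=\; 2\Wind(\Loopx, x) - 2\Wind(\Loopx, γ(0)) - 2\sgn(x) + \sum_{z \between x}\sgn(z),
\]
wait — I would instead organize the sum so that each interleaved $z$ is counted once, giving directly $\sum_{y\between x}\sgn(y) = 2\Wind(\Loopx,x) - 2\Wind(\Loopx,γ(0)) - \sgn(x)$, with the $-\sgn(x)$ absorbing the contribution of the basepoint passage through $x$ and the factor $2$ on the winding terms coming from the averaging convention at the on-curve point $x$. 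To finish, I would verify the formula on a small example — for instance the curve of Figure~\ref{F:Alexander}, picking one vertex and reading off the interleaved row/column of the table in Figure~\ref{F:defect} and the Alexander numbers of the relevant faces — to make sure the signs and the factor of $2$ are pinned down correctly, and then present the telescoping argument cleanly with the case analysis (interleaved, non-interleaved-visited-twice, non-interleaved-visited-zero-times, and the vertex $x$ itself) as the four bullet cases.
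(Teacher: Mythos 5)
There is a genuine gap here, and it lies in the very first step of your plan. You propose to traverse $\Loopx$ itself and claim that each interleaved vertex $z\between x$ (visited once by $\Loopx$) contributes $\sgn(z)$ to the change in the winding number, while twice-visited vertices cancel. But the quantity in the lemma is $\Wind(\Loopx,\cdot)$, and a point moving \emph{along} $\Loopx$ never crosses $\Loopx$ at an interleaved vertex: at such a vertex the transverse branch belongs to the complementary arc $γ\setminus\Loopx$, which is irrelevant to $\Wind(\Loopx,\cdot)$, so the winding number does not change there at all. It is exactly the opposite of what you assert --- $\Wind(\Loopx,p)$ changes only at the self-intersections of $\Loopx$, i.e.\ at the \emph{non}-interleaved vertices. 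Moreover, a closed traversal of $\Loopx$ returns to its starting point, so the total change telescopes to zero, and it cannot produce the difference $\Wind(\Loopx,x)-\Wind(\Loopx,γ(0))$; indeed $γ(0)$ does not lie on $\Loopx$, so no traversal of $\Loopx$ ever reaches it. The garbled displayed identity followed by ``wait --- I would instead organize the sum so that \dots'' is a symptom of this: the bookkeeping genuinely does not close under your setup, and asserting the final formula at that point is not a proof.

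The correct arrangement (which is the paper's) is dual to yours: hold $\Loopx$ fixed as the curve whose winding number is measured, and move the test point along the \emph{complementary} arcs, namely $\Beforex$ from $γ(0)$ forward to $γ(u-ε)$ and $\Afterx$ from $γ(0)$ backward to $γ(v+ε)$. These two paths cross $\Loopx$ precisely at the vertices interleaved with $x$ (this is your correct observation about which passage falls inside the $x$-to-$x$ arc, used in the other direction), and each crossing changes $\Wind(\Loopx,p)$ by the sign of that vertex. Each arc therefore contributes $\Wind(\Loopx,γ(u-ε))-\Wind(\Loopx,γ(0))$ or $\Wind(\Loopx,γ(v+ε))-\Wind(\Loopx,γ(0))$, and both endpoint values equal $\Wind(\Loopx,x)-\sgn(x)/2$ by the averaging convention at $x$. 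Summing the two arcs gives the lemma; note that the factor $2$ comes from there being \emph{two} complementary arcs, not from ``doubling to clear denominators,'' and the $-\sgn(x)$ comes solely from the half-integer correction at the on-curve point $x$.
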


\begin{proof}
Our proof follows an argument of Titus~\cite[Theorem 1]{t-tncsa-60}.

Fix a vertex $x = γ(u) = γ(v)$, where $0<u<v<1$. Let \EMPH{$\Beforex$} denote the subpath of $γ$ from $γ(0)$ to $γ(u-ε)$, and let \EMPH{$\Afterx$} denote the subpath of $γ$ from $γ(v+ε)$ to $γ(1) = γ(0)$, for some sufficiently small $ε>0$.  Specifically, we choose $ε$ such that there are no vertices $γ(t)$ where $u-ε≤t<u$ or $v<t≤v+ε$.  (See Figure \ref{F:subwind}.)  A vertex $y$ interleaves with $x$ if and only if $y$ is an intersection point of $\Loopx$ with either~$\Beforex$ or $\Afterx$, so
\[
	\sum_{y\between x} \sgn(y)
	~=
	\sum_{\!\!\!y \in \Beforex \cap \Loopx \vphantom{\between}\!\!\!} \sgn(y)
		~+
	\sum_{\!\!\!y \in \Loopx \cap \Afterx \vphantom{\between}\!\!\!} \sgn(y).
\]

\begin{figure}[htb]
\centering
\includegraphics[scale=0.4]{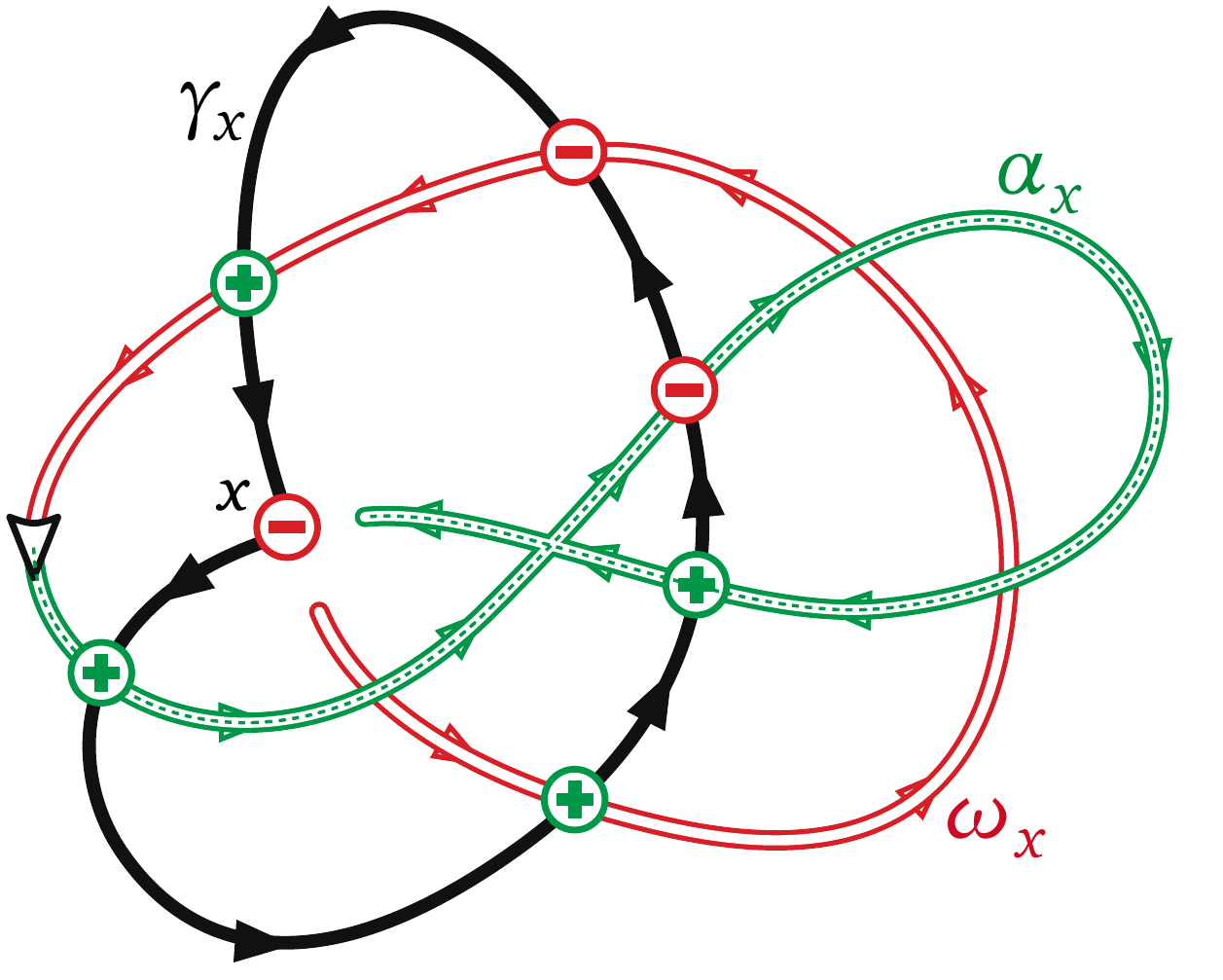}
\caption{Proof of Lemma \ref{L:subwind}: $\Wind(\Loopx, x) = +1-1+1-\tfrac{1}{2} = \tfrac12$}
\label{F:subwind}
\end{figure}

Now suppose we move a point $p$ continuously along the path $α_x$, starting at the basepoint $γ(0)$.  The winding number $\Wind(\Loopx, p)$ changes by $1$ each time this point $\Loopx$.  Each such crossing happens at a vertex of $γ$ that lies on both $\Beforex$ and~$\Loopx$; if this vertex is positive, $\Wind(\Loopx, p)$ increases by $1$, and if this vertex is negative,  $\Wind(\Loopx, p)$ decreases by $1$.  It follows that
\[
	\sum_{\!\!\!y \in \Beforex \cap \Loopx \vphantom{\between}\!\!\!} \sgn(y)
	~=~
	\Wind(\Loopx, γ(u-ε)) - \Wind(\Loopx, γ(0)).
\]
Symmetrically, if we move a point $p$ \emph{backward} along $\Afterx$ from the basepoint, the winding number $\Wind(\Loopx, p)$ increases (resp. decreases) by $1$ whenever $γ(t)$ passes through a positive (resp. negative) vertex in $\Afterx \cap \Loopx$; see the red path in Figure \ref{F:subwind}. Thus, 
\[
	\sum_{\!\!\!y \in \Afterx \cap \Loopx \vphantom{\between}\!\!\!} \sgn(y)
	~=~
	\Wind(\Loopx, γ(v+ε)) - \Wind(\Loopx, γ(0)).
\]
Finally, our sign convention for vertices implies
\[
	\Wind(\Loopx, γ(u-ε))
	~=~ \Wind(\Loopx, γ(v+ε))
	~=~ \Wind(\Loopx, x) - \sgn(x)/2,
\]
which completes the proof.
\end{proof}

\begin{lemma}
\label{L:diameter}
For any closed curve $γ$, we have $\Abs{δ(γ)} ≤ 2n\cdot\Diam(γ^*) + n$.
\end{lemma}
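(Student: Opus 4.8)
The plan is to sum the identity of Lemma~\ref{L:subwind} over all vertices of $γ$, weighted by sign, so that Polyak's formula for $δ(γ)$ collapses into a single sum of winding numbers of the subloops $\Loopx$, and then to bound each of those winding numbers by $\Diam(γ^*)$.

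First I would rewrite the defect. Since the double sum $\sum_x \sgn(x)\sum_{y\between x}\sgn(y)$ counts each interleaved pair twice, Polyak's formula becomes $δ(γ) = -\sum_x \sgn(x)\sum_{y\between x}\sgn(y)$. Substituting Lemma~\ref{L:subwind} and using $\sum_x\sgn(x)^2 = n$ gives
\[
	δ(γ) = -2\sum_x \sgn(x)\,\Wind(\Loopx, x) + 2\sum_x \sgn(x)\,\Wind(\Loopx, γ(0)) + n.
\]
I claim the middle sum vanishes for a suitable basepoint. If $γ$ is simple the lemma is trivial ($n=0$, $δ(γ)=0$), so assume $γ$ is non-simple and take the basepoint $γ(0)$ to be a regular point of $γ$ on the boundary of the outer face $f_\infty$; this is always possible since $\partial f_\infty$ contains at least one edge, all of whose interior points are regular. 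The only preimages of $γ(0)$ under $γ$ are $0$ and $1$, which lie outside the parameter interval defining any $\Loopx$, so $γ(0)$ is not a point of $\Loopx$ and therefore lies in the interior of a single face of $\Loopx$; since $f_\infty$ is incident to $γ(0)$ and $f_\infty$ is contained in the unbounded face of $\Loopx$ (the image of $\Loopx$ being contained in that of $γ$), that face of $\Loopx$ is the unbounded one, so $\Wind(\Loopx, γ(0)) = 0$ for every vertex $x$. Hence $δ(γ) = -2\sum_x \sgn(x)\Wind(\Loopx, x) + n$, and so $\Abs{δ(γ)} \le 2\sum_x \Abs{\Wind(\Loopx, x)} + n$.

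It then remains to prove $\Abs{\Wind(\Loopx, x)} \le \Diam(γ^*)$ for each vertex $x$. Because $x = \Loopx(0) = \Loopx(1)$ has no other preimage along $\Loopx$, the point $x$ is a non-crossing point of the curve $\Loopx$, so by the averaging convention $\Wind(\Loopx, x)$ is the mean of the winding numbers of $\Loopx$ around the two faces of $\Loopx$ incident to $x$. Each face of $\Loopx$ contains a face of $γ$ (erasing edges only merges faces), and $\Wind(\Loopx,\cdot)$ is constant on it, so it suffices to show $\Abs{\Wind(\Loopx, g)} \le \Diam(γ^*)$ for an arbitrary face $g$ of $γ$. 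For this I would take a shortest path in the dual graph $γ^*$ from $g$ to $f_\infty$ — of length at most $\Diam(γ^*)$ — and walk along the corresponding sequence of faces of $γ$: the value $\Wind(\Loopx, \cdot)$ starts at $\Wind(\Loopx, g)$, ends at $\Wind(\Loopx, f_\infty) = 0$, and changes by $\pm 1$ each time the walk crosses an edge of $\Loopx$ and by $0$ each time it crosses an edge of $γ$ not in $\Loopx$; hence $\Abs{\Wind(\Loopx, g)} \le \Diam(γ^*)$. Averaging two such bounds gives $\Abs{\Wind(\Loopx, x)} \le \Diam(γ^*)$, and summing over the $n$ vertices yields $\Abs{δ(γ)} \le 2n\,\Diam(γ^*) + n$.

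I expect no serious difficulty: the substantive content is already packaged in Lemma~\ref{L:subwind}, and the rest is a telescoping sum plus an elementary winding-number estimate. The only points needing care are bookkeeping ones — placing the basepoint on $\partial f_\infty$ so that the ``$\Wind(\Loopx, γ(0))$'' term drops out \emph{exactly} rather than contributing a second term of order $n\,\Diam(γ^*)$ and spoiling the additive constant, and checking that $x$ is a non-crossing point of $\Loopx$ so the averaging convention supplies the clean factor of $\tfrac12$. Both become routine once the parametrization $γ\colon[0,1]\to\Real^2$ is fixed.
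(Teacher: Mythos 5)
Your proposal is correct and follows essentially the same route as the paper's proof: rewrite Polyak's formula as a single signed sum, apply Lemma~\ref{L:subwind} with the basepoint placed on the outer face so the $\Wind(\Loopx, γ(0))$ terms vanish, and bound each $\Abs{\Wind(\Loopx, x)}$ by $\Diam(γ^*)$ via the observation that a dual path crossing $γ$ at most $\Diam(γ^*)$ times crosses $\Loopx$ no more often. Your version merely spells out the basepoint and averaging-convention bookkeeping that the paper leaves implicit.
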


\begin{proof}
Polyak's defect formula can be rewritten as
\[
	δ(γ) 	= -\sum_x \sgn(x) \left(\sum_{y\between x} \sgn(y) \right).
\]
(This sum actually considers every pair of interleaved vertices twice, which is why the factor $2$ is omitted.)  Assume without loss of generality that the basepoint $γ(0)$ lies on the outer face of $γ$, so that $\Wind(\Loopx, γ(0)) = 0$ for every vertex $x$. Then the previous lemma implies
\[
	δ(γ) 	~=~ \sum_x \sgn(x) \left(\sgn(x) - 2\Wind(\Loopx, x) \strut \right)
		~=~ n - 2\sum_x \sgn(x) \cdot \Wind(\Loopx, x),
\]
and therefore
\[
	\Abs{δ(γ)} \le n + 2\sum_x \Abs{\Wind(\Loopx, x)}.
\]
We easily observe that $\Abs{\Wind(\Loopx, x)} \le \Diam(\Loopx^*) ≤ \Diam(γ^*)$ for every vertex $x$; the second inequality follows from the fact that no path crosses $\Loopx$ more times than it crosses $γ$.  The lemma now follows immediately.
\end{proof}
%
%

\subsection{Inclusion-Exclusion}
\label{SS:inex}

Let $σ$ be an arbitrary \emph{simple} closed curve that intersects $γ$ only transversely and away from its vertices. Let $s$ be the number of intersection points between $γ$ and $σ$; the Jordan curve theorem implies that $s$ must be even. Let $z_0, z_1, \dots, z_{s-1}$ be the points in $σ\cap γ$ in order along $γ$ (\emph{not} in order along $σ$). These intersection points decompose $γ$ into a sequence of $s$ subpaths $γ_1, γ_2, \dots, γ_s$; specifically, $γ_i$ is the subpath of $γ$ from $z_{i-1}$ to $z_{i\bmod s}$, for each index $i$. Without loss of generality, every odd-indexed path $γ_{2i+1}$ lies outside $σ$, and every even-indexed path $γ_{2i}$ lies inside $σ$.  

Let \EMPH{$γ\SimpBelow σ$} denote a regular curve obtained from $γ$ by continuously deforming all subpaths~$γ_i$ outside~$σ$, keeping their endpoints fixed and never moving across $σ$, to minimize the number of intersections. (There may be several curves that satisfy the minimum-intersection condition; choose one arbitrarily.) Similarly, let \EMPH{$γ\SimpAbove σ$} denote any regular curve obtained by continuously deforming the subpaths $γ_i$ inside~$σ$ to minimize intersections. Finally, let \EMPH{$γ \Simplify σ$} denote the curve obtained by deforming \emph{all} subpaths $γ_i$ to minimize intersections; in other words, $γ \Simplify σ \coloneqq (γ\SimpAbove σ)\SimpBelow σ = (γ\SimpBelow σ)\SimpAbove σ$.  See Figure \ref{F:simplified}.

\begin{figure}[htb]
\centering
\includegraphics[scale=0.3]{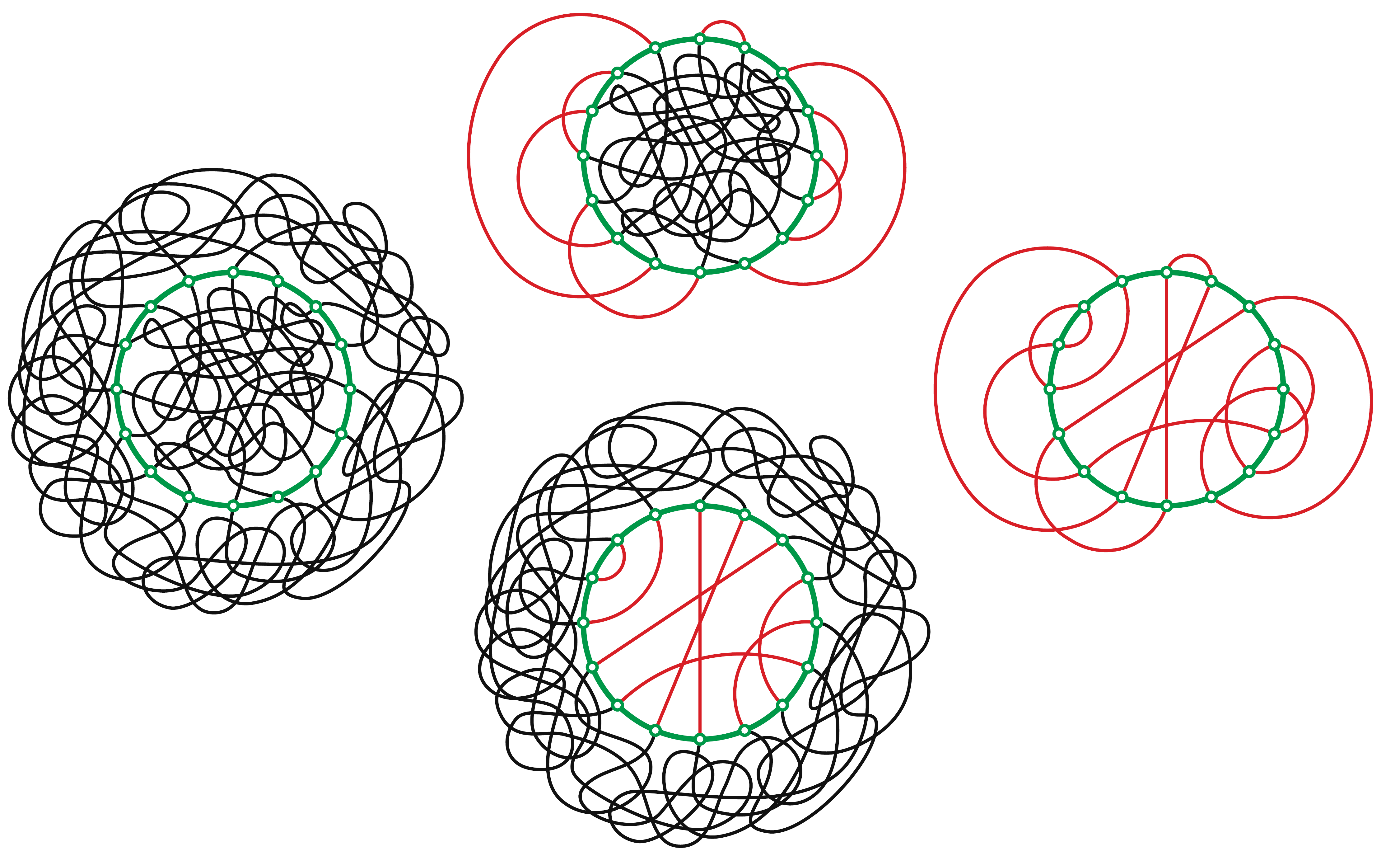}
\caption{Clockwise from left: $γ$, $γ\SimpBelow σ$, $γ\Simplify σ$, and $γ\SimpAbove σ$.  The green circle in all four figures is $σ$.}
\label{F:simplified}
\end{figure}

To simplify notation, we define
\[
	\EMPH{$δ(x,y)$} ~\coloneqq~ [x\between y] \cdot \sgn(x)\cdot\sgn(y)
\]
for any two vertices $x$ and $y$, where $[x\between y] = 1$ if $x$ and $y$ are interleaved and $[x\between y] = 0$ otherwise. Then we can define the defect of $γ$ as
\[
	δ(γ) = -2 \sum_{x,y} δ(x,y).
\]

Every vertex of $γ$ lies at the intersection of two (not necessarily distinct) subpaths. For any index $i$, let $X(i,i)$ denote the set of self-intersection points of $γ_i$, and for any indices $i<j$, let $X(i,j)$ be the set of points where $γ_i$ intersects $γ_j$.

If two vertices $x \in X(i,k)$ and $y \in X(j,l)$ are interleaved, then we must have $i ≤ j ≤ k ≤ l$.  Thus, we can express the defect of $γ$ in terms of crossings between subpaths $γ_i$ as follows.
\[
	δ(γ) = -2 \sum_{i ≤ j ≤ k ≤ l}~
				\sum_{x\in X(i,k)}~
					\sum_{y\in X(j,l)}~
						δ(x,y)
\]

On the other hand, if $i<j<k<l$, then every vertex $x \in γ_i \cap γ_k$ is interleaved with every vertex of $y \in γ_j \cap γ_l$.  Thus, we can express the contribution to the defect from pairs of vertices on four \emph{distinct} subpaths as follows:
\[
	\EMPH{$δ^{\#}(γ, σ)$} \coloneqq -2 \sum_{i < j < k < l}
				~\sum_{x\in X(i,k)} 
				~\sum_{y\in X(j,l)} 
				~\sgn(x)\cdot\sgn(y)
\]
We can express this function more succinctly as 
\[
	δ^{\#}(γ,σ)
	= -2 \sum_{i<j<k<l}
		δ(i,k)\cdot δ(j,l)
\]
by defining
\[
	\EMPH{$δ(i,j)$} \coloneqq \sum_{x\in X(i,j)} \sgn(x)
\]
for all indices $i < j$.

The following lemma implies that continuously deforming the subpaths $γ_i$ without crossing $σ$ leaves the value $δ^{\#}(γ,σ)$ unchanged, even though such a deformation may change the defect $δ(γ)$.

\begin{lemma}
\label{L:homotopy}
The value $δ(i,j)$ depends only on the parity of $i+j$ and the cyclic order of the endpoints of $γ_i$ and $γ_j$ around $σ$.
\end{lemma}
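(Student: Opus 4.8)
The plan is to show that $\delta(i,j)=\sum_{x\in X(i,j)}\sgn(x)$ is a topological invariant of how the two subpaths $\gamma_i,\gamma_j$ are deformed, as long as we never cross $\sigma$ and keep the endpoints $z_{i-1},z_{i\bmod s},z_{j-1},z_{j\bmod s}$ fixed on $\sigma$. The key observation is that the \emph{signed} count of intersections between two directed arcs with fixed endpoints, each living in a fixed disk (the inside or the outside of $\sigma$) bounded by $\sigma$, is determined by combinatorial boundary data — here, the cyclic order of the four endpoints around $\sigma$ together with the orientations of the arcs. Since $\gamma_i$ and $\gamma_j$ are directed, and each is confined to one of the two disks cut out by $\sigma$, there are really only two cases: $i+j$ even (both arcs on the same side of $\sigma$) or $i+j$ odd (opposite sides); in the odd case the arcs can be pushed apart entirely and $\delta(i,j)=0$.

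The steps I would carry out are as follows. First, treat the diagonal-type case $i=j$ separately: $\delta(i,i)$ is the signed count of self-intersections of a single arc with fixed endpoints in a disk, and any two such arcs are regularly homotopic rel endpoints inside the disk (the disk is simply connected), so $\delta(i,i)$ is a fixed function of that boundary datum — in fact one checks it is $0$ when the self-intersections can all be removed, which is the generic situation, but more robustly: any $\arc10$, $\arc20$, $\arc33$ move applied \emph{within one side} of $\sigma$ either does not change $X(i,i)$, or removes/creates a pair of self-crossings of opposite sign (for $\arc20$), or permutes three crossings without changing their signed sum (for $\arc33$). Second, for $i\ne j$ on the same side of $\sigma$, I would argue that any continuous deformation of $\gamma_i$ and $\gamma_j$ rel endpoints and rel $\sigma$ is realized by a finite sequence of homotopy moves that are local to that side; a move involving only vertices of $X(i,i)$ or $X(j,j)$ does not touch $X(i,j)$; a move involving $X(i,j)$ is a $\arc20$ removing/creating an oppositely-signed pair of $\gamma_i$-$\gamma_j$ crossings, or a $\arc33$ permuting a triangle whose three vertices lie in $X(i,i),X(i,j),X(j,j)$ in some combination — and in every such subcase the signed sum over $X(i,j)$ is unchanged (a $\arc33$ either fixes $X(i,j)$ pointwise, or exchanges which two of the three arcs-incidences it records while preserving each crossing's sign). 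Hence $\delta(i,j)$ is invariant under all admissible deformations. Third, observe that two admissible configurations with the same parity of $i+j$ and the same cyclic order of endpoints around $\sigma$ are connected by such a deformation — on a fixed side of $\sigma$ this is immediate since the disk is simply connected and the endpoints' positions on $\sigma$ (up to the recorded cyclic order) can be slid into coincidence — so $\delta(i,j)$ depends only on those two pieces of data. Fourth, dispatch the opposite-parity case: if $i+j$ is odd, $\gamma_i$ and $\gamma_j$ lie in the two different closed disks bounded by $\sigma$, meeting only on $\sigma$, but by construction $\gamma_i$ and $\gamma_j$ meet $\sigma$ only at endpoints and $\gamma$ meets $\sigma$ transversely away from its vertices, so $X(i,j)=\varnothing$ and $\delta(i,j)=0$, trivially depending only on parity.

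The main obstacle I anticipate is making the bookkeeping for the $\arc33$ subcase fully airtight: when an empty triangle is flipped, its three vertices may be distributed among $X(i,i)$, $X(i,j)$, and $X(j,j)$ in several ways (e.g.\ two vertices in $X(i,j)$ and one in $X(i,i)$, or one in each pair), and in each distribution one must verify that the net change to $\sum_{x\in X(i,j)}\sgn(x)$ is zero — either because the crossings in $X(i,j)$ are merely relabeled with their signs intact, or because a vertex leaves $X(i,j)$ exactly when a compensating one enters. This is a short but genuinely multi-case verification; the earlier excerpt's analysis of how homotopy moves affect signed pair-counts (the $\arc33$ line in Section~\ref{SS:defect} and Figure~\ref{F:defect-moves}) supplies the template, and the invariance of individual vertex signs under local moves (they are determined by the first-traversal-crosses-second convention, which no local move disturbs away from the moved triangle) does the rest. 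A secondary, more routine point is justifying that every admissible deformation rel $\sigma$ decomposes into side-local homotopy moves; this follows from the same Titus–Francis compactness argument already invoked in the introduction, applied to each side separately.
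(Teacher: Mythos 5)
Your proposal is correct, but it takes a genuinely different route from the paper's proof. The paper argues statically: when $i+j$ is odd the two subpaths lie in different disks bounded by $\sigma$ and never meet; when $i+j$ is even it treats $\delta(i,j)$ as an algebraic intersection number of two arcs in a disk with endpoints on the boundary circle, and the Jordan curve theorem pairs off opposite-sign crossings, leaving $\delta(i,j)=0$ when the four endpoints do not alternate around $\sigma$, and $\delta(i,j)=\pm1$ (with the sign read off the cyclic order) when they do. You instead argue dynamically: $\delta(i,j)$ is invariant under homotopy moves confined to one side of $\sigma$ --- and your case analysis is the right one: a one-sided $\arc10$ never touches $X(i,j)$ for $i\ne j$, since an empty loop that does not cross $\sigma$ lies within a single subpath; a $\arc20$ creates or destroys an oppositely-signed pair inside a single set $X(a,b)$; and a $\arc33$ preserves each crossing's pair-membership and sign --- and then simple-connectivity of each disk, together with the Titus--Francis decomposition of generic deformations into moves, shows that only the boundary data can matter. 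The paper's route is shorter and yields the exact value of $\delta(i,j)$; yours avoids the Jordan-curve bookkeeping and directly delivers the invariance statement that Lemma~\ref{L:inclusion-exclusion} actually uses, namely that $\delta^{\#}(\gamma,\sigma)$ is unchanged when subpaths are simplified within their own sides. Three small remarks. First, $\delta(i,j)$ is defined only for $i<j$, so your separate treatment of $\delta(i,i)$ is unnecessary (same-subpath crossings are absorbed into the one-sided terms of Lemma~\ref{L:inclusion-exclusion}). Second, since $X(i,j)$ and its signs do not depend on the other subpaths, you may erase them and reason about just two arcs in a disk, which shortens the move analysis. Third, your step-three connectivity claim (like the lemma's literal statement) should read the cyclic order of the endpoints from inside the disk that contains the two arcs: two configurations with the same cyclic order in the plane but lying on opposite sides of $\sigma$ are not connected by an admissible deformation, and the sign of the single crossing in the alternating case flips between the two sides; this is harmless for the application, because the simplifications in Lemma~\ref{L:inclusion-exclusion} never move a subpath across $\sigma$.
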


\begin{proof}
There are only three cases to consider.

If $i+j$ is odd, then $γ_i$ and $γ_j$ lie on opposite sides of $σ$ and therefore do not intersect, so $δ(i,j)=0$. For all other cases, $i+j$ is even, which implies without loss of generality that $j ≥ i+2$.

Suppose the endpoints of $γ_i$ and $γ_j$ do not alternate in cyclic order around $σ$, or equivalently, that the corresponding subpaths of $γ\Simplify σ$ are disjoint.  The Jordan curve theorem implies that there must be equal numbers of positive and negative intersections between $γ_i$ and~$γ_j$, and therefore $δ(i,j) = 0$.

Finally, suppose the endpoints of $γ_i$ and $γ_j$ alternate in cyclic order around $σ$, or equivalently, that the corresponding subpaths of $γ\Simplify σ$ intersect exactly once. Then $δ(i,j) = 1$ if the endpoints $z_i, z_j, z_{i-1}, z_{j-1}$ appear in clockwise order around~$σ$ and $δ(i,j) = -1$ otherwise.
\end{proof}

Now consider an interleaved pair of vertices $x\in X(i,k)$ and $y\in X(j,l)$ where at least two of the indices $i,j,k,l$ are equal.  Trivially, $i$ and $k$ have the same parity, and $j$ and $l$ also have the same parity.  If $i=j$ or $i=l$ or $j=k$ or $j=l$, then all four indices have the same parity.  If $i=k$, then we must also have $i=j$ or $i=l$ (or both), so again, all four indices have the same parity.  We conclude that $x$ and $y$ are either both inside $σ$ or both outside $σ$.

\begin{lemma}
\label{L:inclusion-exclusion}
For any regular closed curve $γ$ and any simple closed curve $σ$ that intersects $γ$ only transversely and away from its vertices, we have $δ(γ) = δ(γ\SimpAbove σ) + δ(γ\SimpBelow σ) - δ(γ \Simplify σ)$.
\end{lemma}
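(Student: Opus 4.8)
The plan is to split Polyak's defect sum for each of the four curves into three parts according to where the two interleaved vertices lie relative to $\sigma$, and then verify the identity part by part. For any curve $\Gamma\in\{\gamma,\ \gamma\SimpAbove\sigma,\ \gamma\SimpBelow\sigma,\ \gamma\Simplify\sigma\}$ write $\delta(\Gamma)=\delta^{\#}(\Gamma,\sigma)+\delta^{\mathrm{out}}(\Gamma,\sigma)+\delta^{\mathrm{in}}(\Gamma,\sigma)$, where $\delta^{\#}$ (already defined above) collects the interleaved pairs lying on four \emph{distinct} subpaths $\gamma_i$, while $\delta^{\mathrm{out}}$ (resp.\ $\delta^{\mathrm{in}}$) collects the remaining interleaved pairs---those whose four subpath indices are not all distinct---both of whose vertices lie outside (resp.\ inside) $\sigma$. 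The two facts established just before the lemma---that an interleaved pair $x\in X(i,k),\,y\in X(j,l)$ always satisfies $i\le j\le k\le l$, and that such a pair with a repeated index has both vertices on the same side of $\sigma$---guarantee that this is an honest partition of the defect sum, so the decomposition of $\delta(\Gamma)$ is valid for each of the four curves.

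The term $\delta^{\#}$ is the easy one. Each of the deformations $\SimpAbove$, $\SimpBelow$, $\Simplify$ keeps the intersection points $z_0,\dots,z_{s-1}$ fixed and never moves a subpath across $\sigma$; hence it preserves the parity of each index $i$ and the cyclic order around $\sigma$ of the endpoints of every pair of subpaths. By Lemma~\ref{L:homotopy}, every number $\delta(i,j)$ is therefore the same for all four curves, so $\delta^{\#}(\cdot,\sigma)=-2\sum_{i<j<k<l}\delta(i,k)\,\delta(j,l)$ takes one common value, and its net contribution to $\delta(\gamma\SimpAbove\sigma)+\delta(\gamma\SimpBelow\sigma)-\delta(\gamma\Simplify\sigma)$ is exactly $\delta^{\#}(\gamma,\sigma)$.

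For $\delta^{\mathrm{out}}$ I would argue that it depends only on the odd-indexed (outer) subpaths and their configuration. The deformation $\SimpAbove$ moves only the even-indexed subpaths, which lie inside $\sigma$; since subpaths on opposite sides of $\sigma$ are disjoint, every crossing between two odd-indexed subpaths survives unchanged, with the same sign (purely local crossing data) and the same pair of subpath indices, and the cyclic order in which the curve visits the odd-indexed subpaths---hence the interleaving relations among crossings all of whose occurrences lie on odd subpaths---is also unchanged. Thus $\delta^{\mathrm{out}}(\gamma,\sigma)=\delta^{\mathrm{out}}(\gamma\SimpAbove\sigma,\sigma)$, and applying the same observation with $\gamma\SimpBelow\sigma$ in place of $\gamma$ (recall $\gamma\Simplify\sigma=(\gamma\SimpBelow\sigma)\SimpAbove\sigma$) gives $\delta^{\mathrm{out}}(\gamma\SimpBelow\sigma,\sigma)=\delta^{\mathrm{out}}(\gamma\Simplify\sigma,\sigma)$. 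By the mirror-image argument, with $\SimpBelow$ moving only outer subpaths, $\delta^{\mathrm{in}}(\gamma,\sigma)=\delta^{\mathrm{in}}(\gamma\SimpBelow\sigma,\sigma)$ and $\delta^{\mathrm{in}}(\gamma\SimpAbove\sigma,\sigma)=\delta^{\mathrm{in}}(\gamma\Simplify\sigma,\sigma)$. Substituting these equalities into the decompositions of $\delta(\gamma\SimpAbove\sigma)$, $\delta(\gamma\SimpBelow\sigma)$, and $\delta(\gamma\Simplify\sigma)$, the $\delta^{\mathrm{out}}$-contributions to the right-hand side collapse to $\delta^{\mathrm{out}}(\gamma,\sigma)$, the $\delta^{\mathrm{in}}$-contributions collapse to $\delta^{\mathrm{in}}(\gamma,\sigma)$, and together with the $\delta^{\#}$ term the right-hand side equals $\delta(\gamma)$.

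I expect the main obstacle to be making the third paragraph airtight rather than merely plausible: one must check separately that a $\SimpAbove$ deformation changes (i)~neither the set of crossings between two outer subpaths, (ii)~nor their signs, (iii)~nor which pairs of them are interleaved, (iv)~nor the multiset of subpath indices attached to each such pair---and in particular that no outer--outer crossing is created or destroyed. All four points follow from the single fact that $\SimpAbove$ leaves the outer subpaths \emph{literally unchanged} as parametrized arcs while the inner subpaths stay inside $\sigma$, so that interleaving of four occurrences lying on outer subpaths is governed solely by the fixed cyclic order $\gamma_1,\gamma_3,\dots,\gamma_{s-1}$ and the fixed internal orders on those arcs; but it is worth spelling each point out. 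Everything concerning $\delta^{\#}$ is handled directly by Lemma~\ref{L:homotopy}, so no extra work is needed there.
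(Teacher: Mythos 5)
Your proof is correct and follows essentially the same route as the paper's: the same three-way partition of Polyak's sum into the all-distinct-indices part $δ^{\#}$ (handled by Lemma~\ref{L:homotopy}) and the repeated-index pairs split by side of $σ$ (handled by noting that each of $\SimpAbove$, $\SimpBelow$ leaves one side of the curve literally fixed), followed by the same substitution. Your extra care in the third paragraph, spelling out why the side-restricted terms are invariant, is a slightly more explicit version of what the paper dismisses as immediate from the definitions.
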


\begin{proof}
Let us write $δ(γ) = δ^{\#}(γ, σ) + δ^{\uparrow}(γ, σ) + δ^{\downarrow}(γ, σ)$, where
\begin{itemize}\itemsep0pt
\item $δ^{\#}(γ, σ)$ considers pairs of vertices on four different subpaths $γ_i$, as above,
\item $δ^{\uparrow}(γ, σ)$ considers pairs of vertices outside $σ$ on at most three different subpaths $γ_i$, and
\item $δ^{\downarrow}(γ, σ)$ considers pairs of vertices inside $σ$ on at most three different subpaths $γ_i$.
\end{itemize}
Lemma~\ref{L:homotopy} implies that
\[
	δ^{\#}(γ, σ)
	~=~ δ^{\#}(γ\SimpAbove σ, σ)
	~=~ δ^{\#}(γ\SimpBelow σ, σ)
	~=~ δ^{\#}(γ \Simplify σ, σ).
\]
The definitions of $γ\SimpAbove σ$ and $γ\SimpBelow σ$ immediately imply the following:
\begin{align*}
	δ^{\uparrow}(γ\SimpAbove σ, σ) & = δ^{\uparrow}(γ \Simplify σ, σ)
	&
	δ^{\downarrow}(γ\SimpAbove σ, σ) &= δ^{\downarrow}(γ, σ) 
	\\
	δ^{\uparrow}(γ\SimpBelow σ, σ) & = δ^{\uparrow}(γ, σ)
	&
	δ^{\downarrow}(γ\SimpBelow σ, σ) &= δ^{\downarrow}(γ \Simplify σ, σ)
\end{align*}
The lemma now follows from straightforward substitution.
\end{proof}

\begin{lemma}
\label{L:sep-cubed}
For any closed curve $γ$ and any simple closed curve $σ$ that intersects $γ$ only transversely and away from its vertices, we have $\Abs{δ(γ\Simplify σ)} = O(\abs{γ\cap σ}^3)$.
\end{lemma}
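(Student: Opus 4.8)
The plan is to prove the bound by applying the diameter estimate of Lemma~\ref{L:diameter} directly to the curve $γ\Simplify σ$. Write $s := \abs{γ\cap σ}$, and recall that $γ\Simplify σ$ is the alternating concatenation of the subpaths $γ_1,\dots,γ_s$, with the $s/2$ odd-indexed subpaths deformed to lie outside $σ$ and the $s/2$ even-indexed subpaths deformed to lie inside $σ$, each deformation minimizing the number of crossings among the subpaths on its side. First I would observe that $γ\Simplify σ$ has few vertices: two subpaths on the same side of $σ$ cross at most once, and no subpath self-crosses, so every vertex of $γ\Simplify σ$ is the unique crossing of two of the $s/2$ subpaths on one side. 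Hence $γ\Simplify σ$ has fewer than $2\binom{s/2}{2} < s^2/4$ vertices, and by Lemma~\ref{L:diameter} it suffices to show $\Diam\bigl((γ\Simplify σ)^*\bigr) = O(s)$, since then $\Abs{δ(γ\Simplify σ)} \le 2\cdot\tfrac{s^2}{4}\cdot O(s) + \tfrac{s^2}{4} = O(s^3)$.

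For the diameter bound, note that the $s$ points where $γ\Simplify σ$ meets $σ$ cut a thin annular neighborhood of $σ$ into $s$ cyclically arranged cells, with consecutive cells separated by a single edge of $γ\Simplify σ$; so any two faces meeting this annulus are at dual distance at most $s/2$. It remains to bound the dual distance from a face $f$ lying strictly inside $σ$ (the outside case is symmetric) to a face meeting the annulus. Inside $σ$, the curve $γ\Simplify σ$ restricts to an arrangement of $s/2$ simple arcs with endpoints on $σ$, pairwise crossing at most once; extending each arc radially past $σ$ to infinity produces an arrangement of $s/2$ pseudolines in the plane, and in a pseudoline arrangement any two regions are joined by a path crossing only the pseudolines that separate them, each exactly once (easy induction on the separating set). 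Hence $f$ reaches an unbounded region, and so a face meeting the annulus, at dual distance at most $s/2$. Altogether $\Diam\bigl((γ\Simplify σ)^*\bigr) \le 3s/2$, so Lemma~\ref{L:diameter} gives $\Abs{δ(γ\Simplify σ)} = O(s^3)$.

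This argument works for every minimizing realization, but the pseudoline input can be avoided once one notices that the realization-dependence is confined to a part of $δ$ that is crude to bound. For the special realization $γ''$ in which every subpath is drawn as a straight chord (after a generic perturbation; straight chords do realize the minimum number of crossings), the escape-depth step is immediate, since a straight segment crosses each chord at most once, so $\Abs{δ(γ'')} = O(s^3)$ exactly as above. For an arbitrary minimizing realization $γ'$, use the decomposition $δ(γ') = δ^{\#}(γ',σ) + δ^{\uparrow}(γ',σ) + δ^{\downarrow}(γ',σ)$ from the proof of Lemma~\ref{L:inclusion-exclusion}: the term $δ^{\#}(γ',σ)$ depends only on the parities and the cyclic order of the points of $γ\cap σ$ around $σ$ (Lemma~\ref{L:homotopy}), hence equals $δ^{\#}(γ'',σ) = δ(γ'') - δ^{\uparrow}(γ'',σ) - δ^{\downarrow}(γ'',σ)$; and for \emph{any} realization, $δ^{\uparrow}$ and $δ^{\downarrow}$ each sum $\sgn(x)\sgn(y)$ over interleaved pairs of same-side vertices lying on at most three of the $s$ subpaths, of which there are $O(s^2)$ choices for the first vertex and then $O(s)$ for the second, so $\Abs{δ^{\uparrow}}, \Abs{δ^{\downarrow}} = O(s^3)$. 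Adding the three bounds yields $\Abs{δ(γ')} = O(s^3)$.

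I expect the only real obstacle to be the diameter estimate — specifically, the claim that inside $σ$ every face is at dual distance $O(s)$ from the boundary. For the straight-chord realization this is trivial, and since the applications only need the bound for \emph{some} valid choice of $γ\Simplify σ$, one can simply fix that realization and skip the decomposition of the previous paragraph; it is only the wish to cover arbitrary minimizing realizations that forces either the pseudoline/partial-cube fact or the $δ^{\#}$-bookkeeping.
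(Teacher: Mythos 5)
Your proposal is correct and follows essentially the same route as the paper: bound the number of vertices of $γ\Simplify σ$ by $2\binom{s/2}{2} = O(s^2)$, bound $\Diam\bigl((γ\Simplify σ)^*\bigr)$ by $O(s)$, and apply Lemma~\ref{L:diameter}. The paper gets the diameter bound more directly---from any point, walk to the nearest point of the curve, follow the curve until it meets $σ$, then follow $σ$ to a fixed reference point, crossing $γ\Simplify σ$ only $O(s)$ times---so the pseudoline-arrangement detour and the entire realization-independence discussion in your last two paragraphs are unnecessary.
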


\begin{proof}
Fix an arbitrary reference point $z \in σ\setminus γ$. For any point $p$ in the plane, there is a path from $p$ to~$z$ that crosses $γ\Simplify σ$ at most $O(s)$ times. Specifically, move from $p$ to the nearest point on $γ\Simplify σ$, then follow $γ\Simplify σ$ to $σ$, and finally follow $σ$ to the reference point $z$. It follows that $\Diam((γ\Simplify σ)^*) = O(s)$.  The curve $γ\Simplify σ$ has at most $\smash{2\binom{s/2}{2}} = O(s^2)$ vertices.  The bound $\Abs{δ(γ\Simplify σ)} = O(s^3)$ now immediately follows from Lemma \ref{L:diameter}.
\end{proof}

\subsection{Divide and Conquer}
\label{SS:recurse}

We call a simple closed curve $σ$ \EMPH{useful} for $γ$ if $σ$ intersects $γ$, but only transversely and away from the vertices of $γ$, and there are at least $\abs{γ\cap σ}^2$ vertices of $γ$ on both sides of $σ$. In particular, a simple closed curve that is disjoint from $γ$ is not useful.
\footnote{We could define a transverse cycle to be useful if there are at least $\alpha\cdot \abs{γ\cap σ}^2$ vertices on both sides, and then optimize $α$ to minimize the resulting upper bound.%
}

\begin{lemma}
\label{L:useful}
If no simple closed curve is useful for $γ$, then $\Diam(γ^*) = O(\sqrt{n})$.
\end{lemma}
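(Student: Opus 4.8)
The plan is to prove the contrapositive: if $\Diam(\gamma^*)$ exceeds a suitable constant times $\sqrt n$, then some simple closed curve is useful for $\gamma$. Fix faces $f$ and $g$ of $\gamma$ with $d_{\gamma^*}(f,g) = D := \Diam(\gamma^*)$; because $D$ is the diameter, $f$ and $g$ each realize the other's eccentricity. First I would run breadth-first search in $\gamma^*$ from $f$, obtaining level sets $L_0 = \set{f}, L_1, \dots, L_D$ with $g \in L_D$. For each index $i$, let $C_i$ be the set of edges of $\gamma$ joining a face in $L_{i-1}$ to a face in $L_i$. Regularizing the boundary of a small neighborhood of the union of the closed faces at levels $0,\dots,i-1$ --- which form a connected subgraph of $\gamma^*$ via the BFS tree, and each of which is a disk --- produces a transverse multicurve meeting $\gamma$ in exactly $\abs{C_i}$ points. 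Since $f$ lies in this region and $g$ does not, some component $\sigma_i$ of that multicurve separates $f$ from $g$; write $s_i := \abs{\gamma \cap \sigma_i} \le \abs{C_i}$. As every edge of $\gamma$ appears in at most one set $C_i$, we have $\sum_i s_i \le \sum_i \abs{C_i} \le 2n$, so all but $O(\sqrt n)$ of the indices $i$ satisfy $s_i \le \tfrac12\sqrt n$.

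Next I would estimate how many vertices of $\gamma$ lie on each side of $\sigma_i$. For a vertex $v$, let $\lambda(v)$ be the minimum BFS-level among the four faces incident to $v$; since opposite faces at $v$ have levels differing by at most $2$, all four of these levels lie in $\set{\lambda(v),\lambda(v)+1,\lambda(v)+2}$. Hence if $\lambda(v)\le i-3$ then all four faces of $v$, and so $v$ itself, lie strictly on the $f$-side of $\sigma_i$, and if $\lambda(v)\ge i$ then $v$ lies strictly on the $g$-side. Writing $m_i$ for the number of vertices on the $g$-side of $\sigma_i$, we get $m_i \ge \#\set{v : \lambda(v)\ge i}$ and $n-m_i \ge \#\set{v:\lambda(v)\le i-3}$, and $m_i$ decreases monotonically from $m_1 = n-\deg(f)$ down to $0$. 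The goal is now reduced to finding a single index $i$ with $s_i^2 \le \min(m_i, n-m_i)$, since such a $\sigma_i$ is useful. The easy case is when the ``balanced range'' $T := \set{i : n/4 \le m_i \le 3n/4}$ --- an interval of indices --- contains some $i$ with $s_i \le \tfrac12\sqrt n$, for then $s_i^2 \le n/4 \le \min(m_i,n-m_i)$. So we may assume every $i\in T$ has $s_i > \tfrac12\sqrt n$; then $\sum_i s_i \le 2n$ forces $\abs T \le 4\sqrt n$, meaning the vertex mass of $\gamma$ is concentrated: $m_i$ plunges from above $3n/4$ to below $n/4$ across an interval of only $O(\sqrt n)$ BFS-levels.

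The remaining, harder case is exactly this concentrated one, and here I would exploit that $D$ vastly exceeds $\abs T$, so there are $\Omega(\sqrt n)$ levels $i > \max T$, along all of which $m_i < n/4$; these describe a nested family of disks $D_i \ni g$, each with fewer than $n/4$ vertices but together spanning $\Omega(\sqrt n)$ consecutive BFS-levels. This long ``reach'' toward $g$ must contain a level $i$ that is simultaneously cheap and moderately balanced: the cuts $\abs{C_i}$ over these levels again sum to at most $2n$, so many are $O(\sqrt n)$; and since $D_i$ shrinks steadily toward $g$ while remaining nonempty over $\Omega(\sqrt n)$ levels, a connected region spanning that many levels contains $\Omega(\sqrt n)$ vertices, so at a level roughly halfway along the reach one has $m_i = \Omega(\sqrt n) \ge s_i^2$ while $n - m_i \ge 3n/4 \ge s_i^2$, making $\sigma_i$ useful. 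If instead the reach toward $g$ is itself ``fat'' --- many vertices spread over those levels rather than a thin corridor, possibly split among several deep blobs --- then one has a second concentration disjoint from the first, and a more careful version of the same layered argument is applied within an annular subregion bounded by two of the curves $\sigma_i$.

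I expect the genuine difficulty to be making this last paragraph rigorous: controlling the interplay between the cut sizes $\abs{C_i}$ and the counts $m_i$ along the corridor toward $g$, handling the presence of several deep blobs, and in particular ruling out the pathology where $m_i$ drops from moderate to essentially zero in a few steps while the cuts there are large. The relevant isoperimetric estimates are tight up to constant factors, so the necessary slack of a constant fraction of $\sqrt n$ must be harvested from the linear total length $\sum_i \abs{C_i}\le 2n$ of the BFS layer boundaries --- not from the planar separator theorem, which the paper is explicitly avoiding.
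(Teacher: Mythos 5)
Your setup (dual BFS levels from a diameter-realizing face, the transverse level curves $\sigma_i$, and the bound $\sum_i s_i \le 2n$) matches the paper's construction, but the way you exploit the hypothesis does not, and the third paragraph contains the gap you yourself flag. Two concrete problems. First, the quantitative claim at the heart of your hard case fails: averaging $\sum_i s_i \le 2n$ over $\Omega(\sqrt n)$ levels only yields levels with $s_i = O(\sqrt n)$, hence $s_i^2 = O(n)$, whereas to certify usefulness deep in the corridor where $m_i = \Theta(\sqrt n)$ you would need $s_i \le \sqrt{m_i} = O(n^{1/4})$ --- a much stronger bound that no averaging over the cut sizes can provide. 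Second, hunting for a \emph{balanced} level with $\min(m_i, n-m_i) \ge n/4$ is a detour: usefulness only demands $s_i^2$ vertices on each side, not a constant fraction of $n$, so highly unbalanced level curves can be useful, and it is exactly these that make the argument work.

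The paper's proof never splits into balanced and concentrated cases. It uses the non-usefulness hypothesis at \emph{every} level as a recursive constraint on the cut sizes: each edge crossing $\sigma_j$ contributes an endpoint to the $j$-th annulus, so the number $n_j$ of vertices swallowed by the first $j$ levels satisfies $n_j \ge \frac{1}{2}\sum_{i<j} s_i$; as long as the far side still holds at least $n/2$ vertices, non-usefulness of $\sigma_j$ forces $s_j^2 > n_j \ge \frac{1}{2}\sum_{i<j} s_i$. This recurrence yields $s_j > j/5$ by induction, whence $n/2 \ge n_L > \frac{1}{10}\binom{L}{2}$ and $L = O(\sqrt n)$ for the half of the levels nearer the source, giving $D \le 2L = O(\sqrt n)$. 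Your argument invokes ``no $\sigma_i$ is useful'' only once, to bound $|T|$; to close the gap you would need to reuse it cumulatively along the corridor toward $g$, which is precisely the paper's induction. I recommend restructuring along those lines rather than trying to patch the annular-subregion case analysis.
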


\begin{proof}
To simplify notation, let $D = \Diam(γ^*)$. Let $a$ and $z$ be any two points in $\Real^2\setminus γ$ such that any path from $a$ to $z$ crosses $γ$ at least $D$ times.  We construct a nested sequence $σ_1, σ_2, \cdots, σ_D$ of disjoint simple closed curves as follows. For each integer $j$, let $R_j$ denote the set of all points reachable from $a$ by a path that crosses $γ$ less than $j$ times, and let $\tilde{R}_j$ be an arbitrarily small open neighborhood of the closure of $R_j \cup \tilde{R}_{j-1}$. For all $1\le j\le D$, the boundary of~$\tilde{R}_j$ is the disjoint union of simple closed curves, each of which intersect $γ$ transversely away from its vertices (or not at all).  Let $σ_j$ be the unique boundary component of $\tilde{R}_j$ that separates $a$ and $z$.  See Figure~\ref{F:curve-levels}.

\begin{figure}[htb]
\centering
\includegraphics[scale=0.4]{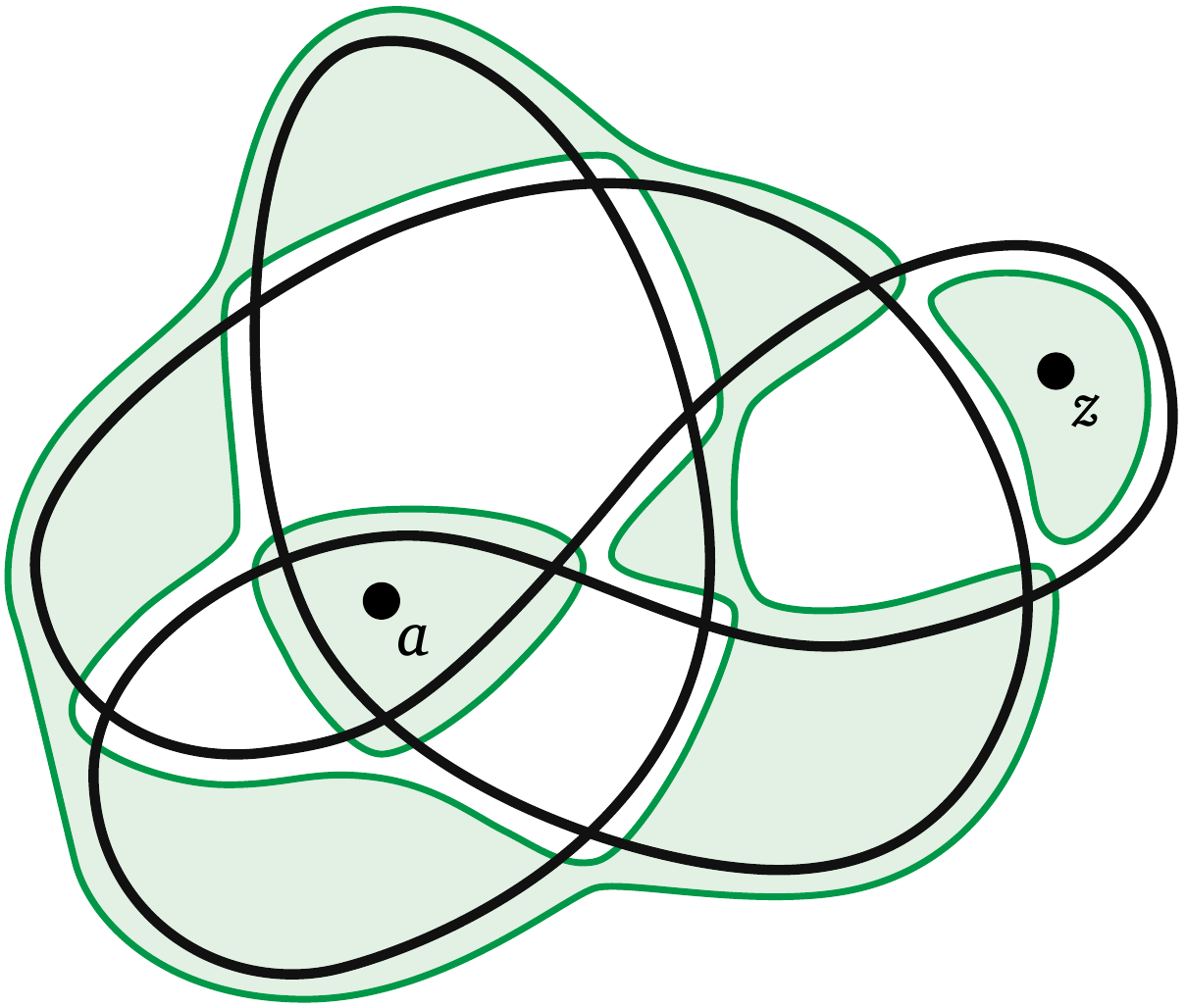}
\caption{Nested simple closed curves transverse to $γ$.}
\label{F:curve-levels}
\end{figure}

For each index $j$, let $A_j$ denote the component of $\Real^2\setminus \tilde{R}_j$ that contains the point $z$; let $n_j$ denote the number of vertices of $γ$ that lie in $A_j$; and  let $s_j = \abs{γ \cap σ_j}$. For notational convenience, we define $A_0 = \varnothing$ and thus $n_0=0$. Finally, let $L$ be the largest index such that $n_L ≤ n/2$; without loss of generality, we can assume $L ≥ D/2$. To prove the lemma, it suffices to show that if no curve $σ_j$ with $j ≤ L$ is useful, then $L = O(\sqrt{n})$.

If an edge of $γ$ crosses $σ_j$, then at least one of its endpoints lies in the annulus $A_j\setminus A_{j-1}$. Moreover, if the edge crosses $σ_j$ twice, then both endpoints lie in $A_j\setminus A_{j-1}$.  Exhaustive case analysis implies that $n_j ≥ n_{j-1} + s_j/2$, and therefore by induction
\[
	n_j \ge \frac{1}{2} \sum_{i < j} s_i,
\]
for every index $j>1$. Trivially, $n_1 \ge 1$ unless $γ$ is simple, and $s_1\ge 2$ unless $D = 1$.

Now suppose no curve $σ_j$ with $1≤ j ≤ L$ is useful. Then we must have $s_j^2 > n_j$ and therefore 
\[
	s_j^2 > \frac{1}{2} \sum_{i < j} s_i
\]
for all $j$. An easy induction argument implies that $s_j > j/5$, and therefore
\[
	\frac{n}{2}
	~≥~ n_L
	~>~ \frac{1}{2} \sum_{i < L} \frac{i}{5}
	~=~ \frac{1}{10} \binom{L}{2}.
\]
We conclude that $L \le \sqrt{10n}$.%
\end{proof}


We are now finally ready to prove our main upper bound.

\begin{theorem}
\label{Th:defect}
$\Abs{δ(γ)} = O(n^{3/2})$ for every generic closed curve $γ$ with $n$ vertices.
\end{theorem}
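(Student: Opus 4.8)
The plan is a recursive inclusion--exclusion argument. Let $T(m)$ denote the maximum of $\Abs{δ(γ)}$ over all generic closed curves with at most $m$ vertices; I will show $T(n)=O(n^{3/2})$ by induction on $n$. If $γ$ admits no useful simple closed curve, then Lemma~\ref{L:useful} gives $\Diam(γ^*)=O(\sqrt n)$ and Lemma~\ref{L:diameter} immediately yields $\Abs{δ(γ)}\le 2n\cdot\Diam(γ^*)+n=O(n^{3/2})$; this handles the base case (in particular all small $n$, where no useful curve can exist). So assume $γ$ has a useful simple closed curve $σ$, set $s=\abs{γ\cap σ}$, and let $n_{\mathrm{in}}$ and $n_{\mathrm{out}}$ be the numbers of vertices of $γ$ strictly inside and strictly outside $σ$; since $σ$ avoids the vertices of $γ$ we have $n_{\mathrm{in}}+n_{\mathrm{out}}=n$, and usefulness gives $s^2\le\min\{n_{\mathrm{in}},n_{\mathrm{out}}\}$ (with $s\ge 2$, so this minimum is at least $4$). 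Applying a defect-preserving homeomorphism of the sphere that exchanges the two disks bounded by $σ$, I may assume $k\coloneqq n_{\mathrm{in}}\le n/2$.

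Now $γ\SimpBelow σ$ leaves the $n_{\mathrm{in}}$ vertices inside $σ$ untouched and replaces the $s/2$ subpaths of $γ$ lying outside $σ$ by $s/2$ simple arcs that pairwise cross at most once, so it has at most $n_{\mathrm{in}}+\binom{s/2}{2}<k+k/8=\tfrac98 k\le\tfrac9{16}n$ vertices; symmetrically $γ\SimpAbove σ$ has at most $n_{\mathrm{out}}+\binom{s/2}{2}<(n-k)+k/8=n-\tfrac78 k$ vertices. Both counts are strictly less than $n$, so the inductive hypothesis applies to $γ\SimpAbove σ$ and $γ\SimpBelow σ$. Combining the triangle inequality with the inclusion--exclusion identity (Lemma~\ref{L:inclusion-exclusion}), the bound $\Abs{δ(γ\Simplify σ)}=O(s^3)$ (Lemma~\ref{L:sep-cubed}), and $s^3=(s^2)^{3/2}\le k^{3/2}$, we obtain for some absolute constant $C$
\[
	\Abs{δ(γ)}
	~\le~ \Abs{δ(γ\SimpAbove σ)}+\Abs{δ(γ\SimpBelow σ)}+\Abs{δ(γ\Simplify σ)}
	~\le~ T\big(n-\tfrac78 k\big)+T\big(\tfrac98 k\big)+C\,k^{3/2}.
\]

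Finally, I would verify that this unbalanced recurrence has solution $T(n)\le B n^{3/2}$ for a sufficiently large constant $B$. Substituting the guess and using convexity of $x\mapsto x^{3/2}$ together with $n-\tfrac78 k\ge\tfrac9{16}n$ gives $n^{3/2}-(n-\tfrac78 k)^{3/2}\ge\tfrac32(n-\tfrac78 k)^{1/2}\cdot\tfrac78 k\ge\tfrac{63}{64}\sqrt n\,k$, whereas $B(\tfrac98 k)^{3/2}+Ck^{3/2}\le\big(B(9/8)^{3/2}+C\big)\tfrac1{\sqrt2}\sqrt n\,k$ because $k\le n/2$. So $B(n-\tfrac78 k)^{3/2}+B(\tfrac98 k)^{3/2}+Ck^{3/2}\le Bn^{3/2}$ whenever $\big(B(9/8)^{3/2}+C\big)/\sqrt2\le\tfrac{63}{64}B$, which holds for every $B$ larger than an absolute multiple of $C$; enlarging $B$ further so that $Bn^{3/2}$ also dominates the base-case bound from Lemma~\ref{L:diameter} completes the induction, and the theorem follows.

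I expect the main work to be bookkeeping rather than anything conceptual. The two quantitative facts that make the recursion close are: (i)~replacing the subpaths on one side of $σ$ by minimum-intersection arcs creates strictly \emph{fewer} than $s^2$ new crossings, so the ``large'' child $γ\SimpAbove σ$ genuinely loses $\Omega(k)$ vertices while the ``small'' child $γ\SimpBelow σ$ stays at $(1+o(1))k\le\tfrac9{16}n$ vertices; and (ii)~usefulness forces $s^2\le k\le n/2$, so that the additive term $\Theta(s^3)=O(k^{3/2})$ coming from $γ\Simplify σ$ is of strictly lower order than the $\Theta(\sqrt n\,k)$ savings realized on the large side. The routine details to pin down are that the minimum-intersection deformations can indeed be realized by pairwise-transverse simple arcs (so the new vertex count really is at most $\binom{s/2}{2}$ per side) and that the resulting curves are again generic immersions.
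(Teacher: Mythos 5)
Your proposal is correct and follows essentially the same route as the paper's proof: handle the no-useful-curve case via Lemmas~\ref{L:useful} and~\ref{L:diameter}, and otherwise combine Lemmas~\ref{L:inclusion-exclusion} and~\ref{L:sep-cubed} with the vertex-count bounds $m+\binom{s/2}{2}$ and $n-m+\binom{s/2}{2}$ to get an unbalanced recurrence solved by $Bn^{3/2}$. The only cosmetic difference is that you absorb $s^2$ into $k$ early (using $s^2\le k$) before solving the recurrence, whereas the paper keeps $m$ and $s$ separate and invokes convexity to set $m=s^2$ at the end; both closings of the recurrence are valid.
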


\begin{proof}
Let $Δ(n)$ denote the maximum absolute defect of any closed curve with $n$ vertices. We prove by induction on $n$ that $Δ(n) ≤ C \cdot n^{3/2}$, for some absolute constant $C$ to be determined.  (The base case $Δ(1) = 2$ implies that $C ≥ 2$.) 

Let $γ$ be an arbitrary closed curve with $n$ vertices. Let $σ$ be a simple closed curve that is useful for $γ$. (If there are no useful curves, then Lemmas \ref{L:diameter} and \ref{L:useful} imply that $\Abs{δ(γ)} = O(n^{3/2})$.) Let $s = \abs{γ\cap σ}$. Lemmas \ref{L:inclusion-exclusion} and \ref{L:sep-cubed} imply 
\[
	\Abs{δ(γ)} = \Abs{δ(γ\SimpAbove σ)} + \Abs{δ(γ\SimpBelow σ)} + O(s^3).
\]
Suppose $m$ vertices of $γ$ lie in the interior of $σ$; without loss of generality, we can assume $m ≤ n/2$. The curve $γ\SimpBelow σ$ has at most $m + \binom{s/2}{2} < m + s^2/8$ vertices; symmetrically, the curve $γ\SimpAbove σ$ has at most $n - m + s^2/8$ vertices.  It follows that
\[
	\Abs{δ(γ)} ~ ≤ ~ Δ(m + s^2/8) + Δ(n - m + s^2/8) + c\cdot s^3
\]
for some constant $c$.

Because $σ$ is useful, we have $m ≥ s^2$, so both arguments of $Δ$ on the right side of this inequality are smaller than $n$.  Thus, the inductive hypothesis implies
\[
	\Abs{δ(γ)} ~ ≤ ~C\left(m + s^2/8\right)^{3/2}
				+ C\left(n - m + s^2/8\right)^{3/2}
				+ c\cdot s^3.
\]

For any fixed $s$, the convexity of the function $x\mapsto x^{3/2}$ implies that right side of this inequality is maximized when $m = s^2$, so
\[
	\Abs{δ(γ)} ~ ≤ ~ C\left(9s^2/8 \right)^{3/2}
				+ C\left(n - 7s^2/8\right)^{3/2}
				+ c\cdot s^3.
\]
The inequality $(x - y)^{3/2} ≤ (x-y)x^{1/2} = x^{3/2} - yx^{1/2}$ now implies
\[
	\Abs{δ(γ)} ~≤~ Cn^{3/2} + C\left(9s^2/8 \right)^{3/2}
				- C\left(7s^2/8\right)n^{1/2}
				+ c\cdot s^3.
\]
Finally, because $σ$ is useful, we must have $\sqrt{n} \ge \sqrt{2} \cdot s$, which implies
\begin{align*}
	\Abs{δ(γ)}
	& ~≤~ Cn^{3/2} + C
		\left(	\left(9/8 \right)^{3/2} - 7\sqrt{2}/8	\right) s^3 + c\cdot s^3
\\	& ~=~ Cn^{3/2} - \left(\sqrt{2}C/32 - c\right) s^3.
\end{align*}
Provided $C/c > 16 \sqrt{2}$, then $\Abs{δ(γ)} ≤ Cn^{3/2}$, as required.%
\footnote{A more careful analysis to Lemma~\ref{L:sep-cubed} implies that $c < 3/4$, and therefore it suffices to set $C = 12\sqrt{2}$.  One can further reduce the constant $C$ by redefining a transverse cycle to be useful if there are at least $\alpha\cdot \abs{γ\cap σ}^2$ vertices on both sides, and then optimize $\alpha$ to obtain a refined bound on $C$.  We get $C < 5.6429$.}
\end{proof}

\subsection{Implications for Random Knots}

Finally, we describe some interesting implications of our results on the expected behavior of  random knots, following earlier results of Lin and Wang~\cite{lw-igopc-96}, Polyak~\cite{p-icfgd-98} and Even-Zohar \etal~\cite{ehln-irkl-14}.  We refer the reader to Burde and Zieschang~\cite{bz-k-03} or Kauffman~\cite{k-k-87} for further background on knot theory, and to Chmutov \etal~\cite{cdm-ivki-12} for a detailed overview of finite-type knot invariants; we include only a few elementary definitions to keep the paper self-contained.

A \EMPH{knot} is (the image of) a continuous injective map from the circle into $\Real^3$.  Two knots are considered equivalent (more formally, \emph{ambient isotopic}) if there is a continuous deformation of $\Real^3$ that deforms one knot into the other.  Knots are often represented by \EMPH{knot diagrams}, which are 4-regular plane graphs defined by a generic projection of the knot onto the plane, with an annotation at each vertex indicating which branch of the knot is “over” or “under” the other.  Call any crossing $x$ in a knot diagram \emph{ascending} if the first branch through $x$ after the basepoint passes over the second, and \emph{descending} otherwise.

The \EMPH{Casson invariant $c_2$} is the simplest finite-type knot invariant; it is also equal to the second coefficient of the Conway polynomial~\cite{bl-kpvi-93,pv-cki-01}.  Polyak and Viro~\cite{pv-gdfvi-94,pv-cki-01} derived the following combinatorial formula for the Casson invariant of a knot diagram~$κ$:
\[
	c_2(κ) ~=~ - \!\!\!\sum_{\text{descending $x$}}~
			\sum_{\text{ascending $y$}}~
			[x\between y] \cdot \sgn(x) \cdot \sgn(y).
\]
Like defect, the value of $c_2(κ)$ is independent of the choice of basepoint or orientation of the underlying curve $γ$; moreover, if the knots represented by diagrams $κ$ and $κ'$ are equivalent, then $c_2(κ) = c_2(κ')$.

Polyak~\cite[Theorem 7]{p-icfgd-98} observed that if a knot diagram $κ$ is obtained from an arbitrary closed curve~$γ$ by independently resolving each crossing as ascending or descending with equal probability, then one can relate the expectation of Casson invariant $c_2(κ)$ and the defect of $γ$ by
\[
	\E[c_2(κ)] = δ(γ)/8.
\]
The same observation is implicit in earlier results of Lin and Wang~\cite{lw-igopc-96}; and (for specific curves) in the later results of Even-Zohar \etal~\cite{ehln-irkl-14}.

Even-Zohar \etal~\cite{ehln-irkl-14} studied the distribution of the Casson invariant for two models of random knots, the \emph{Petaluma} model of Adams \etal~\cite{a-tcnkl-13, acdll-kpwsm-15}, which uses singular one-vertex diagrams consisting of $2p+1$ disjoint non-nested loops for some integer $p$, and the \emph{star} model, which uses (a polygonal version of) the flat torus knot $T(p, 2p+1)$ for some integer $p$.  Even-Zohar \etal\ prove that the expected value of the Casson invariant is $\binom{p}{2}/12$ in the Petaluma model and $\binom{p+1}{3}/2 \approx 0.03 n^{3/2}$ in the star model.

Our defect analysis implies an upper bound on the Casson invariant for knot diagrams generated from \emph{any} family of generic closed curves.

\begin{corollary}
Let $γ$ be any generic closed curve with $n$ vertices, and let $κ$ be a knot diagram obtained by resolving each vertex of $γ$ independently and uniformly at random.  Then $\Abs{\E[c_2(κ)]\strut} = O(n^{3/2})$.
\end{corollary}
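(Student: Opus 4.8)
The plan is to observe that this corollary is an immediate consequence of two facts already established above. First, Polyak's probabilistic identity (quoted just before the statement) says that if a knot diagram $κ$ is obtained from an arbitrary generic closed curve $γ$ by resolving each crossing independently and uniformly at random as ascending or descending, then $\E[c_2(κ)] = δ(γ)/8$. Second, Theorem~\ref{Th:defect} gives $\Abs{δ(γ)} = O(n^{3/2})$ for every generic closed curve $γ$ with $n$ vertices. So the proof is a single line: $\Abs{\E[c_2(κ)]} = \Abs{δ(γ)}/8 = O(n^{3/2})$.

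The only point that requires any attention is matching hypotheses: the randomization model in the corollary — resolve each vertex independently and uniformly — is exactly the model under which Polyak's identity $\E[c_2(κ)] = δ(γ)/8$ is stated, so the substitution is legitimate. If one wanted the argument to be self-contained rather than citing Polyak, I would instead start from the Polyak–Viro combinatorial formula for $c_2(κ)$, take the expectation over the $2^n$ equally likely ascending/descending assignments, and note that for each interleaved pair $\{x,y\}$ the event "$x$ descending and $y$ ascending" has probability $1/4$; summing the contributions $\sgn(x)\sgn(y)$ over all interleaved ordered pairs then reproduces $-2\sum_{x\between y}\sgn(x)\sgn(y) = δ(γ)$ up to the factor $1/8$. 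But since the identity is already recorded in the excerpt, I would simply invoke it.

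Thus there is no real obstacle here; this corollary is purely a restatement of Theorem~\ref{Th:defect} through the dictionary provided by Polyak's observation, and the proof is the one-line deduction above.
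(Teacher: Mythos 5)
Your proposal is correct and is exactly the argument the paper intends: the corollary follows immediately by combining Polyak's identity $\E[c_2(κ)] = δ(γ)/8$ with the bound $\Abs{δ(γ)} = O(n^{3/2})$ from Theorem~\ref{Th:defect}. Your optional self-contained derivation of Polyak's identity from the Polyak--Viro formula is also sound but unnecessary, as the paper likewise just cites the identity.
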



Our results also imply that the distribution of the Casson invariant depends strongly on the precise parameters of the random model; even the sign and growth rate of $\E[c_2]$ depend on which curves are used to generate knot diagrams.  For example:
\begin{itemize}
\item 
For random diagrams over the flat torus knot $T({p+1}, p)$, we have $\E[c_2(κ)] = -\binom{p}{3}/4 = -n^{3/2}/24 + Θ(n)$.
\item
For random diagrams over the connected sum $T(p, p+1) \mathbin\# T(p+1, p)$, we have $\E[c_2(κ)] = \left(\binom{p+1}{3}-\binom{p}{3}\right)/4 = \binom{p}{2}/4 = n/16 -  Θ(\sqrt{n})$.
\item
For random diagrams over the connected sum $T(p-1, p) \mathbin\# T(p+1, p)$, we have $\E[c_2(κ)] = 0$.
\end{itemize}
We hope to expand on these initial observations  in a future paper.
\section{Open Problems}

Like Gitler~\cite{g-dtaa-91}, Feo and Provan~\cite{fp-dtert-93}, and Archdeacon \etal~\cite{acgp-frpwg-00}, we conjecture that any $n$-vertex planar graph $G$ can be reduced to a single vertex using only $O(n^{3/2})$ electrical transformations.  Our divide-and-proof of Theorem \ref{Th:defect} suggests a natural divide-and-conquer strategy: Find a useful vertex-cut (a cycle in the radial graph), reduce the graph on one side of the cut \emph{treating the cut vertices as terminals}, and then recursively reduce the remaining graph.  If the reduction on one side of the cut can be carried out in $O(nD)$ steps, perhaps using a variant of Feo and Provan's algorithm~\cite{fp-dtert-93}, then the algorithm would reduce $G$ in $O(n^{3/2})$ time.  Unfortunately, the fastest algorithms currently known for reducing planar graphs with arbitrarily many terminals on the outer face---otherwise known as \emph{circular} planar graphs~\cite{cgv-rep-96,cim-cpgrn-98,cm-ipen-00,k-lpggs-11}---only imply an overall running time of $O(n^2)$.

Because our lower bound applies to any planar graph with treewidth $\Omega(\sqrt{n})$, it is natural to conjecture that every planar graph with treewidth $t$ can be electrically reduced in at most $O(nt)$ steps.  Of course such an algorithm would immediately imply an $O(n^{3/2})$-time algorithm for arbitrary planar graphs.

Another interesting open question is whether our $Ω(n^{3/2})$ lower bound can be extended to electrical transformations that ignore the planarity of the graph.  Such an extension would imply that reducing $K_5$- or $K_{3,3}$-minor-free graphs would require $Ω(n^{3/2})$ steps in the worst case.  As we mentioned in the introduction, our argument can be extended to allow non-facial loop reductions and non-facial parallel reductions; the difficulty lies entirely with non-facial $\arc{Δ}{Y}$ transformations.

There are also several natural open questions involving combinatorial homotopy.  For example: How many homotopy moves are required in the worst case to transform an arbitrary $n$-vertex 4-regular plane graph into a collection of disjoint simple closed curves?  Equivalently, how many moves are required to transform any generic \emph{immersion} of  circles into an \emph{embedding}?  Our $Ω(n^{3/2})$ lower bound for homotopy moves trivially extends to this more general problem; however, an $O(n^{3/2})$-step electrical reduction algorithm would \emph{not} immediately imply the corresponding upper bound for homotopy moves of arbitrary immersions, because of a subtlety in the definition of a planar embedding of disconnected graphs.  In the homotopy problem, if the image of an immersion becomes disconnected, we must still keep track of how the various components are nested; on the other hand, in the electrical reduction problem, it is most natural to embed each component on its own sphere.  If we insist on keeping everything embedded on the same plane, there are disconnected 4-regular plane graphs that \emph{cannot} be reduced to disjoint circles by medial electrical moves; see Figure \ref{F:no-me-reduction}.

\begin{figure}[htb]
\centering
\includegraphics[scale=0.4]{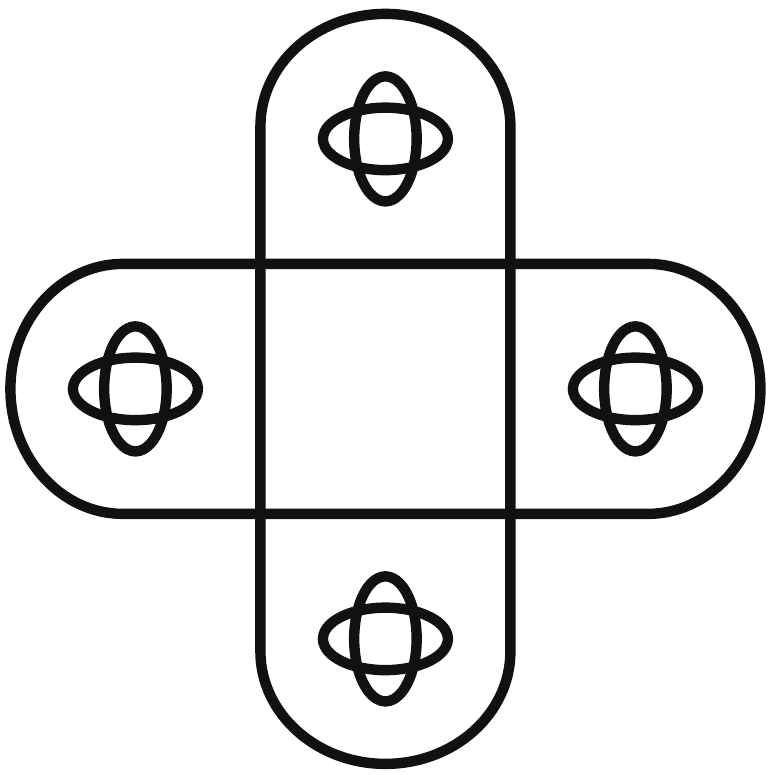}
\caption{A disconnected 4-regular plane graph that cannot be reduced to disjoint circles by medial electrical moves.}
\label{F:no-me-reduction}
\end{figure}

How many homotopy moves are required to connect two homotopic closed curves on a surface of higher genus?  In particular, how many homotopy moves are required to transform a \emph{contractible} closed curve into a simple closed curve?  Polyak's formula for defect extends immediately to curves on arbitrary orientable surfaces, even though other invariants like winding number do not.  There are curves on the torus with quadratic defect, but we have been unable to construct any such curve that is contractible.  For the more general problem of transforming one arbitrary curve into another, we are not even aware of a polynomial upper bound!

Finally, our upper and lower bounds for worst-case defect differ by roughly a factor of $20$.  In light of the complexity of our upper bound argument, we expect that our lower bounds are closer to the correct answer.  Are the flat torus knots $T(q+1, q)$ and $T(p, p+1)$ the curves with minimum and maximum defect for their number of vertices?

\paragraph{Acknowledgements.}
We would like to thank Bojan Mohar and Nathan Dunfield for encouragement and helpful discussions.

\bibliographystyle{newuser}
\bibliography{bib/optimization,bib/topology,bib/data-structures,bib/jeffe,bib/compgeom}

\def\burl#1{$\langle$\url{#1}$\rangle$}
\begin{thebibliography}{10}

\bibitem{a-tcnkl-13}
Colin Adams.
\newblock Triple crossing number of knots and links.
\newblock \emph{J. Knot Theory Ramif.} 22(2):1350006 (17 pages), 2013.
\newblock arXiv:\href{http://arxiv.org/abs/1207.7332}{1207.7332}.

\bibitem{acdll-kpwsm-15}
Colin Adams, Thomas Crawford, Benjamin DeMeo, Michael Landry, Alex~Tong Lin,
  MurphyKate Montee, Seojung Park, Saraswathi Venkatesh, and Farrah Yhee.
\newblock Knot projections with a single multi-crossing.
\newblock \emph{J. Knot Theory Ramif.} 24(3):1550011 (30 pages), 2015.
\newblock arXiv:\href{http://arxiv.org/abs/1208.5742}{1208.5742}.

\bibitem{a-tc-94}
Francesca Aicardi.
\newblock Tree-like curves.
\newblock \emph{Singularities and Bifurcations}, 1--31, 1994. Adv. Soviet
  Math.~21, Amer. Math. Soc.

\bibitem{a-wtns-60}
Sheldon~B. {Akers, Jr.}
\newblock The use of wye-delta transformations in network simplification.
\newblock \emph{Oper. Res.} 8(3):311--323, 1960.

\bibitem{a-tikl-28}
James~W. Alexander.
\newblock Topological invariants of knots and links.
\newblock \emph{Trans. Amer. Math. Soc.} 30(2):275--306, 1928.

\bibitem{ab-tkc-26}
James~W. Alexander and G.~B. Briggs.
\newblock On types of knotted curves.
\newblock \emph{Ann. Math.} 28(1/4):562--586, 1926--1927.

\bibitem{ao-gasi-99}
Hideyo Arakawa and Tetsuya Ozawa.
\newblock A generalization of {Arnold's} strangeness invariant.
\newblock \emph{J. Knot Theory Ramif.} 8(5):551--567, 1999.

\bibitem{acgp-frpwg-00}
Dan Archdeacon, Charles~J. Colbourn, Isidoro Gitler, and J.~Scott Provan.
\newblock Four-terminal reducibility and projective-planar
  wye-delta-wye-reducible graphs.
\newblock \emph{J. Graph Theory} 33(2):83--93, 2000.

\bibitem{a-pctip-94}
Vladimir~I. Arnold.
\newblock Plane curves, their invariants, perestroikas and classifications.
\newblock \emph{Singularities and Bifurcations}, 33--91, 1994. Adv. Soviet
  Math.~21, Amer. Math. Soc.

\bibitem{a-tipcc-94}
Vladimir~I. Arnold.
\newblock \emph{Topological Invariants of Plane Curves and Caustics}.
\newblock University Lecture Series~5. Amer. Math. Soc., 1994.

\bibitem{bl-kpvi-93}
Joan~S. Birman and Xiao-Song Lin.
\newblock Knot polynomials and {Vassiliev's} invariants.
\newblock \emph{Invent. Math.} 111:225--270, 1993.

\bibitem{bz-k-03}
Gerhard Burde and Neiher Zieschang.
\newblock \emph{Knots}, 2nd revised and extended edition.
\newblock de Gruyter Studies in Mathematics~5. Walter de Gruyter, 2003.

\bibitem{cs-cr-04}
Gabriel~D. Carroll and David Speyer.
\newblock The cube recurrence.
\newblock \emph{Elec. J. Combin.} 11:\#R73, 2004.

\bibitem{cfp-daptr-96}
Manoj~K. Chari, Thomas~A. Feo, and J.~Scott Provan.
\newblock The delta-wye approximation procedure for two-terminal reliability.
\newblock \emph{Oper. Res.} 44(5):745--757, 1996.

\bibitem{cd-efagci-97}
Sergei Chmutov and Sergei Duzhin.
\newblock Explicit formulas for {Arnold's} generic curve invariants.
\newblock \emph{Arnold-Gelfand Mathematical Seminars: Geometry and Singularity
  Theory}, 123--138, 1997. Birkhäuser.

\bibitem{cdm-ivki-12}
Sergei Chmutov, Sergei Duzhin, and Jacob Mostovoy.
\newblock \emph{Introduction to {Vassiliev} knot invariants}.
\newblock Cambridge Univ. Press, 2012.
\newblock \burl{http://www.pdmi.ras.ru/~duzhin/papers/cdbook}.
\newblock arXiv:\href{http://arxiv.org/abs/1103.5628}{1103.5628}.

\bibitem{cpv-nastc-95}
Charles~J. Colbourn, J.~Scott Provan, and Dirk Vertigan.
\newblock A new approach to solving three combinatorial enumeration problems on
  planar graphs.
\newblock \emph{Discrete Appl. Math.} 60:119--129, 1995.

\bibitem{c-rep-92}
Yves {Colin de Verdière}.
\newblock Reseaux électriques planaires.
\newblock \emph{Prepublications de l'Institut Fourier} 225:1--20, 1992.

\bibitem{c-rep1-94}
Yves {Colin de Verdière}.
\newblock Réseaux électriques planaires {I}.
\newblock \emph{Comment. Math. Helvetici} 69:351--374, 1994.

\bibitem{cgv-rep-96}
Yves {Colin de Verdière}, Isidoro Gitler, and Dirk Vertigan.
\newblock Réseaux électriques planaires {II}.
\newblock \emph{Comment. Math. Helvetici} 71(144--167), 1996.

\bibitem{cl-ubrm-14}
Alexander Coward and Marc Lackenby.
\newblock An upper bound on {Reidemeister} moves.
\newblock \emph{Amer. J. Math.} 136(4):1023--1066, 2014.
\newblock arXiv:\href{http://arxiv.org/abs/1104.1882}{1104.1882}.

\bibitem{cim-cpgrn-98}
Edward~B. Curtis, David Ingerman, and James~A. Morrow.
\newblock Circular planar graphs and resistor networks.
\newblock \emph{Linear Alg. Appl.} 283(1--3):115--150, 1998.

\bibitem{cmm-fccnb-94}
Edward~B. Curtis, Edith Mooers, and James Morrow.
\newblock Finding the conductors in circular networks from boundary
  measurements.
\newblock \emph{Math. Mod. Num. Anal.} 28(7):781--813, 1994.

\bibitem{cm-ipen-00}
Edward~B. Curtis and James~A. Morrow.
\newblock \emph{Inverse Problems for Electrical Networks}.
\newblock World Scientific, 2000.

\bibitem{dm-ftpdw-15}
Lino Demasi and Bojan Mohar.
\newblock Four terminal planar {Delta}-{Wye} reducibility via rooted
  {$K_{2,4}$} minors.
\newblock \emph{Proc. 26th Ann. ACM-SIAM Symp. Discrete Algorithms},
  1728--1742, 2015.

\bibitem{e-rpges-66}
G.~V. Epifanov.
\newblock Reduction of a plane graph to an edge by a star-triangle
  transformation.
\newblock \emph{Dokl. Akad. Nauk SSSR} 166:19--22, 1966.
\newblock In Russian. English translation in \textit{Soviet Math. Dokl.}
  7:13--17, 1966.

\bibitem{ehln-irkl-14}
Chaim Even-Zohar, Joel Hass, Nati Linial, and Tahl Nowik.
\newblock Invariants of random knots and links.
\newblock Preprint, November 2014.
\newblock arXiv:\href{http://arxiv.org/abs/1411.3308}{1411.3308}.

\bibitem{fp-dtert-93}
Thomas~A. Feo and J.~Scott Provan.
\newblock Delta-wye transformations and the efficient reduction of two-terminal
  planar graphs.
\newblock \emph{Oper. Res.} 41(3):572--582, 1993.

\bibitem{f-erpns-85}
Thomas~Aurelio Feo.
\newblock \emph{{I.} {A} {Lagrangian} Relaxation Method for Testing The
  Infeasibility of Certain {VLSI} Routing Problems. {II.} {Efficient} Reduction
  of Planar Networks For Solving Certain Combinatorial Problems}.
\newblock Ph.D. thesis, Univ. California Berkeley, 1985.
\newblock \burl{http://search.proquest.com/docview/303364161}.

\bibitem{f-frtcs-69}
George~K. Francis.
\newblock The folded ribbon theorem: {A} contribution to the study of immersed
  circles.
\newblock \emph{Trans. Amer. Math. Soc.} 141:271--303, 1969.

\bibitem{f-thnc-71}
George~K. Francis.
\newblock Titus' homotopies of normal curves.
\newblock \emph{Proc. Amer. Math. Soc.} 30:511--518, 1971.

\bibitem{f-ghi-71}
George~K. Francis.
\newblock Generic homotopies of immersions.
\newblock \emph{Indiana Univ Math. J.} 21(12):1101--1112, 1971/72.

\bibitem{g-n1gs-00}
Carl~Friedrich Gauß.
\newblock Nachlass. {I.} {Zur} {Geometria} situs.
\newblock \emph{Werke}, vol.~8, 271--281, 1900. Teubner.
\newblock Originally written between 1823 and 1840.

\bibitem{g-dtaa-91}
Isidoro Gitler.
\newblock \emph{Delta-wye-delta Transformations: Algorithms and Applications}.
\newblock Ph.D. thesis, Department of Combinatorics and Optimization,
  University of Waterloo, 1991.
\newblock Cited in \cite{gs-tdrpg-11}.

\bibitem{gs-tdrpg-11}
Isidoro Gitler and Feliú Sagols.
\newblock On terminal delta-wye reducibility of planar graphs.
\newblock \emph{Networks} 57(2):174--186, 2011.

\bibitem{g-cpssd-96}
Keith Gremban.
\newblock \emph{Combinatorial Preconditioners for Sparse, Symmetric, Diagonally
  Dominant Linear Systems}.
\newblock Ph.D. thesis, Carnegie Mellon University, 1996.
\newblock
  \burl{https://www.cs.cmu.edu/~glmiller/Publications/b2hd-GrembanPHD.html}.
\newblock Tech. Rep. CMU-CS-96-123.

\bibitem{g-cp-67}
Branko Gr{\"u}nbaum.
\newblock \emph{Convex Polytopes}.
\newblock Monographs in Pure and Applied Mathematics XVI. John Wiley \& Sons,
  1967.

\bibitem{gs-dtare-78}
Hariom Gupta and Jaydev Sharma.
\newblock A delta-star transformation approach for reliability estimation.
\newblock \emph{IEEE Trans. Reliability} R-27(3):212--214, 1978.

\bibitem{hy-nrmsi-14}
Tobias~J. Hagge and Jonathan~T. Yazinski.
\newblock On the necessity of {Reidemeister} move 2 for simplifying immersed
  planar curves.
\newblock \emph{Knots in Poland III. Part III}, 101--110, 2014. Banach Center
  Publ. 103, Inst. Math., Polish Acad. Sci.
\newblock arXiv:\href{http://arxiv.org/abs/0812.1241}{0812.1241}.

\bibitem{hn-udrqn-10}
Joel Hass and Tal Nowik.
\newblock Unknot diagrams requiring a quadratic number of {Reidemeister} moves
  to untangle.
\newblock \emph{Discrete Comput. Geom.} 44(1):91--95, 2010.

\bibitem{hh-msrmd-11}
Chuichiro Hayashi and Miwa Hayashi.
\newblock Minimal sequences of {Reidemeister} moves on diagrams of torus knots.
\newblock \emph{Proc. Amer. Math. Soc.} 139:2605--2614, 2011.
\newblock arXiv:\href{http://arxiv.org/abs/1003.1349}{1003.1349}.

\bibitem{hhn-unnrm-12}
Chuichiro Hayashi, Miwa Hayashi, and Tahl Nowik.
\newblock Unknotting number and number of {Reidemeister} moves needed for
  unlinking.
\newblock \emph{Topology Appl.} 159:1467--1474, 2012.
\newblock arXiv:\href{http://arxiv.org/abs/1012.4131}{1012.4131}.

\bibitem{hhsy-musrm-12}
Chuichiro Hayashi, Miwa Hayashi, Minori Sawada, and Sayaka Yamada.
\newblock Minimal unknotting sequences of {Reidemeister} moves containing
  unmatched {RII} moves.
\newblock \emph{J. Knot Theory Ramif.} 21(10):1250099 (13 pages), 2012.
\newblock arXiv:\href{http://arxiv.org/abs/1011.3963}{1011.3963}.

\bibitem{h-rns-67}
Arthur~M. Hobbs.
\newblock Letter to the editor: {Remarks} on network simplification.
\newblock \emph{Oper. Res.} 15(3):548--551, 1967.

\bibitem{it-whkp-13}
Noburo Ito and Yusuke Takimura.
\newblock (1,2) and weak (1,3) homotopies on knot projections.
\newblock \emph{J. Knot Theory Ramif.} 22(14):1350085 (14 pages), 2013.
\newblock Addendum in \emph{J. Knot Theory Ramif.} 23(8):1491001 (2 pages),
  2014.

\bibitem{itt-swhkp-15}
Noburo Ito, Yusuke Takimura, and Kouki Taniyama.
\newblock Strong and weak (1, 3) homotopies on knot projections.
\newblock \emph{Osaka J. Math} 52(3):617--647, 2015.

\bibitem{j-smtra-95}
François Jaeger.
\newblock On spin models, triply regular association schemes, and duality.
\newblock \emph{J. Alg. Comb.} 4:103--144, 1995.

\bibitem{k-k-87}
Louis~H. Kauffman.
\newblock \emph{On Knots}.
\newblock Princeton Univ. Press, 1987.

\bibitem{k-etscn-1899}
Arthur~Edwin Kennelly.
\newblock Equivalence of triangles and three-pointed stars in conducting
  networks.
\newblock \emph{Electrical World and Engineer} 34(12):413--414, 1899.

\bibitem{k-lpggs-11}
Richard~W. Kenyon.
\newblock The {Laplacian} on planar graphs and graphs on surfaces.
\newblock \emph{Current Developments in Mathematics}, 2011. Int. Press.
\newblock arXiv:\href{http://arxiv.org/abs/1203.1256}{1203.1256}.

\bibitem{k-dg-97}
Mikhail Khovanov.
\newblock Doodle groups.
\newblock \emph{Trans. Amer. Math. Soc.} 349(6):2297--2315, 1997.

\bibitem{l-pubrm-15}
Marc Lackenby.
\newblock A polynomial upper bound on {Reidemeister} moves.
\newblock \emph{Ann. Math.} 182:491--564, 2015.
\newblock arXiv:\href{http://arxiv.org/abs/1302.0180}{1302.0180}.

\bibitem{l-wtpn-63}
Alfred Lehman.
\newblock Wye-delta transformations in probabilistic network.
\newblock \emph{J. Soc. Indust. Appl. Math.} 11:773--805, 1963.

\bibitem{lw-igopc-96}
Xiao-Song Lin and Zhenghan Wang.
\newblock Integral geometry of plane curves and knot invariants.
\newblock \emph{J. Diff. Geom.} 44:74--95, 1996.

\bibitem{lt-stpg-79}
Richard~J. Lipton and Robert~E. Tarjan.
\newblock A separator theorem for planar graphs.
\newblock \emph{SIAM J. Applied Math.} 36(2):177--189, 1979.

\bibitem{l-nicsp-97}
Chenghui Luo.
\newblock \emph{Numerical Invariants and Classification of Smooth and Polygonal
  Plane Curves}.
\newblock {Ph.D.} thesis, Brown Univ., 1997.

\bibitem{m-ubiep-1865}
August~F. Möbius.
\newblock Über der {Bestimmung} des {Inhaltes} eines {Polyëders}.
\newblock \emph{Ber. Sächs. Akad. Wiss. Leipzig, Math.-Phys. Kl.} 17:31--68,
  1865.
\newblock \textit{Gesammelte Werke} 2:473--512, Liepzig, 1886.

\bibitem{nt-aafts-96}
Hiroyuki Nakahara and Hiromitsu Takahashi.
\newblock An algorithm for the solution of a linear system by {$\Delta$-Y}
  transformations.
\newblock \emph{IEICE TRANSACTIONS on Fundamentals of Electronics,
  Communications and Computer Sciences} E79-A(7):1079--1088, 1996.
\newblock Special Section on Multi-dimensional Mobile Information Network.

\bibitem{nw-kg-00}
Steven~D. Noble and Dominic J.~A. Welsh.
\newblock Knot graphs.
\newblock \emph{J. Graph Theory} 34(1):100--111, 2000.

\bibitem{n-cpsc-09}
Tahl Nowik.
\newblock Complexity of planar and spherical curves.
\newblock \emph{Duke J. Math.} 148(1):107--118, 2009.

\bibitem{p-icfgd-98}
Michael Polyak.
\newblock Invariants of curves and fronts via {Gauss} diagrams.
\newblock \emph{Topology} 37(5):989--1009, 1998.

\bibitem{p-nwfpc-99}
Michael Polyak.
\newblock New {Whitney}-type formulae for plane curves.
\newblock \emph{Differential and symplectic topology of knots and curves},
  103--111, 1999. Amer. Math. Soc. Translations 190, Amer. Math. Soc.

\bibitem{pv-gdfvi-94}
Michael Polyak and Oleg Viro.
\newblock Gauss diagram formulas for {Vassiliev} invariants.
\newblock \emph{Int. Math. Res. Notices} 11:445--453, 1994.

\bibitem{pv-cki-01}
Michael Polyak and Oleg Viro.
\newblock On the {Casson} knot invariant.
\newblock \emph{J. Knot Theory Ramif.} 10(5):711--738, 2001.
\newblock arXiv:\href{http://arxiv.org/abs/math/9903158}{math/9903158}.

\bibitem{r-ebk-27}
Kurt Reidemeister.
\newblock Elementare {Begründung} der {Knotentheorie}.
\newblock \emph{Abh. Math. Sem. Hamburg} 5:24--32, 1927.

\bibitem{rst-qepg-94}
Neil Robertson, Paul~D. Seymour, and Robin Thomas.
\newblock Quickly excluding a planar graph.
\newblock \emph{J. Comb. Theory Ser. B} 62(2):232--348, 1994.

\bibitem{r-md-1904}
Alexander Russell.
\newblock The method of duality.
\newblock \emph{A Treatise on the Theory of Alternating Currents}, chapter
  XVII, 380--399, 1904. Cambridge Univ. Press.

\bibitem{s-efspc-95}
Aleksandr Shumakovich.
\newblock Explicit formulas for the strangeness of plane curves.
\newblock \emph{Algebra i Analiz} 7(3):165--199, 1995.
\newblock Corrections in \emph{Algebra i Analiz} 7(5):252--254, 1995. In
  Russian; English translation in \cite{s-efspc-96}.

\bibitem{s-efspc-96}
Aleksandr Shumakovich.
\newblock Explicit formulas for the strangeness of a plane curve.
\newblock \emph{St. Petersburg Math.~J.} 7(3):445--472, 1996.
\newblock English translation of \cite{s-efspc-95}.

\bibitem{sa-refnu-86}
C.~Singh and S.~Asgarpoor.
\newblock Reliability evaluation of flow networks using delta-star
  transformations.
\newblock \emph{IEEE Trans. Reliability} R-35(4):472--477, 1986.

\bibitem{s-iifpd-01}
Xiaohuan Song.
\newblock \emph{Implementation issues for {Feo} and {Provan}'s delta-wye-delta
  reduction algorithm}.
\newblock {M.Sc.} {Thesis}, University of Victoria, 2001.

\bibitem{st-afkrm-02}
Ernesto Staffelli and Federico Thomas.
\newblock Analytic formulation of the kinestatis of robot manipulators with
  arbitrary topology.
\newblock \emph{Proc. 2002 IEEE Conf. Robotics and Automation}, 2848--2855,
  2002.

\bibitem{s-pr-1916}
Ernst Steinitz.
\newblock Polyeder und {Raumeinteilungen}.
\newblock \emph{Enzyklopädie der mathematischen Wissenschaften mit Einschluss
  ihrer Anwendungen} III.AB(12):1--139, 1916.
\newblock
  \burl{http://gdz.sub.uni-goettingen.de/dms/load/img/?PPN=PPN360609767&DMDID=dmdlog203}.

\bibitem{sr-vtp-34}
Ernst Steinitz and Hans Rademacher.
\newblock \emph{Vorlesungen über die Theorie der Polyeder: unter Einschluss
  der Elemente der Topologie}.
\newblock Grundlehren der mathematischen Wissenschaften~41. Springer-Verlag,
  1934.
\newblock Reprinted 1976.

\bibitem{s-mncpe-84}
James~A. Storer.
\newblock On minimal node-cost planar embeddings.
\newblock \emph{Networks} 14(2):181--212, 1984.

\bibitem{t-tncsa-60}
Charles~J. Titus.
\newblock A theory of normal curves and some applications.
\newblock \emph{Pacific J. Math.} 10:1083--1096, 1960.

\bibitem{t-ctafb-61}
Charles~J. Titus.
\newblock The combinatorial topology of analytic functions on the boundary of a
  disk.
\newblock \emph{Acta Mathematica} 106:45--64, 1961.

\bibitem{t-sdtnr-93}
Lorenzo Traldi.
\newblock On the star-delta transformation in network reliability.
\newblock \emph{Networks} 23(3):151--157, 1993.

\bibitem{t-drpg-89}
Klaus Truemper.
\newblock On the delta-wye reduction for planar graphs.
\newblock \emph{J. Graph Theory} 13(2):141--148, 1989.

\bibitem{t-dtm6a-92}
Klaus Truemper.
\newblock A decomposition theory for matroids. {VI.} {Almost} regular matroids.
\newblock \emph{J. Comb. Theory Ser. B} 55:253--301, 1992.

\bibitem{t-md-92}
Klaus Truemper.
\newblock \emph{Matroid Decomposition}.
\newblock Academic Press, 1992.

\bibitem{t-ndrpg-02}
Klaus Truemper.
\newblock A note on delta-wye-delta reductions of plane graphs.
\newblock \emph{Congr. Numer.} 158:213--220, 2002.

\bibitem{v-ucvc-81}
Leslie~S. Valiant.
\newblock Universality considerations in {VLSI} circuits.
\newblock \emph{IEEE Trans. Comput.} C-30(2):135--140, 1981.

\bibitem{v-kfdp-89}
Gert Vegter.
\newblock Kink-free deformation of polygons.
\newblock \emph{Proc. 5th Ann. Symp. Comput. Geom.}, 61--68, 1989.

\bibitem{v-gicsc-95}
Oleg Viro.
\newblock Generic immersions of circle to surfaces and complex topology of real
  algebraic curves.
\newblock \emph{Topology of Real Algebraic Varieties and Related Topics},
  231--252, 1995. Amer. Math. Soc. Translations 173, Amer. Math. Soc.

\bibitem{w-drag-15}
Donald Wagner.
\newblock Delta-wye reduction of almost-planar graphs.
\newblock \emph{Discrete Appl. Math.} 180:158--167, 2015.

\bibitem{y-fmwr-04}
Yaming Yu.
\newblock Forbidden minors for wyw-delta-wye reducibility.
\newblock \emph{J. Graph Theory} 47(4):317--321, 2004.

\bibitem{y-mfmw-06}
Yaming Yu.
\newblock More forbidden minors for wye-delta-wye reducibility.
\newblock \emph{Elec. J. Combin.} 13:\#R7, 2006.

\bibitem{zg-effn-07}
Ron Zohar and Dan Gieger.
\newblock Estimation of flows in flow networks.
\newblock \emph{Europ. J. Oper. Res.} 176:691--706, 2007.

\end{thebibliography}

\end{document}